\documentclass[11pt,a4paper]{article}  % ACM document class for TSAS Journal
\usepackage{amsfonts, amsmath,  balance, booktabs, caption, comment,  cmap,  environ, etoolbox, fancyhdr, float, fontaxes, geometry, graphics, hyperref, iftex, inconsolata, libertine, manyfoot, microtype, mmap, mweights,  nccfoots, refcount, setspace, textcase, totpages, trimspaces, upquote, url, xcolor, xkeyval, xstring}

%%
%% \BibTeX command to typeset BibTeX logo in the docs
\AtBeginDocument{%
  \providecommand\BibTeX{{%
    \normalfont B\kern-0.5em{\scshape i\kern-0.25em b}\kern-0.8em\TeX}}}

\usepackage[utf8]{inputenc}
\usepackage{graphicx}
\usepackage{amsmath,amsthm,amssymb}
\usepackage[english]{babel}
\usepackage{float}
\usepackage{subcaption}
\usepackage{fullpage}
 \usepackage{hyperref}
\usepackage{xcolor}
 \hypersetup{
 	colorlinks,
 	linkcolor={red!90!black},
 	citecolor={green!80!black},
 	urlcolor={blue!80!black}
 }

\newtheorem{theorem}{Theorem}

\usepackage{todonotes}
\usepackage{pgfplots}
\pgfplotsset{width=4.5cm,compat=1.9}
\usepackage{pgfplotstable}
\usepgfplotslibrary{groupplots}
\usepackage{algorithm}
\usepackage{algorithmic}
\usepackage{verbatim}
\usepackage{multirow}

% the todo notes package is commented out because there is a bug with it that causes the following problem: double-clicking on text in the pdf preview area does not jump to the source latex text.
% \usepackage[colorinlistoftodos,prependcaption,textsize=tiny]{todonotes}
% \newcommand{\john}[1]{\todo[linecolor=red,backgroundcolor=red!25,bordercolor=red]{John: #1}}

% \newcommand{\joachim}[1]{\todo[linecolor=red,backgroundcolor=green!25,bordercolor=red]{Joachim: #1}}
% \newcommand{\michael}[1]{\todo[linecolor=red,backgroundcolor=yellow!25,bordercolor=red]{Michael: #1}}

% define a set of macros

\newcommand{\Scal}[0]{\ensuremath{{\mathcal S}}}
\newcommand{\Sstarcal}[0]{\ensuremath{{\mathcal S_2}}}

\newcommand{\NN}[0]{\ensuremath{{\mathit N\!N}}}
\newcommand{\kNN}[0]{\ensuremath{{\mathit kN\!N}}}
\newcommand{\kay}[0]{\ensuremath{{\mathit k}}}

\newcommand{\Prm}[0]{\ensuremath{{\mathit P}}}

\newcommand{\Qrm}[0]{\ensuremath{{\mathit Q}}}

\newcommand{\FreDist}[0]{\ensuremath{{\delta_{F}(\!\Prm\!,\!\Qrm)}}}

\newcommand{\True}[0]{\ensuremath{{\mathit true}}}
\newcommand{\False}[0]{\ensuremath{{\mathit f\!alse}}}

\newcommand{\Elmt}[0]{\ensuremath{{\mathit \tau}}}
\newcommand{\Eadd}[0]{\ensuremath{{\mathit \varepsilon^+}}} %{a\!d\!d}}}}
\newcommand{\Erel}[0]{\ensuremath{{\mathit \varepsilon^*}}}  %{r\!e\!l}}}}

\renewcommand{\O}{\ensuremath{\mathcal{O}}}

\newcommand{\CandCurv}[0]{\ensuremath{{\Scal_1}}}

\newcommand{\Rd}[0]{\ensuremath{{\mathbb{R}^{d}}}}

\newcommand{\ereachq}[0]{\ensuremath{{\varepsilon \cdot reach(\Qrm)}}}
\newcommand{\reach}[0]{\ensuremath{{reach}}}

\newcommand{\RNN}[0]{\ensuremath{{\mathit RN\!N}}}
\newcommand{\BB}[0]{\textsc{BB}}
\newcommand{\LB}[0]{\textsc{LB}}
\newcommand{\st}[0]{\textsc{st}}

\newcommand{\tr}[0]{\textsc{tr}}
\newcommand{\fd}[0]{\textsc{fd}}
\newcommand{\UB}[0]{\textsc{UB}}
\newcommand{\adf}[0]{\textsc{adf}}
\newcommand{\adff}[0]{\textsc{adf1}}
\newcommand{\adfr}[0]{\textsc{adf2}}
\newcommand{\adfd}[0]{\textsc{adf3}}

\newcommand{\CCT}[0]{\textsc{CCT}}

\newcommand{\R}{\mathbb{R}}

  % symbols for this paper

% defining the \BibTeX command - from Oren Patashnik's original BibTeX documentation.
%\def\BibTeX{{\rm B\kern-.05em{\sc i\kern-.025em b}\kern-.08emT\kern-.1667em\lower.7ex\hbox{E}\kern-.125emX}}

% \pagenumbering
% \pagestyle{plain}
% \listfiles

\hyphenation{ge-ne-ric}

\title{A Practical Index Structure Supporting Fréchet Proximity Queries Among Trajectories}

\author{Joachim Gudmundsson\thanks{joachim.gudmundsson@sydney.edu.au} \and Michael Horton\thanks{michael.horton@sportlogiq.com} \and John Pfeifer\thanks{johnapfeifer@yahoo.com} \and Martin P. Seybold\thanks{martin.seybold@sydney.edu.au}}       
% \affiliation{ \institution{School of Computer Science, University of Sydney}   \state{NSW} \country{Australia} }
% \email{joachim.gudmundsson@sydney.edu.au} 

%\author[2]{Michael Horton}             
% \affiliation{   \institution{SPORTLOGiQ Inc.} \city{Montreal} \country{Canada}  }
% \email{michael.horton@sportlogiq.com}

%\author[3]{John Pfeifer}
% \affiliation{ \institution{School of Computer Science, University of Sydney}   \state{NSW} \country{Australia} }
% \email{johnapfeifer@yahoo.com}       

%\author[4]{Martin P. Seybold}
% \affiliation{ \institution{School of Computer Science, University of Sydney}   \state{NSW} \country{Australia} }
% %\email{martin.seybold@gmx.de}  
% \email{martin.seybold@sydney.edu.au}  

\date{}

\begin{document}
	\maketitle
	\begin{abstract}
		We present a scalable approach for range and $k$ nearest neighbor queries under computationally expensive metrics, like the continuous Fréchet distance on trajectory data.
		Based on clustering for metric indexes, we obtain a dynamic tree structure whose size is linear in the number of trajectories, regardless of the trajectory's individual sizes or the spatial dimension, which allows one to exploit low `intrinsic dimensionality' of data sets for effective search space pruning.
		
		Since the distance computation is expensive, generic metric indexing methods are rendered impractical.
		We present strategies that
		(i)   improve on known upper and lower bound computations,
		(ii)  build cluster trees without any or very few distance calls, and
		(iii) search using bounds for metric pruning, interval orderings for reduction, and randomized pivoting for reporting the final results.
		
		We analyze the efficiency and effectiveness of our methods with extensive experiments on diverse synthetic and real-world data sets.
		The results show improvement over state-of-the-art methods for exact queries, and even further speed-ups are achieved for queries that may return approximate results.
		Surprisingly, the majority of exact nearest-neighbor queries on real data sets are answered \emph{without any} distance computations.
	\end{abstract}

\newcommand{\mathsc}[1]{{\normalfont\textsc{#1}}}

\paragraph*{Keywords:} Fréchet Distance,
Dynamic Metric Index,
Clustering,
Cluster Tree,
Cover Tree,
Nearest Neighbor,
Range Search

%\clearpage

\section{Introduction} \label{sec:introduction}
The rapid growth of movement data diversity and acquisition over the past decade poses expanding scalability \emph{and} flexibility demands on information systems.
Tracking technologies such as video analysis, RFIDs, and GPS have enabled experts to collect trajectory data on objects as diverse as
flying animals~\cite{pigeon16,bats15,masked17,gulls15}, shipping vessels~\cite{vessel05}, basketballs~\cite{NBA16}, humans~\cite{geo2012},~vehicles~\cite{truckbus05,taxiA11,taxiB10}, hurricanes~\cite{hurdat217}, athletes~\cite{soccer15}, terrestrial animals~\cite{kruger09,cats16}, and tablet pen-tip writing~\cite{pentip06}. 
The size of trajectory data sets continues to increase as improved tracking technology records higher frequencies and larger numbers of objects.
Real-world data sets~\cite{taxiA11,taxiB10,masked17,NBA16,geo2012} consist of tens of thousands trajectories with thousand or more vertices per trajectory and keep growing.
Moreover, tracking complex objects whose position consists of several spatial coordinates (e.g. a Bison cow and its calf), challenges researchers to provide \emph{computational} solutions for trajectory data in high dimensions.

A research problem that has recently received considerable attention~\cite{ast17,bal17,ber13,ber17,dre17,gud15}, is the search for efficient data structures and algorithms that enable nearest-neighbor and range queries on large trajectory data sets.
Proximity searches are a core engine underlying visualization and classification applications that provide domain-specific researchers with better insight regarding their trajectory data.
Example applications are diverse, such as:
identifying potential changes in the migration paths of birds~\cite{masked17,gulls15},
locating similar European Football player ball possession trajectories when driving towards the opponent's net~\cite{soccer15}, 
determining if shipping vessels stay within range of a shipping path~\cite{vessel05}, and 
discovering how many people have a similar commute along a specified route~\cite{geo2012}.

A challenging task in trajectory data analysis is choosing an appropriate trajectory similarity measure. 
Common measures include the discrete or continuous Fr\'echet~\cite{alt95,bri14,bri16,buc217} and   Hausdorff~\cite{alt2009} distances, which fulfill the triangle inequality, and the non-metric Dynamic Time Warping (DTW)~\cite{keogh2005} and Longest Common Subsequence (LCSS)~\cite{vlachos2002} similarity measures.
We focus on the continuous Fr\'echet distance for high dimensional trajectory data for a variety of reasons.
First, it jointly captures the similarity in the position, shape, and direction between two trajectories.
The Hausdorff distance does not capture similarity of directions, which is a requirement for many real-world applications such as human body movement classification.
Second, it is less affected by irregularly sampled trajectories and thus suited for simplified trajectories. The latter is particularly useful in practice as real-world data sets are typically simplified in a pre-processing step using standard trajectory simplification algorithms~\cite{bl-fstts-17,dp-arnpr-73,lwj-tsmdb-14,zhang-18}.
Third, it is a metric (unlike DTW or LCSS) and hence it can take advantage of metric indexing~\cite{hetland-09} techniques.

Proximity search problems present difficulties in several regards, which renders asymptotic worst-case analysis often meaningless for concrete instances~\cite{moret2002}.
In such cases, empirical evidence is especially pertinent to compare solution strategies~\cite{hetland-09}.
For example, real-world trajectory data sets may not contain attributes that lead to worst-case runtimes, but instead behave more 'reasonably' and perform much better in practice.
Though we state asymptotic worst case bounds for our algorithms, the evaluation of our proposed solution strategies focuses heavily on a set of robust experiments using a large variety of data sets.

%\newpage
\subsection{Related Work} \label{ssec:related_work}
Search problems bound to find $k$ nearest neighbors ($\kNN$) and neighbors within a spherical range ($\RNN$) in vector spaces under a norm have a long and rich history. 
The well known $d$D-Tree~\cite{bentley75} (a.k.a KD-Tree) successively partitions the input point set $\Scal \subseteq \Rd$ with alternating axis-orthogonal hyperplanes to obtain a balanced binary tree in the confines of $\O(|\Scal|)$ space.
However, axis-orthogonal range search, using only linear space, requires $\Theta({|\Scal|}^{1-1/d})$ time in the worst-case.
The Range-Tree~\cite{bentley79-range-tree} improves this worst-case time with the expense of storage that is exponential in $d$. 
This frequent, underlying phenomenon is well known as the `curse of dimensionality' and Weber et al.~\cite{weber98} show that the \emph{naive scan} outperforms partitioning and clustering techniques for proximity search on average if $d$ exceeds $10$.
Theoretical and experimental works on general proximity search problems mainly assume that the distance of two elements can be determined in negligible time, e.g. in $\O(d)$ or $\O(1)$.
Exact proximity searches on trajectories in $\Rd$ under the continuous Fréchet distance $\delta_{F}$ however are a computationally harder problem than proximity search on mere points of $\Rd$ under Euclidean distances.

%
% Fréchet alt, Buchin, Bring
%
Alt and Godau~\cite{alt95} provide an $\O(n^2)$ time algorithm for deciding if the Fréchet distance is at most some given value.
Combining this algorithm with Cole's Parametric Search~\cite{Cole1987} gives an $\O(n^2 \log n)$ time algorithm that determines $\delta_{F}$.
The decision procedure $\delta_{F\!D}$
does not allow strongly sub-quadratic algorithms, unless a common complexity theory conjecture (SETH) fails \cite{bri14}.
Recently, Buchin et al.~\cite{buc217} gave a randomized algorithm that computes $\delta_{F}$ in $\mathcal{O}(n^2 (\log\log n)^2)$ time on a word RAM. 

Clearly, for exact $\RNN$ trajectory queries only $\delta_{F\!D}$ computations suffice, whereas exact $\kNN$ queries might well require exact $\delta_{F}$ computations.
%
% Sigspatial range
The 2017 SIGSPATIAL Cup~\cite{sigcup17} asked for practical data structures to answer $\RNN$ queries under $\delta_{F}$ on trajectories in $d=2$ dimensional space.
Top ranked competitors \cite{bal17,buch17,dut17} apply filter-\&-refine strategies that often use spatial hashing~\cite{buch17,dut17} or a quad tree~\cite{bal17} over the trajectory's start point, end point, and bounding box points to determine a potentially smaller list of candidates.
Recently Bringman et al. \cite{bri19} improved further upon their winning submission with an orthogonal-range search in a $(4d)$D-Tree (i.e. an $8$ dimensional {KD-Tree}) to obtain a candidate result list, which is then refined by heuristic distance computations and an even further tuned decision procedure, to achieve practically fast range queries on three real-world data sets in the plane ($d=2$).

There is also work on data structures for approximate proximity queries under $\delta_{F}$.
In~\cite{ber17} de Berg et al. present an approximate query structure for $\kNN$ and $\RNN$ queries.
The structure uses $\O(|\Scal|/\varepsilon^{2\eta})$ space, where $\varepsilon>0$ is a quality parameter and $\eta$ the \emph{fixed number} of vertices that every query trajectory $Q$ is restricted to have.
The query algorithm returns $S \subseteq \Scal$ with an additive error of at most $\ereachq$ in $\O(1+|S|)$ time, where $\reach(Q)$ denotes the maximum distance from the start vertex of $Q$ to any of its other vertices.
Though the structure is dynamic, the vertex number of a query trajectory $\eta$ must be fixed \emph{prior} to construction and space usage is \emph{exponential} with respect to it.
Driemel and Silvestri~\cite{dre17} provide asymptotic analysis on a set of data structures and query algorithms for approximate $\NN$ searches under the Discrete Fr\'echet distance, and even for the Dynamic Time Warping similarity measure.
They utilize an asymmetric version of Locality Sensitive Hashing which maps similar trajectories to the same hash table buckets.
However the space and queries bounds are exponential in $n$, i.e. the number of points per trajectory, already for constant factor approximations.

Recently, Xie et al.~\cite{xie-17} provided a data structure for performing distributed $\kNN$ queries on trajectories using either a `Discrete Segment Hausdorff Distance` or a `Discrete Segment Fr\'echet Distance'.
The data structure is constructed by uniformly randomly sampling a set of trajectory segments, which are then used to compute a set of spatial partition boundaries.
Within each spatial partition a variation of an R-Tree~\cite{guttman-84} data structure is constructed by computing the centroid of the bounding box of trajectory segments.
Their experiments for exact $10$-$\NN$ queries under the Discrete Segment Fr\'echet Distance on a synthetic trajectory data set ($|\Scal|=3$M) shows an average run-time of $4.5$ seconds, performing $6,000$ distance calls, on a cluster of $16$ compute nodes with $152$ parallel threads and $512$GB total RAM.

%
% Searching Metric Spaces 
%
There are numerous approaches that seek to extend simple binary serach trees to the proximity search problem for general sets $\Scal$ under a metric (see Table 9.1 in \cite{hetland-09} for a basic overview).
Classic \emph{metric tree indexes} partition the input along generalized metric balls or bisector planes, which offer structures using only $\O(|\Scal|)$ space.
Proximity searches attempt to prune sub-trees by means of the query element's distance to a sub-tree representative and the triangle inequality.
For example, the static and binary VP-Tree~\cite{Yianilos93-vp-tree} is balanced due to recursively choosing a ball radius, around the picked vantage point, which coincides with the median distance.
In contrast, the dynamic and binary BS-Tree~\cite{kalantari-83} recursively partitions elements into the closer of two ball pivots, resulting in a potentially unbalanced tree.
The well known M-Tree~\cite{ciaccia1997}, which is essentially a multi-way BS-Tree, offers strategies to tune I/O disk accesses. 
None of the above methods provide worst-case guarantees for proximity searches since ball overlap depends on on the underlying input set $\Scal$.
In fact,
all pairwise distances can have roughly the same value, which enforces a worst-case query performance of $\Theta(|\Scal|)$ for all such structures.

More recent
approaches build upon clustering ideas to obtain a small set of `compact' metric balls with little `overlap' that cover all elements.
More formally, for a resolution $\varepsilon$, an $\varepsilon$-net of a finite metric is a set of centers of distance at least $\varepsilon$ whose $\varepsilon$-balls cover all elements
-- e.g. Quadtree cell centers of a certain level.
Since packing and covering problems strongly depend on the dimension of Euclidean spaces, authors seek to capture the `intrinsic dimensionality' of metric spaces for algorithm analysis with measures thereof.
Gonzalez' farthest-first clustering~\cite{gonzalez85} provides $\varepsilon$-nets of size no bigger than an optimal $\tfrac\varepsilon2$-net, however straight-forward implementations perform $\O(|\Scal|^2)$ distance calls.
Navigating-Nets~\cite{DBLP:conf/soda/KrauthgamerL04} connect layers of nets, having shrinking resolutions, with additional links for a data-structure, in which the worst-case $\NN$ search time can be bounded in terms of the spread and doubling-constant of the finite metric.
However, the factor for $|\Scal|$ in the space bound depends on non-trivial terms over the doubling-constant.
The expansion constant $\gamma$ of \cite{kargerR02} is another data set parameter, which is weaker than the doubling constant (c.f. Section~\ref{ssec:DimensionMeasure}).
The Cover-Tree~\cite{BeygelzimerKL06} offers a simpler, yet dynamic, approach within the confines of $\O(|\Scal|)$ space, irrespective of $d$ and `intrinsic dimensionality' measures of the metric.
The authors maintain $\varepsilon$-net properties of tree levels during insert and delete operations, which provides hierarchical cluster trees of arity $\gamma^4$ and depth $\O(\gamma^2 \log |\Scal|)$ whose \emph{form} depend on the expansion-constant $\gamma$.
Moreover, their $\NN$ search tree traversal takes no more than $\O(\gamma^{12}\log|\Scal|)$ operations.
On the other hand, the experiments by Kibriya and Frank~\cite{KibriyaF07-KD-is-best}, on the performance of exact $\NN$ search over low dimensional real-world data under Euclidean distances, report a query performance ordering of KD-Trees over Cover-Trees over VP-Trees.
The naive scan sporadically outperforms each even on low dimensional real-world data and performances of either method converge on synthetic data with $d \geq 16$, as the curse suggests.

% \begin{figure}[H]\centering \vspace{-.3cm}  
\begin{figure}\centering \vspace{-.3cm}  
    \includegraphics[width=\columnwidth, height=.5\columnwidth]{./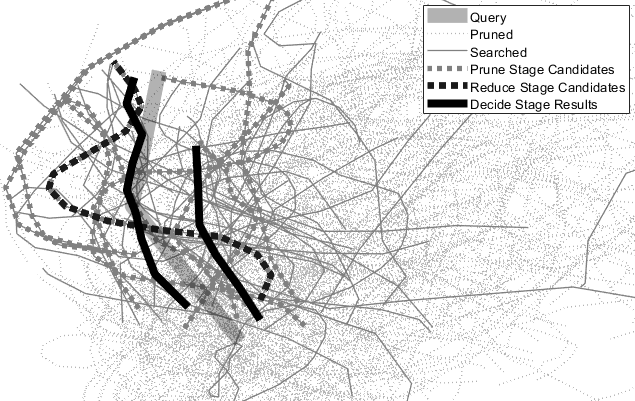} \\
	\includegraphics[width=\columnwidth, height=.2\columnwidth]{./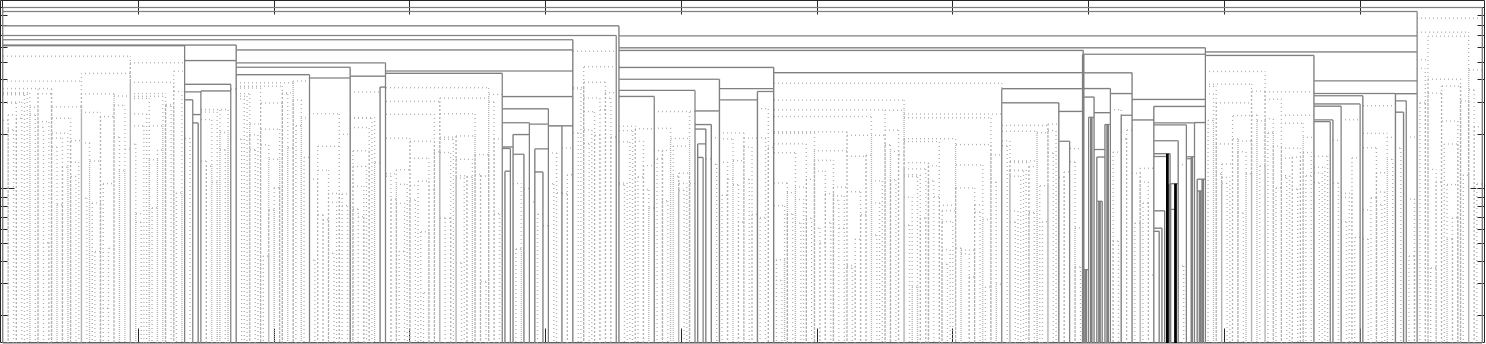}
\vspace{-.7cm}	
\caption{
An example of a $2$-$\NN$ query on $545$ bat trajectories~\cite{bats15}. The top plots $2$D trajectories: query, pruned, searched, prune stage candidates, reduce stage candidates, and decide stage results. The bottom shows the corresponding CCT dendrogram for nodes that were pruned (dotted line) or searched (solid line) (c.f. Section~\ref{sec:queries}).}
	\label{fig:teaser} 
 	\vspace{-.3cm}
\end{figure}

Many real-world trajectory data sets $\Scal$ consist of ten thousand or more elements and the number of vertices $n$ per trajectory is often in the thousands. 
Since the performance penalty for a single Fréchet proximity decision $\delta_{F\!D}$ or distance computation $\delta_{F}$ is \emph{huge} (e.g. $n^2 \approx |\Scal|$ or $n^2 \gg n\log|\Scal| $), our main objective is to minimize the absolute number of these expensive computations at query time.
This is in the same spirit as analysis in the I/O-model~\cite{av-iocsr-98} of computation, which measures the cost of answering a query as the number of expensive I/O operations performed by the query algorithm.
In our setting, the cost is primarily measured in the number of continuous Fr\'echet distance computation calls performed by the query algorithms.

%\newpage
\subsection{Contribution and Paper Outline} \label{ssec:contribution}
We present a scalable and extendable framework for approximate and exact $\kNN$ and $\RNN$ proximity queries under computationally expensive metric distance functions that is suitable for practical use in information systems -- e.g. proximity queries under the continuous Fréchet distance on high-dimensional trajectory data.
In contrast to known approaches, we describe how to effectively extend clustering based, generic metric indexes to dynamic data structures that answer proximity queries correctly but perform only a \emph{very small absolute number} of expensive distance calls.
We call this metric index structure Cluster Center Tree ($\CCT$).
Using contemporary desktop hardware, our publicly available, single threaded Matlab implementation allows to answer exact proximity queries over a $10$M trajectory data set with $1.04$ distance calls (latency below $1$ second) on average.

\begin{table}[h!]
	\centering   %\setlength{\tabcolsep}{0.5em} 
	\begin{tabular}{p{0.22\textwidth}|p{0.3\textwidth}|p{0.4\textwidth}}
		
		~ & Proposed CCT & Related Work\\
		\hline
		Data Structure Size & linear & \multirow{2}{*}{exponential~\cite{ber17,dre17,ind03}} \\ \cline{1-2}
		Construction Time & Variants with $\mathcal{O}(|\Scal|^2)$, but practically fewer, or zero distance calls. 
		\\
		\hline
		Query Types &
		Exact, approximate, and min-error queries for $\NN$, $\kNN$, and $\RNN$ under $\delta_{F}$. &
		Not for $\delta_{F}$~\cite{dre17,ind03,xie-17}, only approximate~\cite{ber17,dre17,ind03}, only $\RNN$~\cite{bal17,buch17,dut17}, or $\NN$~\cite{dre17,ind03} only.\\ \hline
		$\delta_{F}$ Calls & Very few in constructions and queries. & Order of magnitude more~\cite{ciaccia1997,BeygelzimerKL06}.\\
        \hline
		Empirical Evaluation & $16$ real and over $20$ synthetic data sets with up to ${|\Scal|=10\text{M}}$ and $d=32$. & No experiments~\cite{dre17,ind03} or few for ${d=2}$ only~\cite{buch17,dut17,ber17,ciaccia1997,gud15,xie-17}.
		\\ \hline
	\end{tabular}
		\caption{CCTs jointly satisfy many relevant practical aspects whereas related works (c.f. Section~\ref{ssec:related_work}) typically neglect at least one aspect.}
	\label{tab:comp_rel_work}
%	\vspace{-0.8cm}
\end{table}

% Section Bounds 		%%%%%%%%%%%%%%%%%%%%%%%%%%%%%%%%%%%%%%%%%%%%%%%%%%
Our approach is based on an extendable set of heuristic distance and decision algorithms, which is exchangeable for indexing other computationally expensive metric distance functions.
We improve on known heuristic bounds for $\delta_{F}$ and $\delta_{F\!D}$, which are also practical for high dimensional trajectory data (c.f. Section \ref{sec:Bounds}).

Known, generic clustering methods are transferable to $\CCT$s.
However, dynamic constructions with $\O(|\Scal|\gamma^6 \log|\Scal|)$ distance calls provide coarse cluster radii and static constructions with compactness guarantees use $\O(|\Scal|^2)$ distance calls.
The proposed dynamic and batch construction heuristics achieve $\CCT$s with compact clusters using only very few distance calls -- e.g. sub-linear on some instances.
Moreover, our approximate radii construction (not excluding exact proximity searches) still achieves compact clusters \emph{without any} distance calls (c.f. Section~\ref{sec:data_structures}).

% Section Queries		%%%%%%%%%%%%%%%%%%%%%%%%%%%%%%%%%%%%%%%%%%%%%%%%%%
We propose heuristic query algorithms that exploit low intrinsic dimensionality in the underlying metric for search space pruning -- i.e. excluding clusters of trajectories based on the triangle inequality.
To delay unavoidable $\delta_{F}$ and $\delta_{F\!D}$ calls to later stages, our methods leverage cluster compactness and bounds, exclude candidate trajectories based on orderings of the approximation intervals, and finally resolve remaining ambiguity with randomized pivoting for correct query results.
Inexpensive heuristic checks further save on some bound computations and our search algorithms naturally extend to queries that may contain approximate results
(c.f. Section~\ref{sec:queries}).

% Section Experiments	%%%%%%%%%%%%%%%%%%%%%%%%%%%%%%%%%%%%%%%%%%%%%%%%%%
Given the aforementioned hardness of exact proximity searches and Fréchet distance computations, we evaluate scalability across various data set characteristics, quality of our $\CCT$ constructions, overall query efficiency, and pruning effectiveness with extensive experiments.
Observed query performances follow the proposed overlap and compactness metrics for $\CCT$ quality.
Our experimental results show improvement over recent, state-of-the-art approaches for $\RNN$ (even for $d=2$) and improvement over the generic Cover-Tree, M-Tree and the linear scan (even for $d>16$).
Moreover, the majority of the exact $\NN$ queries on our real world-data sets are solved \emph{without any} distance calls and further speed-ups are achieved on approximate queries (c.f. Section~\ref{sec:experiments}).

Summarizing aforementioned in Table~\ref{tab:comp_rel_work}, CCTs jointly satisfy many relevant practical aspects whereas related works (c.f. Section~\ref{ssec:related_work}) typically neglect at least one aspect.

%%%%%%%%%%%%%%%%%%%%%%%%%%%%%%%%%%%%%%%%%%%%%%%%%%%%%%%
%           Preliminaries
%%%%%%%%%%%%%%%%%%%%%%%%%%%%%%%%%%%%%%%%%%%%%%%%%%%%%%%

%\clearpage

\section{Preliminaries} \label{sec:preliminaries}
A trajectory $\Prm$ of size $m$ is a polygonal curve through a sequence of $m$ vertices $\langle p_1, \ldots , p_m\rangle$ in $\mathbb{R}^d$, where each contiguous pair of vertices in $\Prm$ is connected by a straight-line segment.
Let $n$ denote the maximum size of all trajectories in $\Scal$.
We reserve the term \emph{length} of a trajectory for the sum of the Euclidean lengths of its segments. 

\paragraph{Fr\'echet distance} 
\hyphenation{re-pa-ra-me-tri-za-tion} \hyphenation{pa-ra-me-tri-za-tion}
The continuous Fréchet distance $\FreDist$ between two trajectories $\Prm$ and $\Qrm$ can be illustrated as the minimum `leash length' required between a girl, who walks monotonously along $\Prm$, and her dog, who walks monotonously along $\Qrm$.
To simplify notation, we associate with a trajectory $P$ its natural parametrization ${P : [0,1] \to \mathbb{R}^d}$, which maps positions relative to the trajectories length to the spatial points -- e.g. $P(0.5)$ is the half-way point.
A continuous, monotonous map $f : [0,1] \to [0,1]$ is called a reparameterization, if $f(0) = 0$ and $f(1) = 1$.
Let $\mathcal{F}$ be the family of all reparameterizations, then the continuous Fr\'echet distance is defined as
\begin{equation*}
\delta_F(\Prm,\Qrm) = \inf_{f,g \in \mathcal{F}} \max_{\alpha \in [0,1]}
\Big\lVert
\Prm \Big(f\left(\alpha\right) \Big) - \Qrm \Big(g(\alpha) \Big) 
\Big\rVert, 
\end{equation*}
where $\lVert \cdot \rVert$ is the Euclidean norm in $\mathbb{R}^d$.
We refer to the continuous Fr\'echet distance as $\delta_{F}$ or \emph{distance} throughout this work, when it is clear from the context.
As noted above, most algorithms that compute $\delta_{F}$ base on several calls to an $\O(dn^2)$ time dynamic program which test if $\delta_{F}$ is at most some given value $\varepsilon$.
We denote this computation with the predicate $\delta_{F\!D}(P,Q,\varepsilon)$.

\paragraph{Discrete Fr\'echet distance}
The closely related discrete Fréchet distance minimizes over discrete, monotonous mappings $f : \{1,\ldots,m\} \to \{1,\ldots,m\}$ for a trajectory $\Prm$ of size $m$.
It is an upper bound to $\delta_{F}$, since only alignments of vertex sequences are considered.
In fact, the additive 
error 
is no more than the length of a longest line-segment in either trajectory ($\Prm$ or $\Qrm$).
Eiter and Mannila~\cite{eit94} gave a quadratic time algorithm, and Agarwal et al.~\cite{aga14} presented a (weakly) sub-quadratic algorithm for computing the discrete Fr\'echet distance which runs in $\mathcal{O}(mn \frac{\log\log n}{\log n})$ time.

Though DTW differs from discrete Fréchet only in replacing maximum with the summed distances of matched points, the triangle inequality can well be violated on irregular sampled trajectories
\footnote{The reader may consider DTW among the three 1D trajectories $\langle0,2\rangle, \langle 0,1,2\rangle$ and $\langle 0,1-\varepsilon, 1+\varepsilon,2\rangle$ as example.}.

\subsection{Proximity Search Problems} \label{ssec:ProximitySearchDefinition}
Our data structure for $\Scal$ is designed to handle both an additive error $\Eadd \geq 0$ and a relative error~$\Erel\geq 0$.
Though the computer science community prefers the later for algorithm analysis, our interaction with domain experts often leads to additive error specifications.
We only state the proximity search problems for the additive error regime, since replacing $+\Eadd$ with $\cdot(1+\Erel)$ provides those for the multiplicative.

\paragraph{\bf The $k$-Nearest-Neighbor Problem:}
\begin{itemize}
	\item[In:] A query trajectory $\Qrm$, an integer $k\geq 1$ and a non-negative real $\Eadd \geq 0$.
	\item[Out:] A set $\Scal_{k\mathsc{n\!n}} \subseteq \Scal$ of $k$ trajectories, such that for all $\Prm \in \Scal_{k\mathsc{n\!n}}$ we have
	$$
	\delta_F(\Prm,\Qrm)\leq \tau_k+\Eadd , 
	$$
	where $\tau_k$ denotes the $\kay$th smallest value in the set ${\{ \delta_{F}(P,Q): P \in \Scal \}}$.
\end{itemize}	

\paragraph{\bf The Range-Search Problem:}
\begin{itemize}
	\item[In:] A query trajectory $\Qrm$ and reals $\tau \geq 0$ and $\Eadd \geq 0$.
	\item[Out:] A set $\Scal_{\mathsc{rnn}} \subseteq \Scal$ of trajectories, such that both
	\begin{align*}
	\Scal_{\mathsc{rnn}} &\supseteq \{ P \in \Scal ~:~ \delta_{F}(P,Q) \leq \tau \} \quad, \text{ and}\\
	\Scal_{\mathsc{rnn}} &\subseteq \{ P \in \Scal ~:~ \delta_{F}(P,Q) \leq \tau + \Eadd\}
	\end{align*}
	hold.
\end{itemize}	

{%\color{olive} 

	\subsection{Intrinsic Dimensionality Measures of Metric Spaces} \label{ssec:DimensionMeasure}

	Let $\Scal$ be a set and the mapping $\delta: \Scal \times \Scal \to \mathbb{R}^+$ a metric on $\Scal$.
	For $P \in \Scal$ we denote with $B(P,\varepsilon) = \{ Q \in \Scal : \delta(P,Q)\leq \varepsilon\}$ the metric ball of radius $\varepsilon$.

	\paragraph{Doubling Constant \cite{GuptaKL03}}
	Let $\mu \in \mathbb{N}$ be the smallest number such that for every real $\varepsilon>0$, every ball in $\Scal$ of radius $\varepsilon$ can be covered by at most $\mu$ balls of radius $\varepsilon/2$.
	More formally, for every $P\in \Scal$ and $\varepsilon>0$ there exist $Q_1,\ldots,Q_\mu \in \Scal$, such that
	\begin{equation*}
	B(P,\varepsilon ) \subseteq \bigcup_{i=1}^\mu B(Q_i,\varepsilon/2).
	\end{equation*}
	
	\paragraph{Expansion Constant \cite{kargerR02}}
	Let $\gamma \in \mathbb{N}$ be the smallest number such that $$ \Big\lvert B(P,\varepsilon) \Big\rvert \leq  \gamma \Big\lvert B(P, \varepsilon/2) \Big\rvert$$ for every real $\varepsilon>0$ and $P\in \Scal$.
	
	We have $\mu \leq 4 \gamma$ for finite sets $\Scal$ (see e.g. Proposition 1.2 in \cite{GuptaKL03}).

\subsection{González Clustering for Metric Spaces} \label{ssec:gonzalez}
Our batch construction algorithms (c.f. Section~\ref{ssec:cct_construction}) are based on the following farthest-first algorithm for hierarchical, divisive clustering \cite{gonzalez85}.
Given a metric $\delta$ on a set $\Scal$, the algorithm successively adds new cluster centers to a set $L$.

%\begin{algorithm}
	\begin{enumerate}
		\item[]{\bf González-Clustering ($\Scal,\delta$):}
		\item[] Arrays $\operatorname{dist}[~]=\infty$ and $\operatorname{parent}[~]=\emptyset$
		\item Pick $C \in \Scal$
		\item Set $L= \{C\},~\Scal=\Scal\setminus \{C\}$
		\item \label{algo:Gonzalez:loop}
		FOREACH $X \in \Scal$ with $\delta(X,C) < \operatorname{dist}[X]$\\
		\quad Set $\operatorname{dist}[X] = \delta(X,C)$ and $\operatorname{parent}[X]=C$
		\item Pick $C = \underset{X \in \Scal}{\operatorname{argmax}} \operatorname{dist}[X]$
		\item Set $L = L \cup \{C\}$ and $\Scal=\Scal\setminus \{C\}$
		\item If $\Scal \neq \emptyset$ GOTO \ref{algo:Gonzalez:loop} 
		%	\item[Out:] $T$
	\end{enumerate}
%	\caption{González Clustering}
%\end{algorithm}

This algorithm requires no more than $\O(|\Scal|^2)$ distance computations.
The following statements on the algorithm's result quality, in terms of minimum cluster number $N(\Scal,\varepsilon)$ of a $\varepsilon$-cover and minimum cluster size $R(\Scal, k)$ of a $k$-center clustering, are well known~\cite{gonzalez85}.
To simplify notation, we use for subsets ${\mathcal A} \subseteq \Scal$ the abbreviation $\delta(P,{\mathcal A})=\min_{Q\in {\mathcal A}} \delta(P,Q)$ in the following formal definition:
\begin{align*}
R(\Scal, k)           &= \min_{{\mathcal A} \in \binom{\Scal}{k}} \max_{P \in \Scal} \delta(P,{\mathcal A})\\
N(\Scal, \varepsilon) &= \min_{{\mathcal A} \subseteq \Scal} \Big\{ \big\lvert {\mathcal A} \big\rvert : \delta(P,{\mathcal A}) \leq \varepsilon \quad \forall P \in \Scal \Big\}
\end{align*}

\paragraph{Cluster Size and Cover Number}
Let $C_1,\ldots,C_n$ denote the sequence in which the elements were added to $L$ and let $L(\varepsilon) = \{ C \in L : \operatorname{dist}[C] > \varepsilon\}$.
We have
\begin{alignat*}{2} R(\Scal, k) &\leq    \quad&\operatorname{dist}[C_k]      \quad           &\leq 2R(\Scal,k)              \quad\quad~~~\forall k > 1 \\
N(\Scal, \varepsilon)      &\leq  \quad  &|L(\varepsilon)|   \quad       &\leq ~N(\Scal, \varepsilon/2)  \quad\quad\forall \varepsilon > 0.
\end{alignat*}	

The main observation to prove these statements is the following algorithm invariant:
At all times $\varepsilon > 0$, any two elements in $L(\varepsilon)$ have distance of more than $\varepsilon$.
Hence, no metric ball of radius $\varepsilon/2$ can cover more than one element of $L(\varepsilon)$, which shows the Cover Number bounds.
To show the Cluster Size for some $k$, one observes that any two elements in $\{ C_1, \ldots, C_{k+1}\}$ have distance of at least $\operatorname{dist}[C_{k+1}]=:r$. Hence an optimal clustering with $k$ centers has to contain at least one cluster of radius $r/2$ (see e.g. \cite{SDasupta_Lecture}).

On metrics with bounded doubling constant $\mu$, we additionally have $N(\Scal,\varepsilon/2) \leq \mu \cdot N(\Scal,\varepsilon)$ for every $\varepsilon>0$.
This is a key ingredient for the use of `intrinsic dimensionality' in the analysis of nearest neighbor searches with Navigating-Nets~\cite{DBLP:conf/soda/KrauthgamerL04}, since refining the resolution of an optimal $\varepsilon$-net by a constant does not increases the number of clusters by more than a constant.

}

%%%%%%%%%%%%%%%%%%%%%%%%%%%%%%%%%%%%%%%%%%%%%%%%%%%%%%%
%           Bounds
%%%%%%%%%%%%%%%%%%%%%%%%%%%%%%%%%%%%%%%%%%%%%%%%%%%%%%%

% \clearpage

\section{Fr\'echet distance bounds} \label{sec:Bounds}
This section describes several fast algorithms for computing upper and lower bounds on the continuous Fr\'echet distance between two trajectories.
These distance approximations are used to speed up the construction of the data structure (Section~\ref{ssec:cct_construction}) and the query algorithms (Section~\ref{sec:queries}).

Table~\ref{tab:bounds_tab} contains an overview of the bounds together with their time complexities.
There are three groups of bounds: (i) a lower bound group $\LB_{\textsc{f}}$ (maximum of its bounds), (ii) a lower bound decision procedure $\LB_{\fd}$, and (iii) an upper bound group $\UB_{\textsc{f}}$ (minimum of its bounds).
The bound groups are applied in the construction and query algorithms.

Given two trajectories $\Prm=\langle p_1, \ldots , p_n\rangle$ and $\Qrm=\langle q_1,\ldots, q_m\rangle$ in $\Rd$, the aim of the algorithms below is to quickly compute upper and lower bounds on $\delta_F(\Prm,\Qrm)$.

\begin{table} 
	\centering \setlength{\tabcolsep}{0.5em} 
	\begin{tabular}{l|l|l|c|c|c} 
		\hline
		Group & Bound & Novelty & Output & Time & $d$\\
		\hline
        \multirow{4}{*}{$\LB_{\textsc{f}}$} & $\LB_{\textsc{sev}}$ & Known & $\R$ & $\mathcal{O}(d)$ & all\\ 
		& \multirow{2}{*}{$\LB_{\textsc{bb}}$} & \multirow{2}{*}{Improved} & \multirow{2}{*}{$\R$} & $\mathcal{O}(d^2  2^{d-1})$ & $d \le 3$\\
		& & & & $\mathcal{O}(d)$ & $d > 3$\\ 
		& $\LB_{\st}$ & New & $\R$ & $\mathcal{O}(1)$ & all\\
		\hline
% 		\cline{2-5}
		$\LB_{\fd}$ & $\LB_{\tr}$ & New & $\True/\False$ & $\mathcal{O}(d(n+m))$ & all\\
		\hline
	    \multirow{3}{*}{$\UB_{\textsc{f}}$} & \multirow{2}{*}{$\UB_{\textsc{bb}}$} & \multirow{2}{*}{Improved} & \multirow{2}{*}{$\R$} & $\mathcal{O}(2^{2d})$ & $d\le2$\\
	    & & & & $\mathcal{O}(d)$ & $d > 2$\\
		& $\UB_{\adf}$ & Improved & $\R$ & $\mathcal{O}(d(n+m))$ & all\\
		\hline
	\end{tabular}
		\caption{Overview of bounds and their time complexity for varying dimensions $d$ (c.f. Section~\ref{sec:Bounds}).}
	\label{tab:bounds_tab}
	\vspace{-.5cm}
\end{table}

%-----------------------------------
%           Start and end vertices
%-----------------------------------

\subsection{Start and End Vertices (SEV)} \label{ssec:SEV}
{\bf Lower bound.} A trivial lower bound on the distance between $P$ and $Q$ is the maximum of the Euclidean distances between start vertices $p_1$ and $q_1$, and between end vertices $p_n$ and $q_m$~\cite{bal17,dut17,buch17}. That is,
$\LB_{\textsc{sev}}(P,Q) = \max \{\| p_1 - q_1 \|, \|p_n - q_m\| \}$,
and it can be computed in $\mathcal{O}(d)$ time.

%-----------------------------------
%           Axis-aligned bounding box
%-----------------------------------

\subsection{Axis-aligned Bounding Box (BB)} \label{ssec:BB}
Let $\BB(P)$ denote the minimum-size $d$-dimensional axis-aligned box that contains all the vertices of $P$. It can be computed in $\mathcal{O}(dn)$ time, and, similarly, $\BB(Q)$ can be computed in $\mathcal{O}(dm)$ time.

{\bf Lower bound.} For $d > 3$ we use a lower bound described by D\"{u}tsch and Vahrenhold~\cite{dut17} and Baldus and Bringmann~\cite{bal17}. It computes the maximum of the following as a lower bound: the difference between the maximum $x_i$-coordinates of $\BB(P)$ and $\BB(Q)$ for each $1\leq i \leq d$, and the difference between the minimum $x_i$-coordinates of $\BB(P)$ and $\BB(Q)$ for each $1\leq i \leq d$. The running time of their algorithm is $\mathcal{O}(d)$.

For $d\le3$ we use a different algorithm to that in~\cite{dut17,bal17} which can result in a stronger lower bound on $\delta_F(\Prm,\Qrm)$.
Let $f$ be an edge (1-face) of $\BB(P)$ and let $f'$ be the corresponding edge of $\BB(Q)$, then $\lambda(f,f')$ is the minimum Euclidean distance, which may or may not be the perpendicular distance (e.g. Figure~\ref{fig:boundsabc}a).
Compute the maximum $\lambda(f,f')$ for all corresponding edges of $\BB(P)$ and $\BB(Q)$, which is clearly a lower bound on the Fréchet distance.
The number of edges of a $d$-dimensional bounding box is $d  2^{d-1}$, hence the running time is $\mathcal{O}(d\cdot d 2^{d-1})$. The lower bound $BB$ algorithm for $d\le3$ is denoted $\LB_{\textsc{bb1}}(P,Q)$, and the algorithm in~\cite{dut17,bal17} for $d > 3$ is denoted $\LB_{\textsc{bb2}}(P,Q)$.

{\bf Upper bound.}
For $d\le2$ we use the algorithm by D\"{u}tsch and Vahrenhold~\cite{dut17}, which computes the maximum of all pairwise distances between the vertices of $\BB(P)$ and $\BB(Q)$.
Since the running time of the above algorithm is $\mathcal{O}(2^{2d})$ we use the following modification for $d > 2$.
Compute a bounding box that contains all points of $P$ and $Q$, denoted $\BB(P,Q)$.
An upper bound is the Euclidean distance between two vertices of $\BB(P,Q)$, with the first vertex composed of minimum coordinate values for each dimension $d$, and the second vertex composed of maximum coordinate values for each dimension $d$.
The running time of this algorithm is $\mathcal{O}(d)$, though the upper bound in~\cite{dut17} is slightly stronger.
The upper bound $BB$ algorithm in~\cite{bal17} for $d\le2$ is denoted $\UB_{\textsc{bb1}}(P,Q)$, and the algorithm for $d\geq 3$ is denoted $\UB_{\textsc{bb2}}(P,Q)$.

{\bf Rotation.}
We can further improve $\LB_{\textsc{bb1}}$ and $\UB_{\textsc{bb1}}$ for trajectories that do not have a directional spine (direction of maximum variance on the point set) that aligns closely with an axis direction.
Typical examples of such trajectories can be found in some of the real-world data sets~\cite{pigeon16,NBA16,soccer15} used in Section~\ref{sec:experiments}.
To obtain a stronger bound for these cases pre-process two other bounding boxes for each input trajectory $P$ by rotating $P$ $22.5^{\circ}$, and $45^{\circ}$ counter-clockwise around the origin.
At query time, compute the $0^{\circ}$, $22.5^{\circ}$, and $45^{\circ}$ rotation bounding boxes for a query trajectory $Q$ only once.
Then, choose the maximum or minimum result from each of the three rotations as the lower or upper bound, respectively. 
Rotated trajectories can result in a smaller $BB$ and a stronger bound (e.g. Figure~\ref{fig:boundsabc}b). The rotations of $22.5^{\circ}$, and $45^{\circ}$ are heuristic values.

\begin{figure}
	\centering
	\includegraphics[width=\columnwidth]{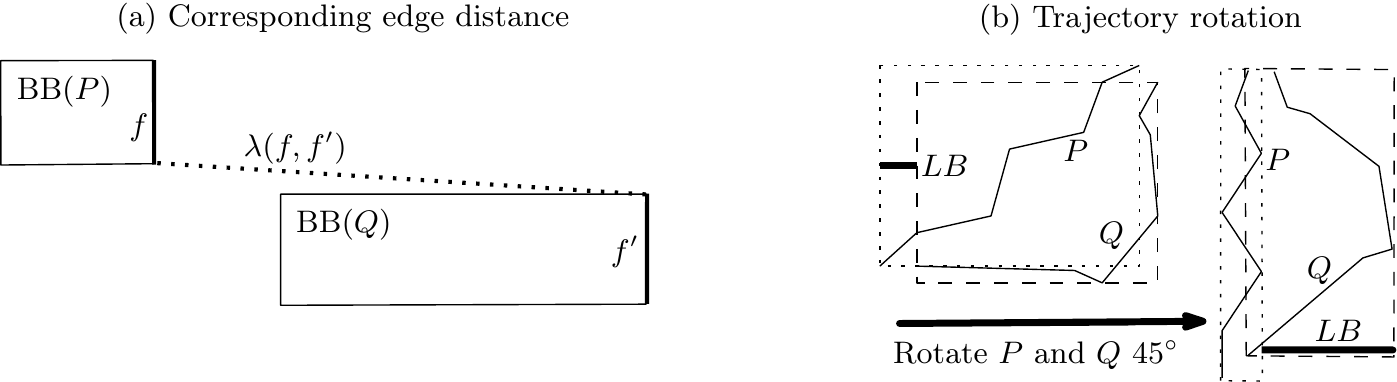} 
    \vspace{-.6cm}
	\caption{Bounding box lower bound: (a) corresponding edge distance, and (b) trajectory rotation resulting in a stronger lower bound (c.f. Section~\ref{ssec:BB}).}
	\label{fig:boundsabc}
	\vspace{-.4cm}
\end{figure}

%-----------------------------------
%           Simplified trajectory
%-----------------------------------

\subsection{Simplified Trajectory (ST)} \label{ssec:ST}
{\bf Lower bound.} 
Let $P'$ be the straight-line segment between $p_1$ and $p_n$ and let $Q'$ be the straight-line segment between $q_1$ and $q_m$.
We set $\LB_{\st}(P,Q) = | \delta_F(P,P') -\delta_F(Q,Q')|/2$, which we next show is a lower bound for $\delta_F(P,Q)$.

\begin{theorem} 
	$\LB_{\st}(P,Q) \leq \delta_F(P,Q)$.
\end{theorem}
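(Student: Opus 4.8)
The plan is to lean on the fact that $\delta_F$ is a genuine metric, so the triangle inequality is available, and to reduce the claim to the inequality $|\delta_F(P,P') - \delta_F(Q,Q')| \leq 2\,\delta_F(P,Q)$, after which dividing by two finishes the proof. By symmetry between the two trajectories I would assume without loss of generality that $\delta_F(P,P') \geq \delta_F(Q,Q')$, so that it suffices to establish $\delta_F(P,P') - \delta_F(Q,Q') \leq 2\,\delta_F(P,Q)$.

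Next I would expand $\delta_F(P,P')$ by routing it through the two objects $Q$ and $Q'$ via two applications of the triangle inequality, obtaining $\delta_F(P,P') \leq \delta_F(P,Q) + \delta_F(Q,Q') + \delta_F(Q',P')$. Rearranging gives $\delta_F(P,P') - \delta_F(Q,Q') \leq \delta_F(P,Q) + \delta_F(P',Q')$, which isolates the only new quantity, namely the Fréchet distance $\delta_F(P',Q')$ between the two straight-line segments used in the simplification. So the whole argument comes down to controlling this segment-to-segment distance by $\delta_F(P,Q)$.

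For that step I would first argue the clean identity $\delta_F(P',Q') = \max\{\|p_1-q_1\|,\|p_n-q_m\|\}$. The lower bound is forced: any reparametrization must match the start of $P'$ with the start of $Q'$ and the end with the end, contributing the two endpoint distances. For the matching upper bound I would use the linear matching of the two segments, under which the distance between corresponding points is $\|(1-t)(p_1-q_1) + t(p_n-q_m)\|$; this is a convex function of $t \in [0,1]$ (a norm of an affine function), hence attains its maximum at an endpoint, giving exactly $\max\{\|p_1-q_1\|,\|p_n-q_m\|\}$. Finally, both $\|p_1-q_1\|$ and $\|p_n-q_m\|$ are lower bounds on $\delta_F(P,Q)$ — this is precisely the start/end-vertex bound $\LB_{\sev}$ from Section~\ref{ssec:SEV} — so $\delta_F(P',Q') \leq \delta_F(P,Q)$. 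Substituting this back yields $\delta_F(P,P') - \delta_F(Q,Q') \leq 2\,\delta_F(P,Q)$, as required.

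I expect the only non-routine point to be the segment distance identity $\delta_F(P',Q') = \max\{\|p_1-q_1\|,\|p_n-q_m\|\}$, specifically the convexity observation that pins the maximum of the linear matching to an endpoint; everything else is a bookkeeping use of the triangle inequality and the already-established $\LB_{\sev}$ bound.
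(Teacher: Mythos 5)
Your proof is correct and follows essentially the same route as the paper: the identical triangle-inequality chain $\delta_F(P,P') \leq \delta_F(P,Q) + \delta_F(Q,Q') + \delta_F(Q',P')$ (applied symmetrically to both sides of the absolute value) combined with the key fact $\delta_F(P',Q') \leq \delta_F(P,Q)$. The only difference is that you explicitly justify that last inequality---via the convexity argument showing $\delta_F(P',Q') = \max\{\|p_1-q_1\|,\|p_n-q_m\|\}$ and the start/end-vertex lower bound---whereas the paper simply asserts it.
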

\begin{proof}
	From the triangle inequality, \[\delta_F(Q,Q')\leq \delta_F(Q,P) + \delta_F(\!P,P') + \delta_F(\!P',Q') \Longleftrightarrow\] \[\delta_F(Q,Q') - \delta_F(\!P,P') \leq \delta_F(\!P',Q') + \delta_F(\!P,Q) \le 2\delta_F(\!P,Q),\] \[since~~\delta_F(\!P',Q') \leq \delta_F(\!P,Q).\] 
	
	A similar argument can be used for $\delta_F(\!P,P')$, hence $|\delta_F(\!P,P') - \delta_F(Q,Q')|/2  \leq \delta_F(\!P,Q)$.
\end{proof}

To use this bound pre-compute $\delta_F(P,P')$ for each input trajectory $P\in \Scal$, in $\mathcal{O}(n \log n)$ time (since $P'$ is a single segment). At query time, once $\delta_F(Q,Q')$ is computed in $\mathcal{O}(m \log m)$ time, then every $\LB_{\st}(P,Q)$ check for the same query $Q$ is computed in constant time.

%-----------------------------------
%           Traversal Race
%-----------------------------------

\subsection{Traversal Race (TR)} \label{ssec:NF}
{\bf Lower bound.} Our decision procedure $\LB_{\tr}(P,Q,\alpha)$ for $\alpha \ge 0$, is similar to the negative filter algorithm by Baldus and Bringmann~\cite{bal17}.
The algorithm starts at the beginning of $\Prm$ and $\Qrm$ and iteratively traverses $\Prm$'s vertices and $\Qrm$'s edges towards their respective ends.
To simplify presentation, we add a first edge $\overline{q_1 q_1}$ and a last edge $\overline{q_m q_m}$ to $\Qrm$. 
If the minimum Euclidean distance between $\Prm$'s vertex and $\Qrm$'s edge is less than the given $\alpha$, then advance to $\Prm$'s next vertex, else advance to $\Qrm$'s next edge.
If the end of $\Qrm$ is reached first, then $\alpha < \delta_F(P,Q)$ and answer $\True$, otherwise we have not gained any information and answer $\False$.

This algorithm gives a stronger bound than the algorithm in~\cite{bal17}, especially when the edges of the trajectories are long.
Since the algorithm is not symmetric, we run it a second time with $P$ and $Q$ swapped which gives a total runtime of $\mathcal{O}(d(n+m))$.

%-----------------------------------
%           Approximate discrete Fr\'echet
%-----------------------------------

\subsection{Approximate Discrete Fr\'echet (ADF)} \label{ssec:ADF}
{\bf Upper bound.} The discrete Fr\'echet distance is known to be an upper bound on the continuous Fr\'echet distance~\cite{eit94}. A greedy algorithm in~\cite{bri16}, denoted $\UB_{\adff}(P,Q)$, approximates the discrete Fr\'echet distance between two trajectories $P$ and $Q$ in $\mathcal{O}(d(n+m))$ time. The approximation algorithm traverses the vertices of $P$ and $Q$ iteratively from start to end, starting at $i := 1$ and $j := 1$, and at each step picks a pair $(i', j') \in \{(i+1, j), (i, j+1), (i+1, j+1) \}$, minimizing the Euclidean distance between vertices $p_{i'}$ and $q_{j'}$.  It holds that $\delta_F(P,Q) \le\UB_{\adff}(P,Q)$~\cite{bri16}.

We include two more variations of the above algorithm. The first, $\UB_{\adfr}(P,Q)$, traverses the vertices of $P$ and $Q$ in \emph{reverse} from end to start, starting at $i := n$ and $j := m$, and at each step looks backwards to pairs $(i', j') \in \{(i-1, j), (i, j-1), (i-1, j-1)\}$, instead. 
The second, $\UB_{\adfd}(P,Q)$, traverses the vertices of $P$ and $Q$ from start to end, starting at $i := 1$ and $j := 1$, and at each step, if $n\ge m$ then increment $i$ and set $j := \lceil m/n \cdot i \rceil$, otherwise increment $j$ and set $i := \lceil n/m \cdot j \rceil$.

We also tried padding trajectories with a small number of new vertices along each edge of $P$ and $Q$ in an attempt to strengthen the bound. However, the experiments 
showed that this approach very rarely gave any improvements.

%%%%%%%%%%%%%%%%%%%%%%%%%%%%%%%%%%%%%%%%%%%%%%%%%%%%%%%
%           Data Structure
%%%%%%%%%%%%%%%%%%%%%%%%%%%%%%%%%%%%%%%%%%%%%%%%%%%%%%%
%\clearpage
\section{Indexing Expensive Metrics with Cluster Center Trees} \label{sec:data_structures}

A Cluster Center Tree ($\CCT$) for a set $\Scal$ of trajectories is a rooted tree whose nodes represent clusters, that are metric balls of a certain distance radius.
Each node $v$ of a $\CCT$ stores a distance value $rad(v)$, a reference to some trajectory $C(v)$ (its center), and a list of child nodes.
Every trajectory $P \in \Scal$ appears as the center of a leaf
in the tree.
An internal node~$v$ of a $\CCT$, needs to uphold two properties, which are
(\emph{Nesting})
	one of its children refers to the same center as $v$, and
(\emph{Bounding})
	every descendant $u$ of $v$ has $\delta_{F} \big( C(u), C(v) \big) \leq rad (v)$. 
Since the number of leafs is $|\Scal|$ and each internal node has at 
least
two children, $\CCT$s have a storage consumption within $\O(|\Scal|)$, regardless of trajectories' size $n$ and dimensionality $d$.

The following describes three $\CCT$ batch construction algorithms (Exact, Relaxed, Approximate) and two dynamic insert/update/delete algorithms (Exact, Approximate), as well as a third insert algorithm (Standard) that similar common dynamic tree indexes use (e.g. the M-Tree~\cite{ciaccia1997}).

%-----------------------------------
%   Batch CCT construction
%-----------------------------------

\subsection{Batch CCT Construction} \label{ssec:cct_construction}
Our 
batch construction
methods are inspired by González' hierarchical, divisive clustering for metric spaces to derive compact clusters (c.f. Section~\ref{ssec:gonzalez}).
Starting with one arbitrary element as the center, the algorithm successively picks an element, as an additional center, that is `farthest' from any of the previous centers,
and then reassigns elements to the additional center if it is closer.
A $k$-center clustering is produced in $k-1$ phases of distance computations and center reassigning. 
In each phase, the current cluster radii are within a factor of $2$ of an optimal $k$-center clustering that covers all elements (c.f. Section~\ref{ssec:gonzalez}).

Our construction heuristics foremost aim to avoid or reuse $\delta_{F}$ calls by applying upper and lower bound computations.

\begin{figure}
	\centering   %\setlength{\tabcolsep}{0.5em} 
	\begin{subfigure}[]{.49\textwidth}
	\begin{tabular}{r|l|c}
%		\hline
		Order & If condition is \True & Return\\
		\hline
		$1$. & $\UB_{\textsc{f}}(\Prm,C_2) \le \LB_{\textsc{f}}(P,C_1)$ & $C_2$\\
		$2$. & $\UB_{\textsc{f}}(\Prm,C_1) \le \LB_{\textsc{f}}(P,C_2)$ & $C_1$\\
		$3$. & $\LB_{\textsc{\fd}}\big(P,C_1,\UB_{\textsc{f}}(\Prm,C_2)\big)$ & $C_2$\\
        $4$. & $\LB_{\textsc{\fd}}\big(P,C_2,\UB_{\textsc{f}}(\Prm,C_1)\big)$ & $C_1$ \\ \hline
        $5$. & $\delta_F(\Prm,C_1) < rad(C_1)/2$ & $C_1$  \\ \hline 
        $6$. & $\delta_F(\Prm,C_1) < \LB_{\textsc{f}}(P,C_2)$ & $C_1$\\
        $7$. & $\delta_F(\Prm,C_1) > \UB_{\textsc{f}}(\Prm,C_2)$ & $C_2$\\
        $8$. & $\LB_{\textsc{\fd}}\big(P,C_2,\delta_F(\Prm,C_1)\big)$ & $C_1$\\
        $9$. & $\delta_{\fd}\big(\Prm,C_2,\delta_F(\Prm,C_1)\big)$ & $C_2$\\
        $10$. & otherwise & $C_1$\\
%		\hline
	\end{tabular}
	\end{subfigure}
~
\begin{subfigure}[]{.49\textwidth}
\includegraphics[width=\columnwidth]{./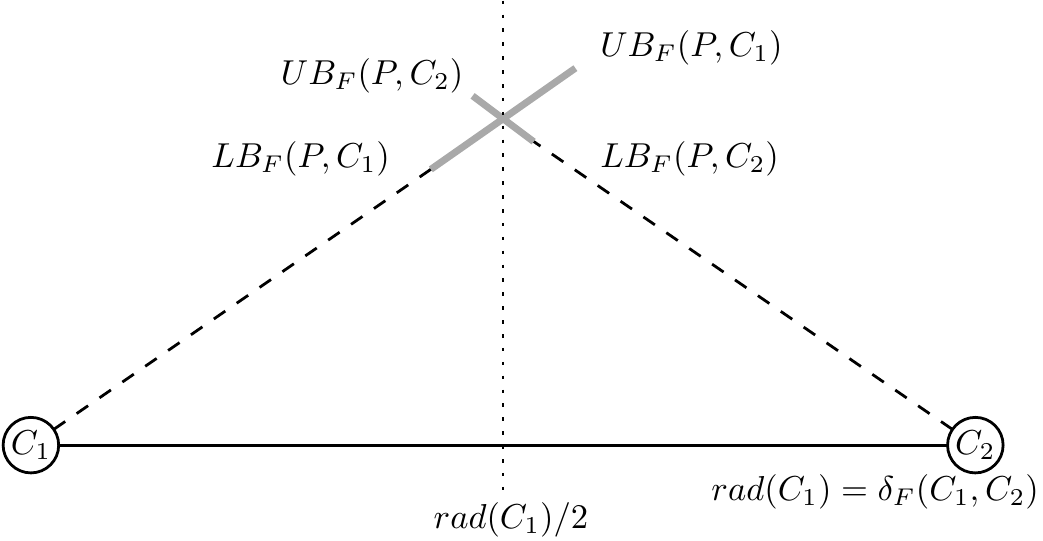}
\end{subfigure}
		\caption{Bisector Localization Predicate for determining if trajectory $\Prm$ is closer to center $C_1$ or $C_2$ (c.f. Section~\ref{ssec:cct_construction}).
		Subsequent checks are only performed if current results are inconclusive. Test $5$ is only performed for the Relaxed $\CCT$ since $C_2$ is a furthest trajectory in the cluster of $C_1$.}
	\label{tab:bisector}
%	\vspace{-0.8cm}
\end{figure}

%-----------------------------------
%   Exact CCT construction
%-----------------------------------

\subsubsection{Exact $\CCT$ Construction} \label{sssec:exact_cct}
To obtain a binary $\CCT$ from the González clustering algorithm in Section~\ref{ssec:gonzalez}, we consider it an continuous process within the monotonously decreasing
radius parameter $\varepsilon$.
In addition to the leafs $L(\varepsilon)$, we also track a set of tree nodes $T(\varepsilon)$.
Initially, $T$ contains only the root node which is associated to the sole trajectory $C_1$ in $L$ as its center.
Note that the array $\operatorname{parent}[\cdot]$ always points to a leaf for remaining elements in $\Scal$.

Now, whenever a new center $C_i$ is picked and added to the leaf nodes, we perform a split of its node in $T$.
That is, we replace the leaf's node $v$ that is currently associated to $\operatorname{parent}[C_i]$ in $T$ with a node that points to two children $v_1$ and $v_2$, which we associate with the leafs $\operatorname{parent}[C_i]$ and $C_i$.
To reduce the number of distance computations when determining if a given $\Prm$ is closer to $C_i$ or its current center, we use the sequence of bound computations in Figure~\ref{tab:bisector}.

After the tree is built, we compute the cluster radii of the $\CCT$ in a bottom-up fashion from each leaf.
To save $\delta_{F}$ distance calls, we use upper and lower bounds arrays instead of the $\operatorname{dist}[\cdot]$ array and sharpen approximations with $\delta_{F}$ calls only if selecting a furthest element is indecisive.
We use the following \emph{Fix-Ancestor-Radius} logic to save on $\delta_{F}$ calls.
First check the current radius against $\UB_{\textsc{f}}$, then check against it with $\LB_{\textsc{\fd}}$ and then $\delta_{\fd}$.
Only if these checks are indecisive, compute $\delta_{F}$ to update the radius of the node's parent.

The worst-case number of distance calls is $\mathcal{O}(|\Scal|^2)$, since
(i) on every iteration all bounds may fail to be conclusive and distances are computed for all trajectories $\Prm \in \Scal$, 
(ii) the $\CCT$ may degrade to a linear chain on metrics with large spread and asymmetric clusters (e.g. all trajectories are single, $1$D points with coordinates of the form $2^i \in \R$), and
(iii) Fix-Ancestor-Radius logic may perform up to quadratic $\delta_{F}$ calls.
However, our experimental data (Figure~\ref{fig:CCT_const_compare}) shows that this method performs far fewer $\delta_{F}$ calls on real data sets.

\subsubsection{Relaxed $\CCT$ Construction} \label{sssec:relaxed_cct}
This recursive construction algorithm successively performs only one phase of the González algorithm that results in a partition of the trajectories via the metric bisector of the two clusters' centers. This essentially omits the trajectory reassigning in González' clustering.

Pick an arbitrary trajectory $\Prm\in\Scal$ as center of the root node $v$,
that is $C(v):= \Prm$, and let $\Scal(v)=\Scal$ denote the trajectories contained in the cluster of $v$.
The recursive split then determines a trajectory $F(v)$ which is furthest from $C(v)$, which also determines $rad(v)$.
To do this, we first compute the highest lower bound $\alpha$ to the distances of $C(v)$ and elements of $\Scal(v)$.
Then we compute $\delta_{F}$ only for those trajectories whose upper bound distance (to $C(v)$) exceed $\alpha$.

The cluster is then partitioned into (potentially) smaller clusters $v_1$ and $v_2$, which are the children of $v$.
For their centers, we set $C(v_1) := C(v)$, $C(v_2) := F(v)$ and assign each trajectory $\Prm \in S(v)$ to the sub-cluster of the closer center.
To reduce the number of distance computations when determining if $\Prm$ is closer to $C(v_1)$ or $C(v_2)$, we use the test sequence in Figure~\ref{tab:bisector}.

The worst-case number of distance calls is again $\mathcal{O}(|\Scal|^2)$, since the algorithm may need to compute $\mathcal{O}(|\Scal|)$ distances at each level of the tree.
However, experimental results in Figure~\ref{fig:CCT_const_compare} shows that this method typically allows one to build $\CCT$s with $\O(|\Scal|)$ distance calls.

%-----------------------------------
%   Aproximate CCT construction
%-----------------------------------

\subsubsection{Approximate $\CCT$ Construction} \label{sssec:approx_cct}
Since distance calls are very expensive, we also describe a construction algorithm that performs no calls at all to $\delta_{F}$ and $\delta_{F\!D}$, that originates from adapting the Relaxed construction.
For this, we only use upper bound computations $\UB_{\textsc{f}}$ to determine the furthest trajectory $F(v)$ and we assign $\Prm$ to the center, i.e. $C(v_1)$ or $C(v_2)$, that realizes a smaller upper bound value.
Compared to the Relaxed method, the approximate method does not perform expensive distance calls but the cluster radii are potentially larger.

%-----------------------------------
%   Dynamic Insert/Update/Delete
%-----------------------------------
 
\subsection{Dynamic $\CCT$ Constructions} \label{ssec:dynamic_cct}
Given the few properties $\CCT$s need to uphold, there are several heuristic strategies to handle dynamic situations.

\subsubsection{Exact Dynamic Inserts} \label{sssec:exact_dynamic_cct}
Exact inserts 
\emph{may} perform distance computations, since cluster radii values are computed exactly.

A new trajectory $\Prm$ is inserted by first locating the leaf $v_1$ that is an \emph{exact} nearest neighbor of $\Prm$ (c.f. Section~\ref{sssec:nn_algorithm}).
Then we create two new leaf nodes $u_1$ (contains trajectory of $v_1$) and $u_2$ (contains $\Prm$), and point $v_1$ to the new nodes.
Then fix the radius of $v_1$ and its ancestors using the already discussed `Fix-Ancestor-Radius' bottom-up process.

The worst-case number of distance calls is $\mathcal{O}(|\Scal|)$, since `Fix-Ancestor-Radius' may need to compute the distance for every tree node.
Hence, constructing a $\CCT$ entirely with dynamic inserts requires $\mathcal{O}(|\Scal|^2)$ distance computations.
However, our experiments show that the number of distance calls is much smaller for our data sets (c.f. Figure~\ref{fig:CCT_const_compare}).

\subsubsection{Approximate Dynamic Inserts} \label{sssec:approx_dynamic_cct}
Approximate inserts 
perform \emph{no} distance computations, and cluster radii are computed based on the largest upper bound value.

A new trajectory $\Prm$ is inserted by first locating the leaf $v_1$ that is an \emph{implicit approximate} nearest neighbor of $\Prm$ (c.f. Section~\ref{ssec:imp_approx_queries}).
Then we create two new leaf nodes $u_1$ (contains trajectory of $v_1$) and $u_2$ (contains $\Prm$), and point $v_1$ to the new nodes.
Then fix the radius of $v_1$ and its ancestors by only checking the current radius against $\UB_{\textsc{f}}$.

\subsubsection{Standard Dynamic Insert} \label{sssec:standard_dynamic_cct}
A classic insertion method for metric tree indexes~\cite{uhlmann1991met,uhlmann1991sat,kalantari-83,hetland-09} is to start at the root and descend to the child node whose center is closest to new trajectory $\Prm$, until a leaf $v$ is reached.
We adapt this algorithm for our setting by descending to the child node with the closest $\LB_{\textsc{f}}$ to locate leaf $v$, and then proceed with the same logic as the approximate insert above.

%-----------------------------------
%   CCT Analysis
%-----------------------------------

\subsection{CCT Quality Analysis} \label{ssec:cct_analysis}

\begin{figure}
%%%%%%%%%%%%%%%%%%%
% CCT compare - batch construction and inserts
%%%%%%%%%%%%%%%%%%%
\begin{tikzpicture}

\begin{groupplot}[
     group style = {group size = 1 by 3,
                    horizontal sep=0.15cm,
                    vertical sep=0.15cm,},
     width=11.0cm,
     height=3.5cm,
     grid style=dashed,
     xmode = normal,
     xtick=data,
     xtick style={draw=none},
     xticklabels={,,},
     xmin=0.5, xmax=6.5,
     ymajorgrids=true,
    ]

%%%%%%%%%%%%%%%%%%%
% Depth Factor
%%%%%%%%%%%%%%%%%%%
\nextgroupplot[
xticklabels={,,},
ybar,
ymin=0.8, ymax=12,
ytick={1,10,100},
yticklabels={$10^0$,$10^1$},
ylabel={Depth},
ylabel shift = 0.09cm,
bar width = 0.15cm,
legend pos=outer north east,
ymode = log,
log ticks with fixed point,
log origin = infty,
]
%\addplot[pattern=dots,pattern color=black,draw=black,fill=black!20!white,]
\addplot[color=black,fill=black!0,]
coordinates {(1,1.5)(2,1.5)(3,1.6)(4,4.5)(5,1.9)(6,0.7)};

\addplot[color=black,fill=black!10,]
coordinates {(1,1.6)(2,1.7)(3,1.5)(4,7.7)(5,1.8)(6,8.7)};

\addplot[color=black,fill=black!25,]
coordinates {(1,1.6)(2,1.7)(3,1.6)(4,8.2)(5,1.9)(6,10.4)};

\addplot[color=black,fill=black!35,]
coordinates {(1,1.4)(2,1.5)(3,1.4)(4,1.4)(5,1.4)(6,1.6)};

\addplot[color=black,fill=black!50,]
coordinates {(1,1.5)(2,1.5)(3,1.5)(4,1.5)(5,1.5)(6,1.6)};

\addplot[color=black,fill=black!90,]
coordinates {(1,1.3)(2,1.3)(3,1.3)(4,1.3)(5,1.2)(6,1.4)};

\legend{Exact $\CCT$}
\addlegendentry{Relaxed $\CCT$}
\addlegendentry{Approx. $\CCT$}
\addlegendentry{Exact Inserts}
\addlegendentry{Approx. Inserts}
\addlegendentry{Standard Inserts}

%%%%%%%%%%%%%%%%%%%
% Compactness
%%%%%%%%%%%%%%%%%%%
\nextgroupplot[
xticklabels={,,},
ybar,
ymin=0.65, ymax=0.9,
ytick={0.7,0.75,0.8,0.85},
yticklabels={$0.70$,$0.75$,$0.80$,$0.85$},
ylabel={Compactness},
% ylabel shift = 0.05cm,
bar width = 0.15cm,
]
\addplot[color=black,fill=black!0,]
coordinates {(1,0.803)(2,0.772)(3,0.752)(4,0.704)(5,0.767)(6,0.6)};

\addplot[color=black,fill=black!10,]
coordinates {(1,0.736)(2,0.797)(3,0.792)(4,0.704)(5,0.795)(6,0.817)};

\addplot[color=black,fill=black!25,]
coordinates {(1,0.762)(2,0.845)(3,0.811)(4,0.718)(5,0.824)(6,0.829)};

\addplot[color=black,fill=black!35,]
coordinates {(1,0.740)(2,0.793)(3,0.766)(4,0.703)(5,0.767)(6,0.763)};

\addplot[color=black,fill=black!50,]
coordinates {(1,0.755)(2,0.808)(3,0.778)(4,0.707)(5,0.778)(6,0.759)};

\addplot[color=black,fill=black!90,]
coordinates {(1,0.791)(2,0.844)(3,0.840)(4,0.731)(5,0.846)(6,0.835)};

%%%%%%%%%%%%%%%%%%%
% Overlap
%%%%%%%%%%%%%%%%%%%
\nextgroupplot[
xticklabels={Hurdat2, Pen, Football, Geolife, Basketball, Taxi},
ybar,
ymin=0.35, ymax=1.1,
ytick={0.4,0.6,0.8,1.0},
yticklabels={$0.4$,$0.6$,$0.8$,$1.0$},
ylabel={Overlap},
ylabel shift = 0.15cm,
bar width = 0.15cm,
]
\addplot[color=black,fill=black!0,]
coordinates {(1,0.696)(2,0.903)(3,0.938)(4,0.751)(5,0.888)(6,0.3)};

\addplot[color=black,fill=black!10,]
coordinates {(1,0.469)(2,0.866)(3,0.821)(4,0.383)(5,0.783)(6,0.761)};

\addplot[color=black,fill=black!25,]
coordinates {(1,0.494)(2,0.961)(3,0.855)(4,0.470)(5,0.852)(6,0.818)};

\addplot[color=black,fill=black!35,]
coordinates {(1,0.811)(2,0.913)(3,0.953)(4,0.957)(5,0.948)(6,0.993)};

\addplot[color=black,fill=black!50,]
coordinates {(1,0.827)(2,0.970)(3,0.964)(4,0.964)(5,0.961)(6,0.995)};

\addplot[color=black,fill=black!90,]
coordinates {(1,0.858)(2,0.991)(3,0.969)(4,0.983)(5,0.981)(6,0.997)};

\end{groupplot}
\end{tikzpicture}
\caption{$\CCT$ Quality for batch construction and insertion algorithms on the six largest real data sets (c.f. Section~\ref{ssec:cct_analysis}).
	{\normalfont
The average leaf depth (top) is normalized to an optimal depth ($\lceil \log_2 |\Scal | \rceil$).
Compactness (middle) is the average ratio of child-parent radii, and overlap (bottom) is the average ratio of each trajectory's leaf depth and number of other node clusters that cover it. For the Taxi~\cite{taxiA11,taxiB10} data set the Exact $\CCT$ batch construction did not finish within $3$ days and is omitted.
}
}
\label{fig:CCT_quality}

\end{figure}
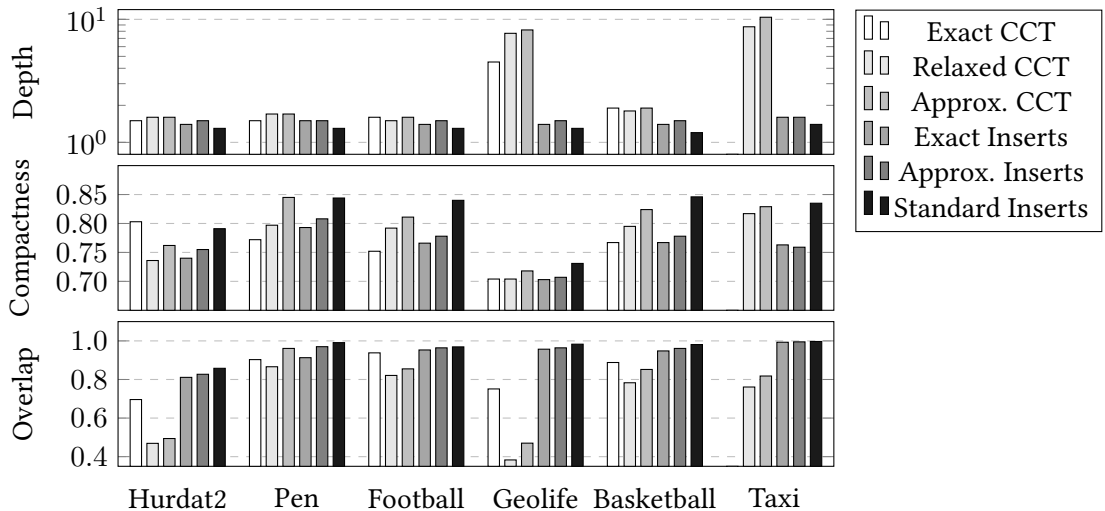

\begin{figure*}\centering
\begin{center}
	\includegraphics[width=\textwidth]{./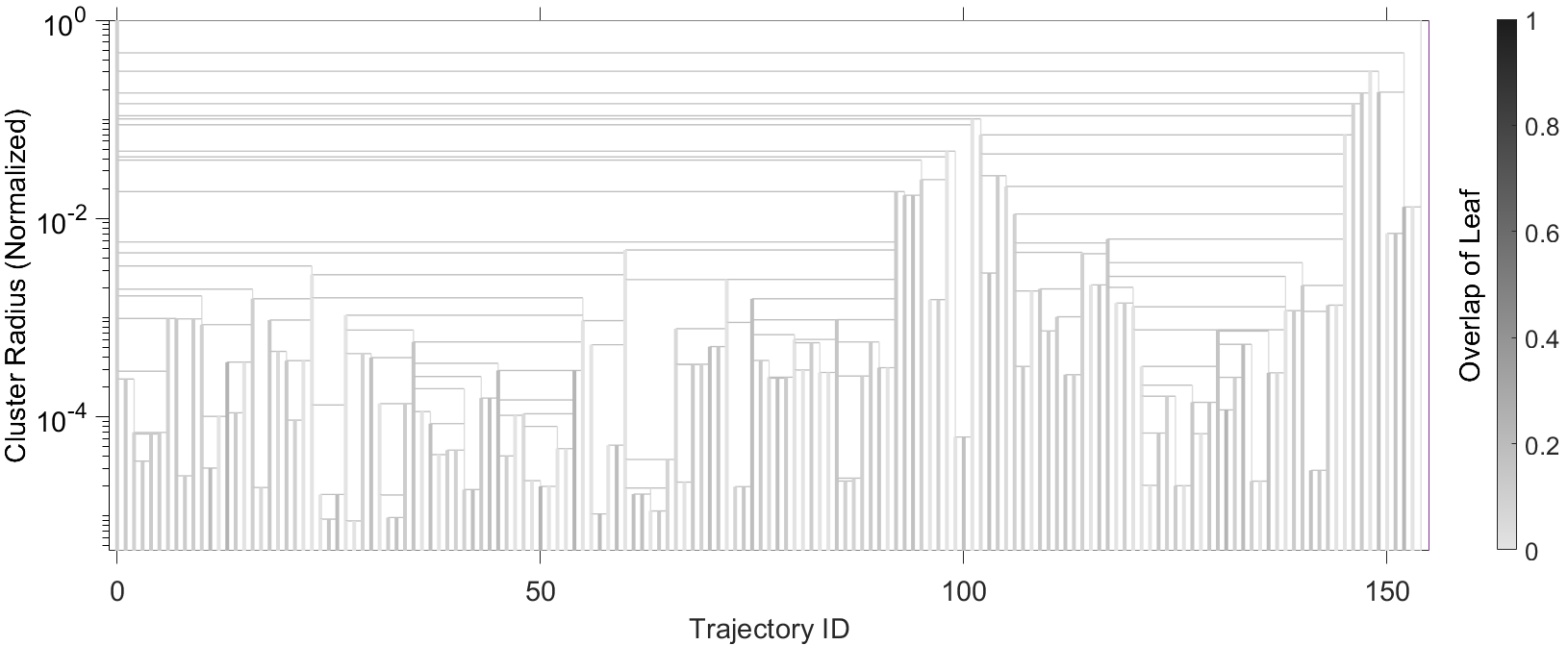} \\
	\vspace{0.5cm}
	\includegraphics[width=\textwidth]{./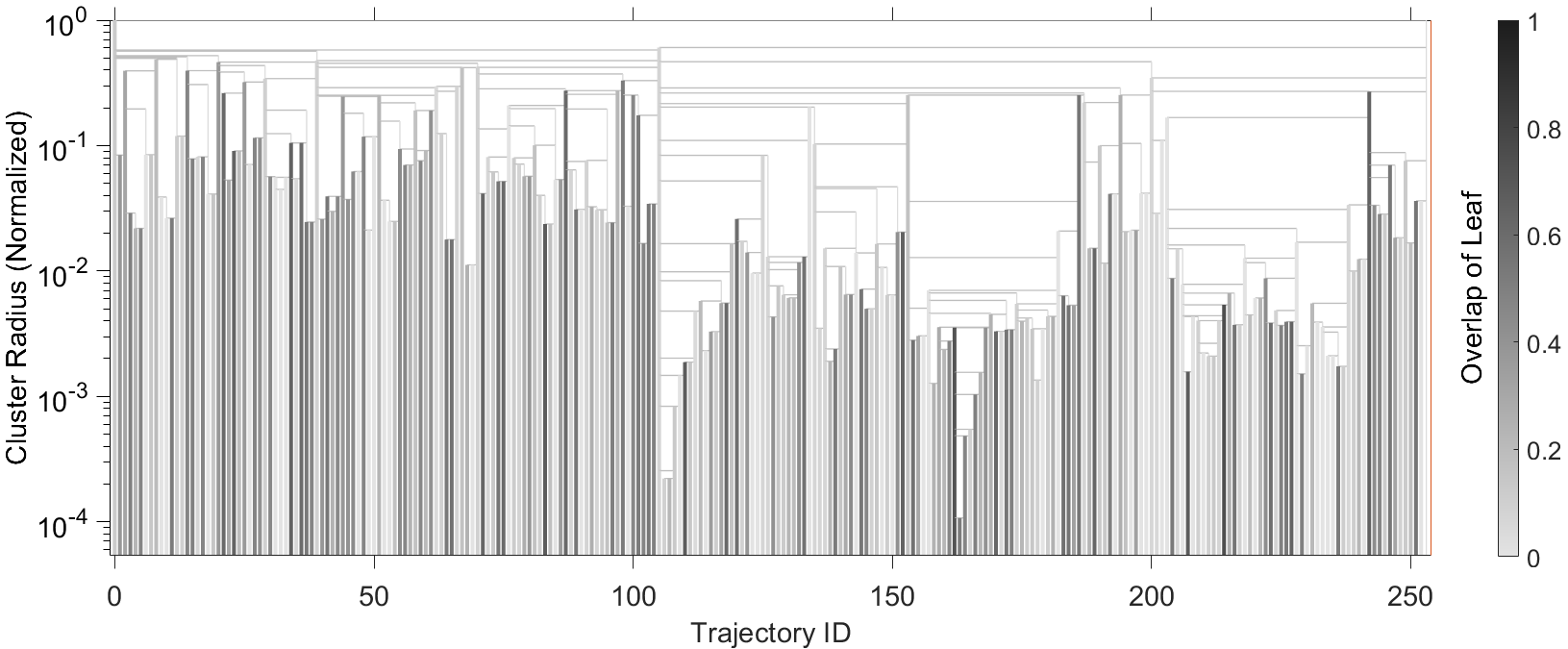}
\end{center}
	\caption{Relaxed $\CCT$ Dendrograms for Cats (top) and Gulls (bottom) real data sets (c.f. Section~\ref{ssec:cct_analysis}).
	{\normalfont
    The x-axis shows the number of input trajectories, y-axis the normalized cluster radii (compactness), and horizontal lines the parent nodes (tree depth). The vertical lines represent the leaf trajectories with lighter and darker shades corresponding to smaller and larger overlap measures, respectively.
    Tree balance is observed by the relative position of the vertical cut line beneath a parent that separates its two children.}
	}
	\label{fig:dendro_samples}
\end{figure*}

To gain insight of the $\CCT$ quality achieved by the various batch and insert algorithms, refer to Figure~\ref{fig:CCT_quality} (see Section~\ref{sec:experiments} for the complete experimental setup).

The average leaf depth is more balanced for the insertion algorithms compared to the batch construction algorithms.
However, it is noteworthy that tree depth is inversely proportional to the performance of the construction and query algorithms (see Figure~\ref{fig:CCT_const_compare} in Section~\ref{sssec:primary_results}).
E.g. unbalanced $\CCT$s do not necessarily incur poor query performance.
This may seem counter-intuitive at first, but surveys have mentioned that this can occur~\cite{hetland-09}, and the next two $\CCT$ quality measures help to explain why.

The compactness measure tends to be largest for the standard insert and smallest for the exact batch construction, which correlates with the experiment performance mentioned above.
So, a smaller compactness results in better performance. Moreover, when isolating just the insert algorithms, the exact method tends to have smaller compactness compared to approximate methods, which also correlates with the experimental results where exact inserts outperform approximate insert methods. 
But the exact and relaxed batch construction compactness measures do not correlate with the experiment performance results.
So we used ``overlap'' to explain the $\CCT$ quality in this case.

To measure overlap, we count all nodes that overlap (cover) a given leaf trajectory. 
We refine this measure by comparing the depth of each leaf with the total number of cover-nodes and averaging over all leafs, but the key point is that it is simply measuring how much of the tree covers each leaf.
Smaller overlap measures result in better query performance, and vice versa. 
Intuitively this method of measuring overlap makes sense, since data sets with higher intrinsic dimensionality contain trajectories that are harder to 'separate' from each other, which can result in higher overlap in a tree.
If a leaf is covered by many nodes, then constructing and searching is harder since there are more potential nodes to traverse.
The batch construction algorithms tend to have smaller overlap than inserts, and exact algorithms have smaller overlap than their approximate counterparts (since approximate algorithms can result in larger radii).

One interesting and initially unexpected result in the experiments was that the Relaxed $\CCT$ outperformed the Exact $\CCT$.
The Relaxed $\CCT$ is constructed with fewer distance calls and essentially omits the trajectory reassigning component, compared to the Exact method, so we anticipated a trade-off at query time for the Relaxed method.
However, the opposite occurred.
The reason for this behavior is due to the overlap difference.
The trajectory reassigning component of the Exact batch construction can lead to a larger overlap since trajectories can be reassigned multiple times during the iterations which can lead to more parent nodes that cover them. 

Various data sets can also exhibit different quality measures depending on their intrinsic dimensionality. Figure~\ref{fig:dendro_samples} compares two real data set Relaxed $\CCT$ dendrograms. The Cats~\cite{cats16} data set has smaller intrinsic dimensionality compared to the Gulls~\cite{gulls15} data set, and the dendrograms show this relationship with Cats having smaller compactness and overlap measures. Experiments (e.g. Figure~\ref{fig:kNN_real}) verify that the Cats Relaxed $\CCT$ outperforms the Gulls Relaxed $\CCT$.

An attempt was made to measure the quality of the underlying data sets using the intrinsic dimensionality measure of~\cite{chavez2001}. 
Calculations showed that this measure was useful for data sets with normal distributions of pairwise distances, however, most real data sets in our study do not have this property and the measure did not accurately convey the underlying intrinsic dimensionality.
In our setting, the overlap measure was a better indicator for the ease or difficulty of searching the data set.

\subsection{Differences to Related Approaches} \label{ssec:cct_other_approaches}

Multi-way metric indexes such as Cover-Trees~\cite{BeygelzimerKL06} also provide the Nesting property, besides additional compactness and separation properties (Cover Trees use $1/1.3\approx0.78$ for compactness and separation in practice to balance arity and depth).
Internal nodes of Cover-Trees have an assigned integer level and the distance between the center of a node with level $i$ and the center of any of its descendants is no more than $2^i$ (c.f. Theorem~$2$ in \cite{BeygelzimerKL06}).
Using these coarse values as radii, we have that every Cover-Tree is a $\CCT$. 
Their dynamic insertion and deletion of a single element 
performs no more than $\O(\gamma^6 \log |\Scal|)$ operations, which are mainly distance computations, where $2 \leq \gamma \leq |\Scal|$ denotes the expansion constant of the data set (c.f. Sections~\ref{ssec:related_work} and \ref{ssec:DimensionMeasure}).
For large trajectory data sets however, $\O(|\Scal|\gamma^6 \log |\Scal|)$ Fréchet distance computations might well be impractical, even for moderate $\gamma$ values.

Though one may modify $\CCT$s such that leafs store `chunks' (fixed size subsets of trajectories) like practical implementations do (e.g. M-Trees~\cite{ciaccia1997}),
this seems detrimental for the computationally expensive Fréchet distance in our setting.

It is important to note that
the bound algorithms in Section~\ref{sec:Bounds} are independent of the $\CCT$ structure.
This allows the flexibility to extend the query algorithms (c.f. Section~\ref{sec:queries}) with further, e.g. data domain specific, heuristic bounds \emph{without} the need to rebuild the data structure.
This is in strong contrast 
to pruning approaches that use $d$D-Trees~\cite{bentley75}, Range-Trees~\cite{bentley79-range-tree}, and grid-based hash structures, as in~\cite{buch17,dut17,ber17}, for e.g. trajectories' start and end points in $\Rd$. 

%%%%%%%%%%%%%%%%%%%%%%%%%%%%%%%%%%%%%%%%%%%%%%%%%%%%%%%
%           Queries
%%%%%%%%%%%%%%%%%%%%%%%%%%%%%%%%%%%%%%%%%%%%%%%%%%%%%%%
% \clearpage	
\section{Proximity Queries} \label{sec:queries}
Our query algorithms for $\CCT$s consists of three stages:

\begin{enumerate}
	\item {\bf Prune:} Collect candidate trajectories into a set $\CandCurv$ by performing a guided depth-first-traversal of the $\CCT$, in which sub-tree's clusters may be excluded in a pre-order fashion using the triangle inequality, the cluster radius, and bound computations.
	\item {\bf Reduce:} Filter trajectories in $\CandCurv$ using heuristic proximity predicates and orderings of the approximate distance intervals to obtain a smaller set $\Sstarcal$.
	\item {\bf Decide:} Finalize the result set by removing ambiguity in $\Sstarcal$ that exceeds the specified query error, by potentially performing $\delta_F$ and/or $\delta_{F\!D}$ calls.
\end{enumerate}
	
\begin{figure*}\centering
	\includegraphics[width=.45\textwidth,height=3.1cm]{./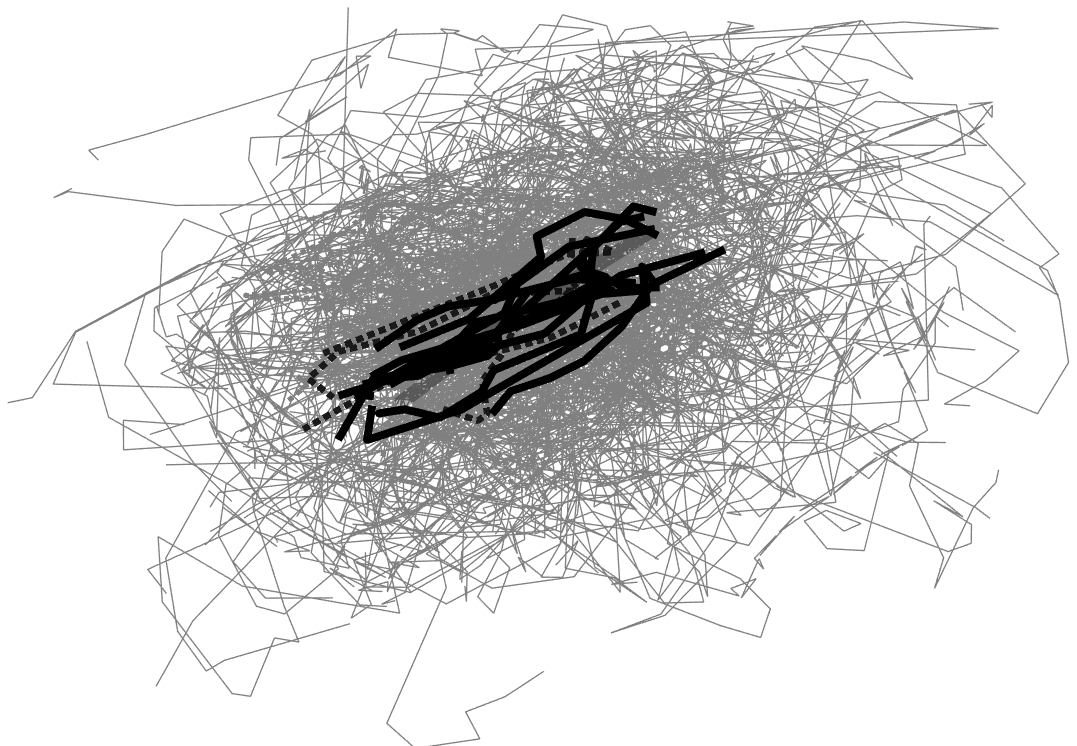} 
	\hspace{0.5cm}
	\includegraphics[width=.45\textwidth,height=3.1cm]{./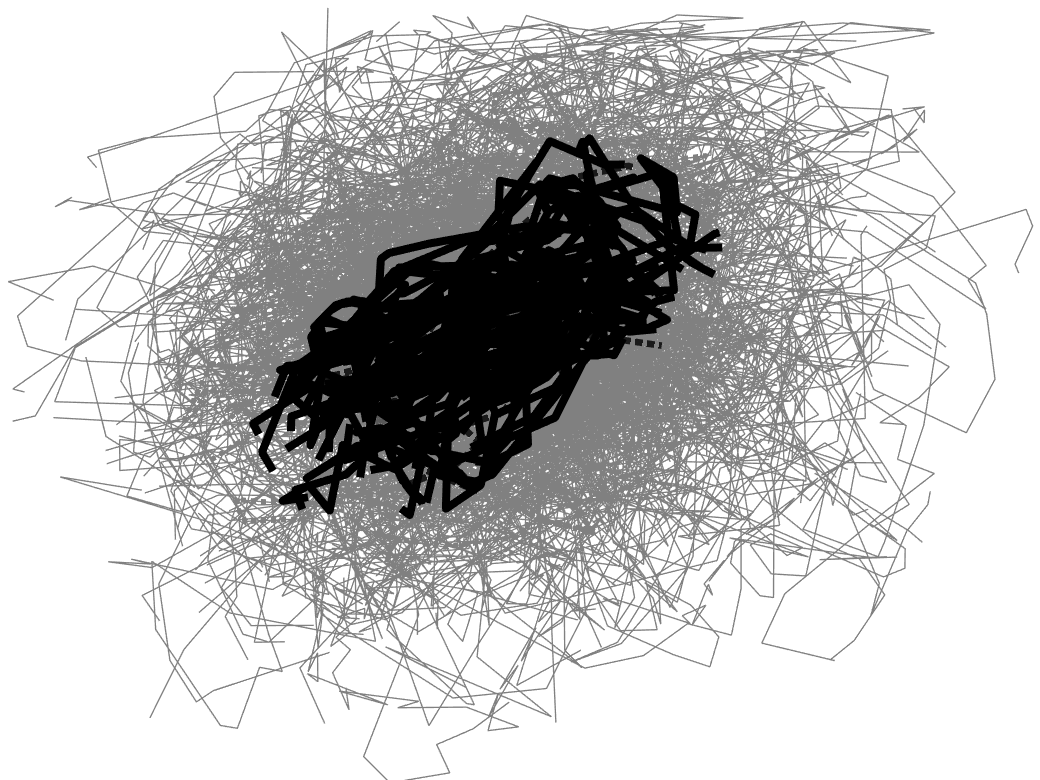} \\
	\vspace{0.5cm}
	\includegraphics[width=.45\textwidth,height=3.2cm]{./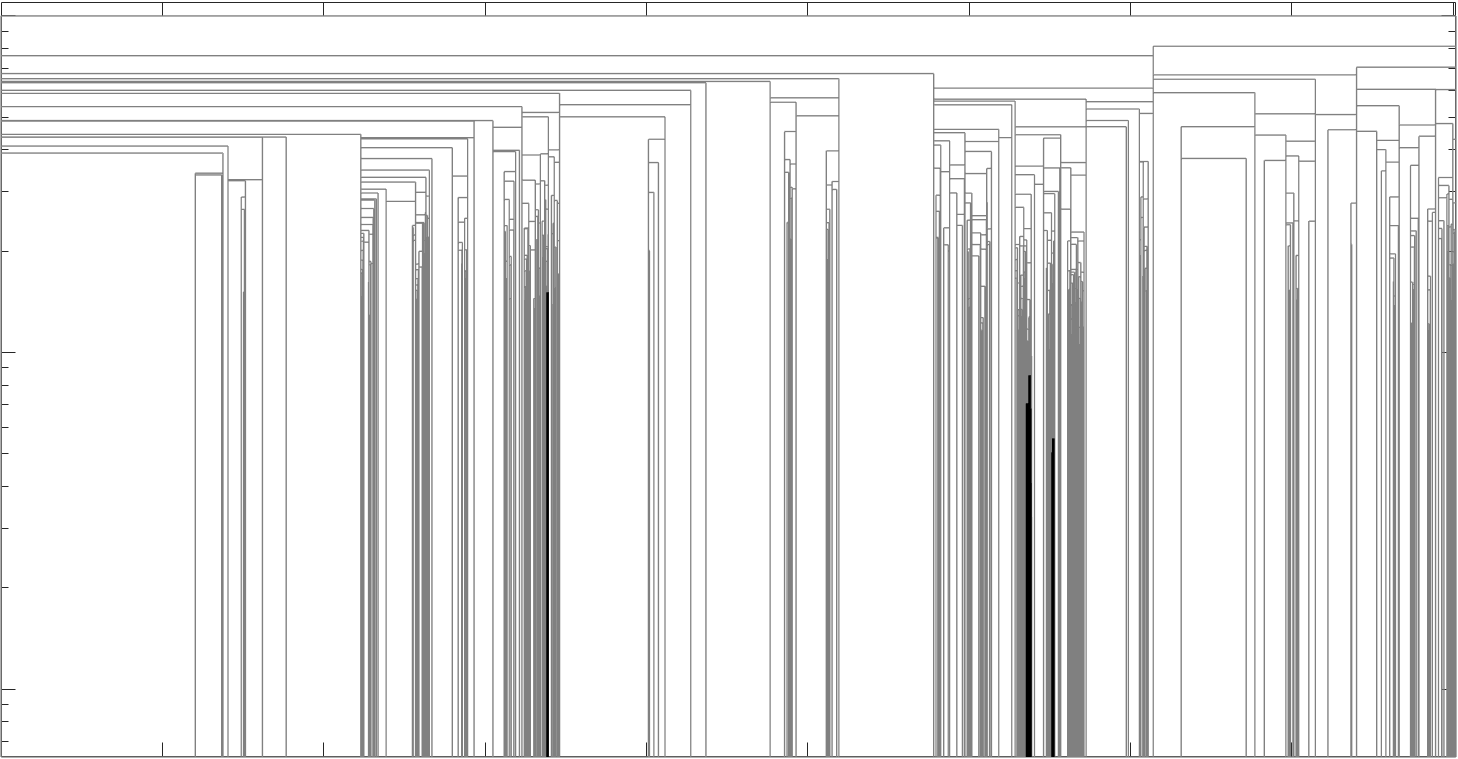} 
	\hspace{0.5cm}
	\includegraphics[width=.45\textwidth,height=3.2cm]{./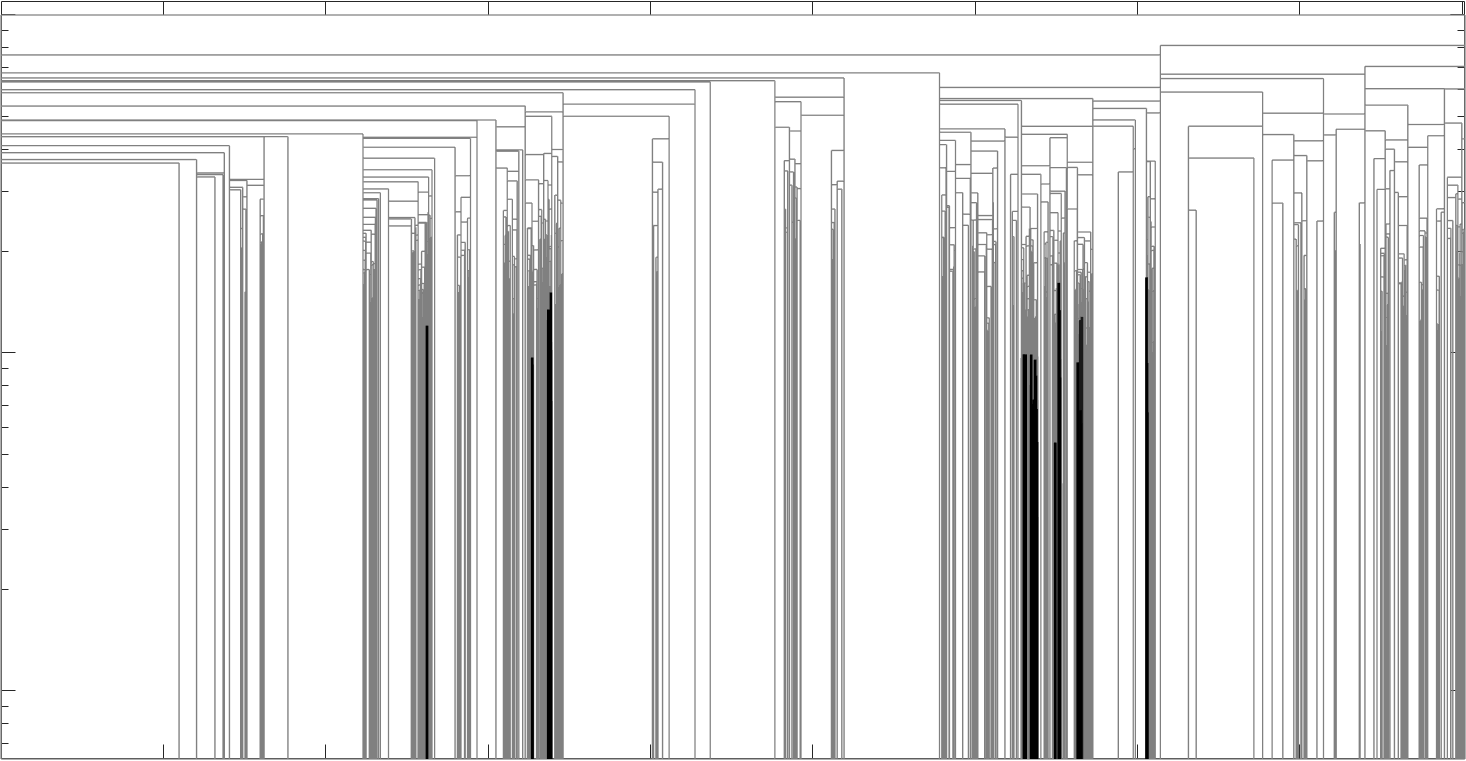} \\
	\vspace{0.5cm}
	\includegraphics[width=.45\textwidth,height=4.7cm]{./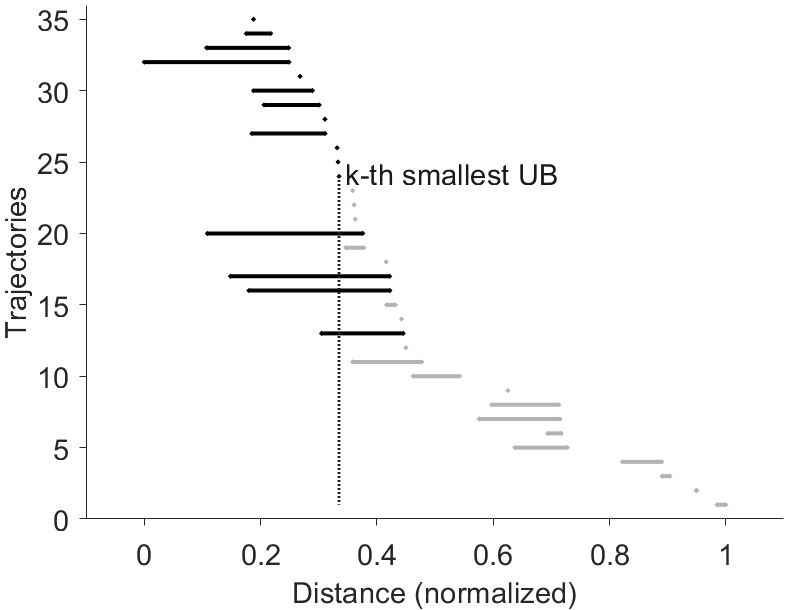} 
	\hspace{0.5cm}
	\includegraphics[width=.45\textwidth,height=4.7cm]{./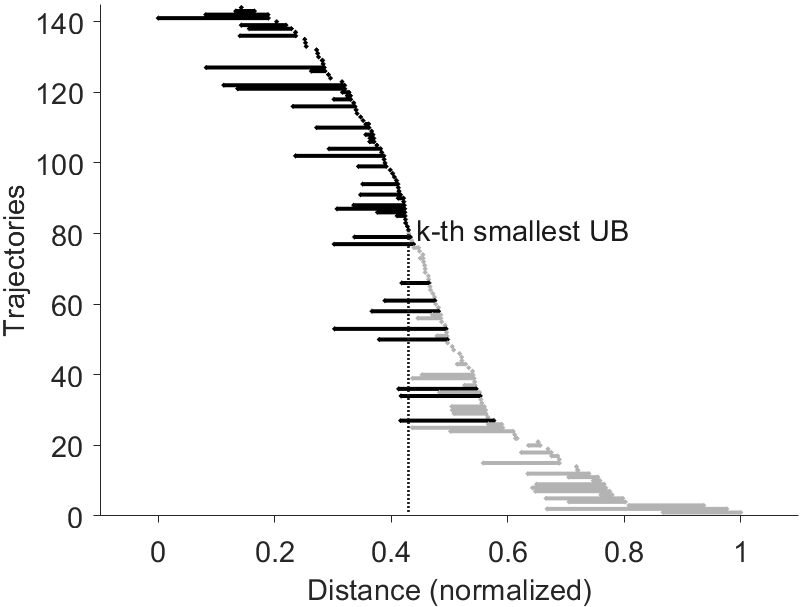} \\
	\vspace{0.5cm}
	\includegraphics[width=.45\textwidth,height=4.7cm]{./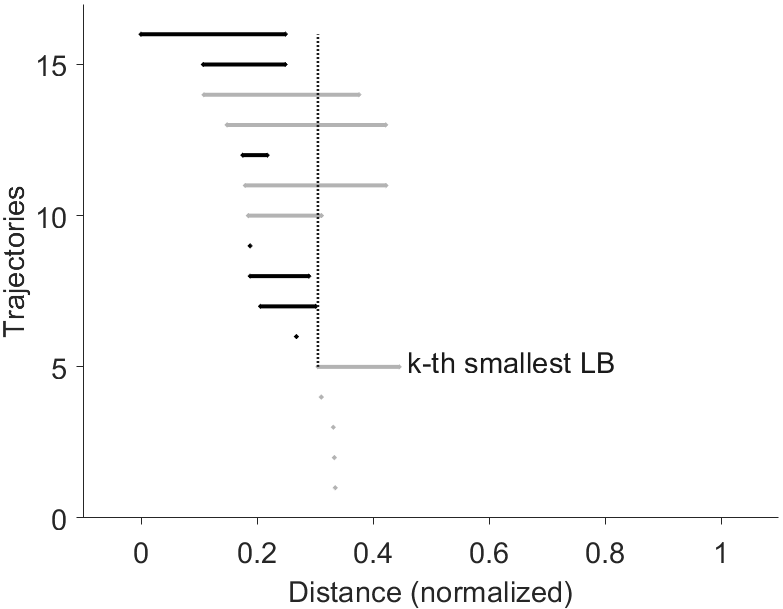} 
	\hspace{0.5cm}
	\includegraphics[width=.45\textwidth,height=4.7cm]{./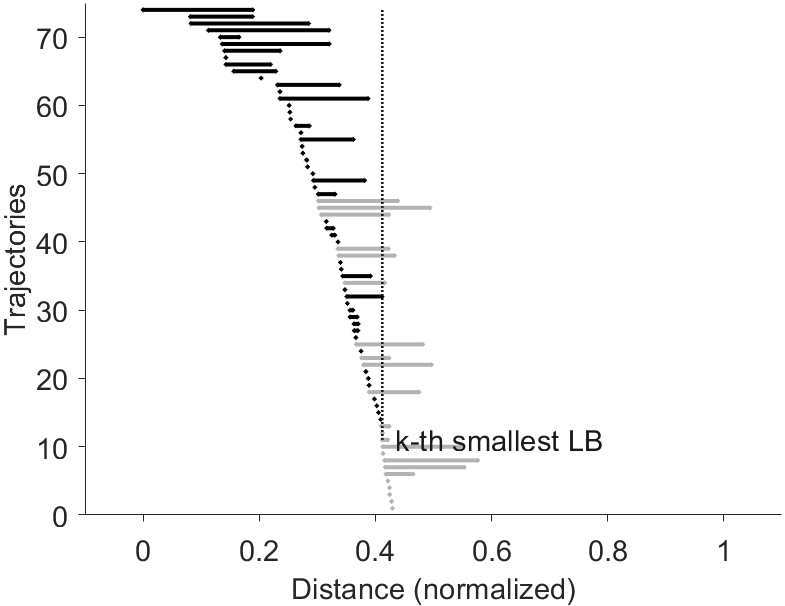}
	\caption{Two exact $\kNN$ queries for $k$=$12$ (first column) and $k$=$64$ (second column) on the Football~\cite{soccer15} data set using the Relaxed $\CCT$ (c.f. Section~\ref{sec:queries}).
	{\normalfont
	The first row shows $2$D trajectory plots and the second row contains dendrograms that show the $\CCT$ prune stage search, both with the same legend as in Figure~\ref{fig:teaser} (pruned trajectories are omitted).
    The third row shows trajectory bound intervals in $\Sstarcal$, i.e. the upper/lower bound distances of a trajectory to the query. 
    The trajectories in light grey show those that can be deleted in the reduce stage, since $\LB_{\textsc{f}}(P,\!\Qrm) + \Eadd > \beta_k$. The last row shows trajectory bound intervals in $\Sstarcal$, including those that can be included (black) in result set $\Scal_{k\mathsc{n\!n}}$ in the reduce stage, since $\UB_{\textsc{f}}(P,\!\Qrm) - \Eadd < \alpha_k$. }
	}
	\label{fig:football_knn_12}
	\vspace{-1em}
\end{figure*}

To gain some intuition regarding the effectiveness of this $3$ stage approach, refer to Figure~\ref{fig:football_knn_12}, which shows the Prune and Reduce stages for two $\kNN$ queries.
The Prune stage generally searches a small subset of the $\CCT$ (by eliminating sub-trees) and returns a small candidate set.
The Reduce stage can further exclude candidates, and also include candidates in the final result set.
Distance calls are only employed in the Decide stage, by which time the number of remaining candidates are typically small (or often zero).

The following describes each query algorithm in the additive error model and the changes for the multiplicative error model are briefly noted in each section. 

%-----------------------------------
%            kNN Query
%-----------------------------------
\subsection{Approximate and Exact kNN Queries} \label{ssec:knn_query}
Consider a query $k\mathsc{nn}(Q,\Eadd \geq 0,k \geq 1)$ on $\Scal$, as defined in Section~\ref{ssec:ProximitySearchDefinition}.
We describe the three stages of our query algorithm.

{\bf 1. Prune:}
Our query method heuristically guides the tree traversal towards a potentially close leaf. Recursively traverse the tree from the root, and for an internal node $v$, first descend to the child $u$ that has the smallest lower bound $\LB_{\textsc{f}}(\Qrm,C(u))$ among the children of $v$. When a leaf is reached, append its trajectory to the initially empty set $\Scal_1$.

Once $|\Scal_1| \ge k$, prune sub-trees as follows.
Track the $\kay$th smallest upper bound $\beta_k$ in $\Scal_1$ using a heap, 
and only descend below node $v$ if $\LB_{\textsc{f}}(C(v),\Qrm) \leq \beta_k + rad(v) - \Eadd $.
When a leaf node is reached, append its trajectory $P$ to $\Scal_1$ only if
$\LB_{\textsc{f}}(P,Q) < \beta_k$ and either $\UB_{\textsc{f}}(P,Q) < \beta_k$ or $\LB_{\textsc{\fd}}(P,Q,\beta_k)=\False$.

{\bf 2. Reduce:}
From $\CandCurv$, we filter with the final $\beta_k$ value to obtain at least $k$ elements in $\Sstarcal$.
That is, for those $P \in \CandCurv$ having  $\UB_{\textsc{f}}(P,\!\Qrm) > \beta_k$,
keep only those trajectories with $\LB_{\textsc{f}}(P,\!\Qrm) < \beta_k - \Eadd$ and $ \LB_{\textsc{\fd}}(P,Q,\beta_k - \Eadd) = \False$.

If $|\Sstarcal| = k$, we are done and return the set $\Scal_{k\mathsc{n\!n}} := \Sstarcal$.
Otherwise, locate the ($\kay+1$)-th smallest lower bound $\alpha_{k+1}$ in $\Sstarcal$. 
For each $P \in \Sstarcal$ with $\UB_{\textsc{f}}(P,\!\Qrm) - \Eadd < \alpha_{k+1}$ immediately move $P$ from $\Sstarcal$ to the initially empty set $\Scal_{k\mathsc{n\!n}}$.

For the relative error model, first compute the $\kay$th smallest lower bound $\alpha_k$ in $\CandCurv$, set $\Eadd:=\Erel \cdot \alpha_k$, and run stage two exactly as described above.

{\bf 3. Decide:} 
Perform the following until $|\Scal_{k\mathsc{n\!n}}| = k$.
Randomly choose a pivot trajectory $P \in \Sstarcal$, compute $\pi:=\delta_F(P,Q)$, and partition $\Sstarcal$ by computing if the trajectory is closer or further from $Q$ than $\pi$  (use upper/lower bounds, and if it's undetermined compute the Fr\'echet decision procedure).

If the number of trajectories closer to $Q$ than $\pi$ is at most $k - |\Scal_{k\mathsc{n\!n}}|$, append the closer trajectories to $\Scal_{k\mathsc{n\!n}}$ and delete them from $\Sstarcal$.
Otherwise, delete the trajectories further from $Q$ than $\pi$ from~$\Sstarcal$.

{\bf Algorithm Analysis.}
Using a similar analysis as in the QuickSelect algorithm~\cite{eppstein-QuickSelect}, the number of $\delta_F$ calls and $\delta_{\fd}$ calls in the Decide stage is $\mathcal{O}(\log|\Sstarcal|)$ expected and $\mathcal{O}(|\Sstarcal|)$ expected, respectively.
In the worst-case, no trajectories are discarded in the first two stages and $|\Sstarcal| = |\Scal|$. 
However, experiments (c.f. Section~\ref{sssec:primary_results}) show much fewer distance computations than this worst-case analysis.

%-----------------------------------
%             NN Query
%-----------------------------------
\subsubsection{Optimization for $\NN$ Queries} \label{sssec:nn_algorithm}

We describe modifications for a $\NN$ algorithm that empirically performs slightly fewer distance computations than the $\kNN$ algorithm when $k=1$ (c.f. Section~\ref{sssec:supplementary_results}). 

{\bf 1. Prune:}
We perform the following additional check when at a leaf node $v$: If $\UB_{\textsc{f}}(C(v),\!\Qrm) \le \Eadd$ is $\True$ proceed to the next stage with $\Scal_1 := \{ C(v) \}$.

{\bf 2. Reduce:} Same as $\kNN$.

{\bf 3. Decide:}
If $|\Sstarcal| = 1$, we are done and return $\Sstarcal$.
Otherwise, compute the second-smallest lower bound $\alpha_2$ in $\Sstarcal$, with associated trajectory $\Prm$.
If $\LB_{\fd}(P,Q,\alpha_2) = \False$ but $\delta_{\fd}(\Prm,Q,\alpha_2) = true$ then return  $\{\Prm\}$.

Otherwise, sort $\Sstarcal$ ascending by the upper bound, and loop on each $\Prm \in \Sstarcal$ to track the 
current best trajectory $\Prm'$ and its distance $\pi := \delta_F(P',Q)$.
For subsequent $\Prm \in \Sstarcal$, if $\LB_{\fd}(P,Q,\pi) = \False$ but $\delta_{\fd}(\Prm,Q,\pi) = true$, then set $\Prm' := \Prm$ and $\pi := \delta_F(P,Q)$.
Finally return  $\{ \Prm' \}$.

%-----------------------------------
%            RNN Query
%-----------------------------------

\subsection{Approximate and Exact RNN Queries} \label{ssec:rnn_query}
Consider a range query $\mathsc{rnn}(Q,\Elmt\geq0,\Eadd\geq0)$ on $\Scal$, as defined in Section~\ref{ssec:ProximitySearchDefinition}.
For the queries under the relative error model, we set $\Eadd:=\Erel\cdot \tau$.

{\bf 1. Prune:}
Recursively traverse the tree from the root.
For an internal node $v$, only descend to its children if $\LB_{\textsc{f}}(C(v),\!\Qrm) \leq \Elmt + rad(v)$.
That is, the associated cluster of $v$ may contain trajectories within distance $\Elmt$ of $\Qrm$. 
When a leaf is reached, 
append its stored trajectory $P$ to the initially empty set $\CandCurv$ if $ \LB_{\textsc{f}}(P,\!\Qrm) \leq \Elmt$.

All trajectories within the cluster of a node $v$ may immediately belong in the result set $\Scal_{\mathsc{rnn}}$, so we can potentially finish the sub-tree of $v$ with a $\UB_{\textsc{f}}$ call.
Since our $\UB_{\textsc{f}}$ call is more expensive than $\LB_{\textsc{f}}$ calls, we speed up the search 
using a heuristic parameter\footnote{Our experiments use $\kappa=1.25$, since this matches the average upper/lower bound ratio we observe on elements of the data sets.}
$\kappa \geq 1$ in the following:
Only if $\kappa \cdot \LB_{\textsc{f}}(C(v),\!\Qrm) + rad(v) < \Elmt$ check $\UB_{\textsc{f}}(C(c),\!\Qrm) + rad(c) \leq \Elmt$ and, on success, simply append all leafs beneath $v$ to the initially empty set~$\Scal_{\mathsc{rnn}}$.

{\bf 2. Reduce:} For each trajectory $P \in \CandCurv$, if $\UB_{\textsc{f}}(P,\!\Qrm) < \Elmt + \Eadd$, then append $P$ to $\Scal_{\mathsc{rnn}}$, else if $\LB_{\textsc{\fd}}(P,Q,\Elmt) = f\!alse$ then append $P$ to initially empty set $\Sstarcal$, otherwise $P$ is discarded. 

{\bf 3. Decide:} For each trajectory $P \in \Sstarcal$, if $\delta_{\fd}(P,\!\Qrm,\Elmt) = \True$, then append $P$ to $\Scal_{\mathsc{rnn}}$.

{\bf Algorithm Analysis.}
In the worst case no trajectories are discarded in the first two stages, hence,  the query algorithm might perform $\mathcal{O}(|\Scal|)$ bound computations in the Prune and Reduce stages, and $\mathcal{O}(|\Scal|)$ Fr\'echet decision procedure computations in the Decide stage.

However, our experiments in Section~\ref{sssec:primary_results} (see Figure~\ref{fig:RNN_bringmann_compare}) show much fewer bound computations and $\delta_{\fd}$ calls.

%-----------------------------------
%      Implicit Approximate Queries
%-----------------------------------

\subsection{Implicit Approximate Queries} \label{ssec:imp_approx_queries}
We also describe a variant of $\kNN$ and $\RNN$ query algorithms that perform no distance and no Fr\'echet decision procedure computations.
Instead, implicit approximation query algorithms return trajectory results with the smallest additive $\Eadd$ or relative $\Erel$ approximation error, which is part of the output. 
Since results are determined by the set of heuristic bounds, this method can result in a significant computational speed-up over aforementioned query algorithms.

The Prune and Reduce stages of the implicit approximate $\RNN$ and $\kNN$ query algorithms are the same as their counterparts above with $\Eadd := 0$.
The modified Decide stages are as follows.

{\bf $\kNN$ Decide:}
If $|\Sstarcal| = k$, then set $\Scal_{k\mathsc{n\!n}} := \Sstarcal$. Otherwise, sort $\Sstarcal$ by upper bound ascending, and set $\Scal_{k\mathsc{n\!n}}$ to the first $k$ elements in $\Sstarcal$.

To compute $\Eadd$ and $\Erel$, set $\beta_k$ to the $k$-th smallest upper bound in $\Sstarcal$.
Delete the first $k$ elements in $\Sstarcal$, sort $\Sstarcal$ by lower bound ascending, and set $\alpha_k$ to the lower bound of the first element in $\Sstarcal$. 
Set $\Eadd := \beta_k - \alpha_k$. Set $\Erel := (\beta_k - \alpha_k) / \alpha_k$.

{\bf $\RNN$ Decide:}
Set $\Scal_{k\mathsc{n\!n}} := \Sstarcal$.

To compute $\Eadd$ and $\Erel$, set $\beta_k$ to the largest upper bound in $\Sstarcal$.
Set $\Eadd := \beta_k - \Elmt$ and $\Erel := (\beta_k - \Elmt) / \Elmt$.

%%%%%%%%%%%%%%%%%%%%%%%%%%%%%%%%%%%%%%%%%%%%%%%%%%%%%%%
%           Experiments
%%%%%%%%%%%%%%%%%%%%%%%%%%%%%%%%%%%%%%%%%%%%%%%%%%%%%%%

%\clearpage
\section{Experiments} \label{sec:experiments}

We experimentally evaluate the scalability, effectiveness and efficiency of bounds in Section~\ref{sec:Bounds}, data structure constructions in Section~\ref{sec:data_structures}, and query algorithms in Section~\ref{sec:queries}.
As introduced in Section~\ref{sec:introduction}, our measurements focus on the primary empirical goal of measuring the number of distance computations, with a subordinate goal of measuring the query I/O (tree node accesses).

We compare our contribution to several competitors, including a recent state-of-the-art contribution~\cite{bri19} for $\RNN$ queries among $2$D trajectories (which improves upon previous $\RNN$ search approaches on $2$D data~\cite{bal17,buch17,dut17}), a standard M-Tree~\cite{ciaccia1997}, a standard Cover-Tree~\cite{BeygelzimerKL06}, and an improved linear scan algorithm (Section~\ref{sssec:improved_linear_scan}).
Although the approach \cite{ber17} is most similar in regard of the supported operations, it does not allow practical comparison on our test data sets due to its exponential construction time and data structure size.

%%%%%%%%%%%%%%%%%%%
% Experiment Setup
%%%%%%%%%%%%%%%%%%%%
\subsection{Experiment Setup}
\label{ssec:experiment_setup}
We now describe how the experiments are setup whereas Section~\ref{ssec:experiment_results} discusses the results\footnote{See \url{https://github.com/japfeifer/frechet-queries} for more detailed experimental results, the code, and the data sets.}.

\subsubsection{Real Data Sets} \label{sec:real_data_sets}
We obtained sixteen real-world data sets \cite{kruger09,truckbus05,pigeon16,bats15,cats16,vessel05,geo2012,hurdat217,masked17,NBA16,soccer15,gulls15,pentip06,taxiA11,taxiB10} of diverse origin and characteristics to evaluate our data structure construction and query algorithms (see Table~\ref{tab:NNRes}).
To broaden our experiments, but also to challenge our bound algorithms, we use the trajectory simplification algorithm of \cite{aga05} to obtain trajectories whose sampling are irregular (c.f. Section~\ref{sec:real_data_sets}).
Given an error bound {$\widehat{\varepsilon} \geq 0$}, this simplification algorithm returns a trajectory over a subset of the original vertices whose Fréchet distance is within the specified bound.
For every $\Prm \in \Scal$, we set $\widehat{\varepsilon}$ to be a small percentage (typically $1\%$ or $2\%$) of $\reach(\Prm)$, where $\reach$ denotes as the maximum distance from a trajectory's start vertex to any of its other vertices (see e.g. \cite{ber17}).
We found that this substantially reduces the time required to run the experiments, without materially changing the results.

Though some of these real data sets have a small number of trajectories (e.g. Vessel-Y vs. Taxi), they are included in our experiments since they show that proximity queries in small sets can cause more distance calls than searches in larger sets (e.g. Figures~\ref{fig:NN_real}, \ref{fig:kNN_query_short_narrow}, \ref{fig:bound_effectiveness}, and \ref{fig:NN_vs_kNN}).

We use two methods to generate query trajectories for the real data sets.
Method one randomly selects an input trajectory $\Prm$, perturbs its vertices up to $3\%$ and translates it up to $5\%$ of $\reach(\Prm)$ uniformly at random.
For direct comparison, method two uses the query generator of~\cite{bri19}, that returns exactly $10, 100$ or $1000$ results for a $\RNN$ query.
We generated $1000$ query trajectories per data set with either method.
Results based on the second query generation method indicate that in the respective figure.

\begin{table*}[] 
	\centering
{ \setlength{\tabcolsep}{0.5em}
% 	\begin{tabular}{l|r|r|rr|p{8.5cm}}
	\begin{tabular}{l|r|r|rr|l}
		\hline
		& & & \multicolumn{2}{c|}{Vertices} & \\
		Data Set & $|\Scal|$ & $d$ & orig. & simpl. & Trajectory Description \\
		\hline
Vessel-M~\cite{vessel05} 	& $106$ 	& $2$ 	& $23.0$ 	& $7.3$ 	& Mississippi river shipping vessels Shipboard AIS.\\
Pigeon~\cite{pigeon16} 		& $131$ 	& $2$ 	& $970.0$ 	& $26.0$ 	& Homing Pigeons (release sites to home site).\\
Seabird~\cite{masked17} 	& $134$ 	& $2$ 	& $3175.8$ 	& $43.5$ 	& GPS of Masked Boobies in Gulf of Mexico.\\
Bus~\cite{truckbus05} 		& $148$ 	& $2$ 	& $446.6$ 	& $40.3$ 	& GPS of School buses.\\
Cats~\cite{cats16} 			& $154$ 	& $2$ 	& $526.1$ 	& $34.2$ 	& Pet house cats GPS in Raleigh-Durham, NC, USA.\\
\hline
Buffalo~\cite{kruger09} 	& $165$ 	& $2$ 	& $161.3$ 	& $54.5$ 	& Radio-collared Kruger Buffalo, South Africa.\\
Vessel-Y~\cite{vessel05} 	& $187$ 	& $2$ 	& $155.2$ 	& $4.0$ 	& Yangtze river shipping Vessels Shipboard AIS.\\
Gulls~\cite{gulls15} 		& $253$ 	& $2$ 	& $602.1$ 	& $33.7$ 	& Black-backed gulls GPS (Finland to Africa).\\
Truck~\cite{truckbus05} 	& $276$ 	& $2$ 	& $406.5$ 	& $41.4$ 	& GPS of 50 concrete trucks in Athens, Greece.\\
Bats~\cite{bats15} 			& $545$ 	& $2$ 	& $44.1$ 	& $7.3$ 	& Video-grammetry of Daubenton trawling bats.\\
\hline
Hurdat2~\cite{hurdat217} 	& $1788$ 	& $2$ 	& $27.7$ 	& $7.9$ 	& Atlantic tropical cyclone and sub-cyclone paths.\\
Pen~\cite{pentip06} 		& $2858$ 	& $2$ 	& $119.8$ 	& $24.4$ 	& Pen tip characters on a WACOM tablet.\\
Football~\cite{soccer15} 	& $18034$ 	& $2$ 	& $203.4$ 	& $15.4$ 	& European football player ball-possession.\\
Geolife~\cite{geo2012} 		& $18670$ 	& $2$ 	& $1332.5$ 	& $14.2$ 	& People movement, mostly in Beijing, China.\\
Basketball~\cite{NBA16} 	& $20780$ 	& $3$ 	& $44.1$ 	& $7.3$ 	& NBA basketball three-point shots-on-net.\\
Taxi~\cite{taxiA11,taxiB10} & $180736$ 	& $2$ 	& $343.0$ 	& $41.3$ 	& $10$,$357$ Partitioned Beijing taxi trajectories.\\
\hline
	\end{tabular}
}
	\caption{Real data sets, showing number of input trajectories |\Scal|, dimensions $d$, average number of original vertices per trajectory, average number of simplified vertices per trajectory, and a description.}
	\label{tab:NNRes}
\end{table*}

\subsubsection{Synthetic Data Sets} \label{sec:syn_data_sets}

Testing on synthetic data sets helps to analyze which characteristics most impact the number of $\delta_{F}$ calls and overall query efficiency.
By varying a single characteristic while holding others constant, the impact of the particular characteristic on the measurements can be assessed.
The routine to create these data sets is parameterized by the following characteristics:

\begin{itemize}
\item cluster size $\alpha_{CS}$ (number of trajectories per cluster), 
\item trajectory straightness factor $\alpha_{SF}$ and maximum edge distance $\alpha_{ED}$, 
\item average trajectory size $n$,
\item number of trajectories $|\Scal|$, and
\item spatial dimensions $d$.
\end{itemize}

Our baseline synthetic data set is generated with the values
${\alpha_{CS} = 10}$, $\alpha_{SF}= 0.95$ with $\alpha_{ED}= 0.6$,
$n=15$,
$|\Scal|=5000$, and
$d=2$.
For the experiments, we vary
${\alpha_{CS} \in \{ 1,10,25,50,100\}}$, 
${\alpha_{SF} \in \{ 0.5,0.8,0.9,0.95,0.99\} }$, 
${n \in \{ 15,25,35,45,55\} }$, 
$d \in \{ 2,4,8,16,32\}$, and the number of trajectories
$|\Scal|$ in ${\{ 5\text{K}, 10\text{K}, 20\text{K}, 30\text{K}, 40\text{K}, 1\text{M}, 10\text{M}\}}$~.

Synthetic data sets and their associated query trajectories are created in the following four steps.

{\em Step 1: Unique (non-clustered) trajectories.}
First, increase the designated number of trajectories $|\Scal|$ by $500$.
Generate each of the $|\Scal|/\alpha_{CS}$ trajectories with the following random-walk routine.
Choose a number of vertices $z \in \left[\frac{n}{2},\frac{3n}{2}\right]$ uniformly at random and then choose the initial vertex $p_1 \in [0,1]^d$ uniformly at random.
Subsequent vertices $p_i$ are created with
$$
p_i := \alpha_{ED} \cdot \sigma + p_{i-1} + \alpha_{SF} \cdot (p_{i-1} - p_{i-2}) \quad,
$$
where each random step $\sigma \in [0,1]^d$ is chosen uniformly. 

{\em Step 2: Clustered trajectories.}
For each unique trajectory generate a copy of it, perturb uniformly at random the copy's vertices up to the maximum edge distance $\alpha_{ED}$, and then translate uniformly at random the copy up to the maximum edge distance.
This process is performed $\alpha_{CS} - 1$ times per unique trajectory.

{\em Step 3: Sample query trajectories.}
Out of the above set $\Scal$, we choose $1000$ trajectories uniformly at random without replacement.

{\em Step 4: Add `noisy' trajectories.}
Finally, $500$ additional `noise' trajectories are generated as in Step 1.

\subsubsection{Improved NN Linear Scan}
\label{sssec:improved_linear_scan}
Given the lack of available algorithms for exact nearest-neighbor search under the Fréchet distance and our discussion on the `curse of dimensionality' (c.f. Section~\ref{sec:introduction}), we implemented a competitor, called improved NN linear scan, suitable for high dimensional trajectory data.

The improved $\NN$ linear scan algorithm leverages our bounds of Section~\ref{sec:Bounds} by checking each $\Prm \in \Scal$, and appending $\Prm$ to the initially empty set $\Scal_1$ if $\LB_{\textsc{f}}(\Prm,\Qrm) < \beta$ and $\LB_{\fd}(P,Q,\beta) = f\!alse$.
The smallest upper bound $\beta$ is tracked, upper bound $\UB_{\textsc{f}}(P,Q)$ is only computed when $\Prm$ is appended to $\Scal_1$, and $\LB_{\fd}(P,Q,\beta)$ is only computed when $\LB_{\textsc{f}}(\Prm,\Qrm) < \beta$.

\subsubsection{Quality of the Data Structure} \label{sssec:Exp-Quality-Data-Structure}

%%%%%%%%%%%%%%%%%%%%
% Experiment Results
%%%%%%%%%%%%%%%%%%%%

\subsection{Experimental Results}
\label{ssec:experiment_results}
Note that the results on the quality of the $\CCT$ data structure are in Section~\ref{ssec:cct_analysis}.
Experimental results are separated into primary results, which evaluate the proposed Relaxed $\CCT$ method on real and synthetic data sets and compare it with related work, and supplementary results, which compare the different exact and approximate variations of our approaches against each other.

\subsubsection{Primary Results} \label{sssec:primary_results}
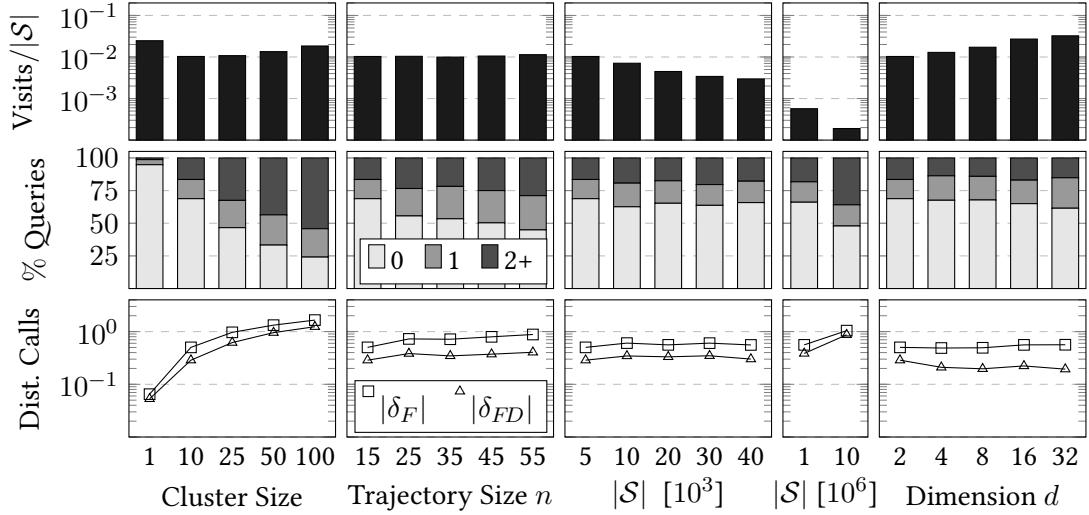
\begin{figure}[]
\vspace{-0.05cm}
%%%%%%%%%%%%%%%%%%%
% NN Count Frechet computations - Synthetic
%%%%%%%%%%%%%%%%%%%

\begin{tikzpicture}

\begin{groupplot}[
     group style = {group size = 5 by 3,
                    horizontal sep=0.15cm,
                    vertical sep=0.15cm,},
     height=3.4cm,
     width=4.3cm,
     grid style=dashed,
     xmode = normal,
     xtick=data,
     xtick style={draw=none},
     xticklabels={,,},
     xmin=0.5, xmax=5.5,
     ybar stacked,
     ymin=0, ymax=105,
     ytick={25,50,75,100},
     ymajorgrids=true,
    ]
    
%%%%%%%%%%%%%%%%%%%%%%%%%%%%%%%%%%%%%%%%%%%%%%%%%%%%%%%
% Nof Visits
%%%%%%%%%%%%%%%%%%%%%%%%%%%%%%%%%%%%%%%%%%%%%%%%%%%%%%%   

%%%%%%%%%%%%%%%%%%%
% vary cluster size
%%%%%%%%%%%%%%%%%%%
\nextgroupplot   [
ylabel={Visits$/|\Scal|$},
% ylabel shift = -0.05cm,
ybar,
ymin=0.0001, ymax=0.2,
ytick={0.0001,0.001,0.01,0.1},
yticklabels={,$10^{-3}$,$10^{-2}$,$10^{-1}$},
ymode = log,
log ticks with fixed point,
log origin = infty,
]

\addplot[color=black,fill=black!90,]
coordinates {(1,0.02465)(2,0.01030)(3,0.01078)(4,0.01347)(5,0.01841)};

%%%%%%%%%%%%%%%%%%%%%%%%%%
% vary trajectory size n
%%%%%%%%%%%%%%%%%%%%%%%%%%
\nextgroupplot [
ymin=0.0001, ymax=0.2,
ytick={0.0001,0.001,0.01,0.1},
yticklabels={,,},
ybar,
ymode = log,
log ticks with fixed point,
log origin = infty,
]

\addplot[color=black,fill=black!90,]
coordinates {(1,0.01030)(2,0.01039)(3,0.00995)(4,0.01055)(5,0.01138)};

%%%%%%%%%%%%%%%%%%%%%%%%%%%%
% vary |S| 5000 to 40000
%%%%%%%%%%%%%%%%%%%%%%%%%%%%
\nextgroupplot   [
ymin=0.0001, ymax=0.2,
ytick={0.0001,0.001,0.01,0.1},
yticklabels={,,},
ybar,
ymode = log,
log ticks with fixed point,
log origin = infty,
]

\addplot[color=black,fill=black!90,]
coordinates {(1,0.01030)(2,0.00708)(3,0.00448)(4,0.00341)(5,0.00297)};

%%%%%%%%%%%%%%%%%%%%%%%%%%%%
% vary |S| 1M and 10M
%%%%%%%%%%%%%%%%%%%%%%%%%%%%
\nextgroupplot   [
width=2.7cm,
xmin=0.5, xmax=2.5,
ymin=0.0001, ymax=0.2,
ytick={0.0001,0.001,0.01,0.1},
yticklabels={,,},
ybar,
ymode = log,
log ticks with fixed point,
log origin = infty,
]

\addplot[color=black,fill=black!90,]
coordinates {(1,0.00057)(2,0.00019)};

%%%%%%%%%%%%%%%%%%%%%%%%%%%%
% vary dimension d size
%%%%%%%%%%%%%%%%%%%%%%%%%%%%
\nextgroupplot  [
ymin=0.0001, ymax=0.2,
ytick={0.0001,0.001,0.01,0.1},
yticklabels={,,},
ybar,
ymode = log,
log ticks with fixed point,
log origin = infty,
]

\addplot[color=black,fill=black!90,]
coordinates {(1,0.01030)(2,0.01292)(3,0.01718)(4,0.02704)(5,0.03235)};

%%%%%%%%%%%%%%%%%%%%%%%%%%%%%%%%%%%%%%%%%%%%%%%%%%%%%%%
% % Queries
%%%%%%%%%%%%%%%%%%%%%%%%%%%%%%%%%%%%%%%%%%%%%%%%%%%%%%%  

%%%%%%%%%%%%%%%%%%%
% vary cluster size
%%%%%%%%%%%%%%%%%%%
\nextgroupplot   [
ylabel={$\%$ Queries},
ylabel shift = 0.05cm,
]

\addplot[color=black,fill=black!10,]
coordinates {(1,94.8)(2,68.8)(3,46.6)(4,33.3)(5,24.1)};

\addplot[color=black,fill=black!40,]
coordinates {(1,3.9)(2,14.7)(3,20.9)(4,23.1)(5,21.7)};

\addplot[color=black,fill=black!70,]
coordinates {(1,1.3)(2,16.5)(3,32.5)(4,43.6)(5,54.2)};

%%%%%%%%%%%%%%%%%%%%%%%%%%
% vary trajectory size n
%%%%%%%%%%%%%%%%%%%%%%%%%%
\nextgroupplot [
legend style={/tikz/every even column/.append style={column sep=0.2cm}},
legend pos=south east,
legend columns = 3,
yticklabels={,,},
]

\addplot[color=black,fill=black!10,]
coordinates {(1,68.8)(2,55.6)(3,53.4)(4,50.3)(5,44.9)};

\addplot[color=black,fill=black!40,]
coordinates {(1,14.7)(2,21)(3,24.8)(4,24.7)(5,26.2)};

\addplot[color=black,fill=black!70,]
coordinates {(1,16.5)(2,23.4)(3,21.8)(4,25)(5,28.9)};

\legend{0}
\addlegendentry{1}
\addlegendentry{2+}

%%%%%%%%%%%%%%%%%%%%%%%%%%%%
% vary |S| 5000 to 40000
%%%%%%%%%%%%%%%%%%%%%%%%%%%%
\nextgroupplot   [
yticklabels={,,},
]

\addplot[color=black,fill=black!10,]
coordinates {(1,68.8)(2,62.6)(3,65.3)(4,63.7)(5,65.8)};

\addplot[color=black,fill=black!40,]
coordinates {(1,14.7)(2,18.1)(3,17.2)(4,15.9)(5,16.4)};

\addplot[color=black,fill=black!70,]
coordinates {(1,16.5)(2,19.3)(3,17.5)(4,20.4)(5,17.8)};

%%%%%%%%%%%%%%%%%%%%%%%%%%%%
% vary |S| 1M and 10M
%%%%%%%%%%%%%%%%%%%%%%%%%%%%
\nextgroupplot   [
width=2.7cm,
xmin=0.5, xmax=2.5,
yticklabels={,,},
]

\addplot[color=black,fill=black!10,]
coordinates {(1,66.1)(2,47.9)};

\addplot[color=black,fill=black!40,]
coordinates {(1,15.6)(2,16.2)};

\addplot[color=black,fill=black!70,]
coordinates {(1,18.3)(2,35.9)};

%%%%%%%%%%%%%%%%%%%%%%%%%%%%
% vary dimension d size
%%%%%%%%%%%%%%%%%%%%%%%%%%%%
\nextgroupplot  [
yticklabels={,,},
]

\addplot[color=black,fill=black!10,]
coordinates {(1,68.8)(2,67.6)(3,67.8)(4,65)(5,61.5)};

\addplot[color=black,fill=black!40,]
coordinates {(1,14.7)(2,18.7)(3,18.1)(4,18)(5,23.3)};

\addplot[color=black,fill=black!70,]
coordinates {(1,16.5)(2,13.7)(3,14.1)(4,17)(5,15.2)};

%%%%%%%%%%%%%%%%%%%%%%%%%%%%%%%%%%%%%%%%%%%%%%%%%%%%%%%%%%%
% Dist Calls
%%%%%%%%%%%%%%%%%%%%%%%%%%%%%%%%%%%%%%%%%%%%%%%%%%%%%%%%%%%

%%%%%%%%%%%%%%%%%%%
% vary cluster size
%%%%%%%%%%%%%%%%%%%
\nextgroupplot[
sharp plot,
stack plots = false,
xlabel={Cluster Size},
xticklabels={1,10,25,50,100},
ymode = log,
ymin=0.01, ymax=4,
ytick={0.01,0.1,1,10},
yticklabels={,$10^{-1}$,$10^{0}$},
ylabel={Dist. Calls},
log ticks with fixed point,
log origin = infty,
]

\addplot[
color=black,
mark=square,
]
coordinates {(1,0.065)(2,0.502)(3,0.968)(4,1.326)(5,1.653)};

\addplot[
color=black,
mark=triangle,
]
coordinates {(1,0.053)(2,0.287)(3,0.619)(4,0.956)(5,1.238)};

%%%%%%%%%%%%%%%%%%%%%%%%%%
% vary trajectory size n
%%%%%%%%%%%%%%%%%%%%%%%%%%
\nextgroupplot [
sharp plot,
stack plots = false,
legend style={/tikz/every even column/.append style={column sep=0.3cm}},
legend pos=south east,
legend columns = 2,
xlabel={Trajectory Size $n$},
xticklabels={15,25,35,45,55},
ymin=0.01, ymax=4,
ytick={0.01,0.1,1,10},
yticklabels={,,},
ymode = log,
log ticks with fixed point,
log origin = infty,
]

\addplot[
color=black,
mark=square,
]
coordinates {(1,0.502)(2,0.726)(3,0.716)(4,0.796)(5,0.881)};

\addplot[
color=black,
mark=triangle,
]
coordinates {(1,0.287)(2,0.383)(3,0.346)(4,0.373)(5,0.405)};

\legend{$|\delta_F|$}
\addlegendentry{$|\delta_{F\!D}|$}

%%%%%%%%%%%%%%%%%%%%%%%%%%%%
% vary |S| 5000 to 40000
%%%%%%%%%%%%%%%%%%%%%%%%%%%%
\nextgroupplot[
sharp plot,
stack plots = false,
xlabel={$|\Scal|$~~[$10^3$]},
xticklabels={5,10,20,30,40},
xlabel shift = -0.1cm,
ymin=0.01, ymax=4,
ytick={0.01,0.1,1,10},
yticklabels={,,},
ymode = log,
log ticks with fixed point,
log origin = infty,
]

\addplot[
color=black,
mark=square,
]
coordinates {(1,0.502)(2,0.605)(3,0.560)(4,0.604)(5,0.558)};

\addplot[
color=black,
mark=triangle,
]
coordinates {(1,0.287)(2,0.345)(3,0.332)(4,0.348)(5,0.301)};

%%%%%%%%%%%%%%%%%%%%%%%%%%%%
% vary |S| 1M and 10M
%%%%%%%%%%%%%%%%%%%%%%%%%%%%
\nextgroupplot[
sharp plot,
stack plots = false,
width=2.7cm,
xmin=0.5, xmax=2.5,
xlabel={$|\Scal|$~[$10^6$]},
xticklabels={1,10},
xlabel shift = -0.1cm,
ymin=0.01, ymax=4,
ytick={0.01,0.1,1,10},
yticklabels={,,},
ymode = log,
log ticks with fixed point,
log origin = infty,
]

\addplot[
color=black,
mark=square,
]
coordinates {(1,0.560)(2,1.043)};

\addplot[
color=black,
mark=triangle,
]
coordinates {(1,0.385)(2,0.870)};

%%%%%%%%%%%%%%%%%%%%%%%%%%%%
% vary dimension d size
%%%%%%%%%%%%%%%%%%%%%%%%%%%%
\nextgroupplot [
sharp plot,
stack plots = false,
xlabel={Dimension $d$},
xticklabels={2,4,8,16,32},
ymin=0.01, ymax=4,
ytick={0.01,0.1,1,10},
yticklabels={,,},
ymode = log,
log ticks with fixed point,
log origin = infty,
]

\addplot[
color=black,
mark=square,
]
coordinates {(1,0.502)(2,0.488)(3,0.493)(4,0.56)(5,0.563)};

\addplot[
color=black,
mark=triangle,
]
coordinates {(1,0.287)(2,0.21)(3,0.197)(4,0.223)(5,0.194)};

\end{groupplot}
\end{tikzpicture}
\vspace{-.4cm}
\caption{Effectiveness of exact $\NN$ queries on synthetic data set Relaxed $\CCT$s, averaged over 1000 queries (c.f. Section~\ref{sssec:primary_results}). 
	{\normalfont
The top row shows average number of tree node visits (normalized to a factor of $|\Scal|$).
The middle row shows the percentage of queries that performed $0$, $1$, or more than $1$ distance computation. The bottom row shows the \emph{absolute number} (not normalized) of $\delta_{F}$ and $\delta_{F\!D}$ calls.}
}
\label{fig:NN_syn}
\end{figure}
\begin{figure}[]
%%%%%%%%%%%%%%%%%%%
% Compare Construction of CCT vs build that uses only UB
%%%%%%%%%%%%%%%%%%%

\begin{tikzpicture}    %[scale=.65]

\begin{groupplot}[
     group style = {group size = 1 by 3,
                    horizontal sep=0.15cm,
                    vertical sep=0.15cm,},
     width=14.0cm,
     height=3.4cm,
     grid style=dashed,
     xmode = normal,
     xmin=0, xmax=17,
     xtick style={draw=none},
     xtick=data,
     xticklabels={,,},
     ybar stacked,
     ymin=0, ymax=105,
     ytick={25,50,75,100},
     ymajorgrids=true,
    ]

%%%%%%%%%%%%%%%%%%%
% Nof. Visits
%%%%%%%%%%%%%%%%%%%

\nextgroupplot[
ymode = log,
log ticks with fixed point,
log origin = infty,
ybar,
ymin=0.003, ymax=1.5,
ylabel={Visits$/|\Scal|$},
ylabel shift = 0.05cm,
ytick={0.01,0.1,1},
yticklabels={$10^{-2}$,$10^{-1}$,$10^{0}$},
]

\addplot[color=black,fill=black!90,]
coordinates {
	(9,0.154)(4,0.217)(5,0.126)(1,0.150)(7,0.234)(13,0.014)(16,0.012)(8,0.078)(2,0.127)(3,0.202)(6,0.197)(10,0.072)
	(15,0.027)(14,0.013)(11,0.025)(12,0.161)
};

%%%%%%%%%%%%%%%%%%%
% Percent Queries
%%%%%%%%%%%%%%%%%%%

\nextgroupplot   [
ylabel={$\%$ Queries},
ylabel shift = 0.05cm,
legend style={/tikz/every even column/.append style={column sep=0.2cm}},
legend pos=south east,
legend columns = 3,
]

\addplot[color=black,fill=black!10,]
coordinates {(9,99.3)(4,97.7)(5,100.0)(1,97.3)(7,70.9)(13,99.8)(16,98.7)(8,90.6)(2,96.3)(3,100.0)(6,99.3)(10,100.0)(15,64.9)(14,84.4)(11,99.8)(12,99.6)};

\addplot[color=black,fill=black!40,]
coordinates {(9,0.3)(4,1.5)(5,0.0)(1,0.8)(7,2.8)(13,0.2)(16,1.0)(8,5.8)(2,2.3)(3,0.0)(6,0.3)(10,0.0)(15,4.9)(14,5.6)(11,0.2)(12,0.2)};

\addplot[color=black,fill=black!70,]
coordinates {(9,0.4)(4,0.8)(5,0.0)(1,1.9)(7,26.3)(13,0.0)(16,0.3)(8,3.6)(2,1.4)(3,0.0)(6,0.4)(10,0.0)(15,30.2)(14,10.0)(11,0.0)(12,0.2)};

\legend{0}
\addlegendentry{1}
\addlegendentry{2+}

%%%%%%%%%%%%%%%%%%%
% Distance Calls
%%%%%%%%%%%%%%%%%%%
\nextgroupplot[
sharp plot,
stack plots = false,
legend style={/tikz/every even column/.append style={column sep=0.3cm}},
legend pos=north west,
legend columns = 2,
xticklabels={Vessel-M, Pigeon, Seabird, Bus, Cats, Buffalo, Vessel-Y, Gulls, Truck, Bats, Hurdat2, Pen,Football, Geolife, Basketball, Taxi},
xticklabel style={rotate = 50, anchor=north east,inner sep=0.1cm,}, 
ymode = log,
ymin=0.0005, ymax=4,
ytick={0.001,0.01,0.1,1,10},
yticklabels={$10^{-3}$,$10^{-2}$,$10^{-1}$,$10^{0}$},
ylabel={Dist. Calls},
% ylabel shift = 0.25cm,
log ticks with fixed point,
log origin = infty,
]

\addplot[
color=black,
mark=square,
]
coordinates {
    (1,0.053)(2,0.053)(3,0.001)(4,0.032)(5,0.001)(6,0.011)(7,0.973)(8,0.133)(9,0.011)(10,0.001)(11,0.002)(12,0.006)(13,0.002)(14,0.343)(15,0.989)(16,0.016)
};

\addplot[
color=black,
mark=triangle,
]
coordinates {
    (1,0.026)(2,0.023)(3,0.001)(4,0.017)(5,0.001)(6,0.004)(7,0.696)(8,0.059)(9,0.004)(10,0.001)(11,0.001)(12,0.003)(13,0.001)(14,0.239)(15,0.762)(16,0.034)
};

\legend{$|\delta_F|$}
\addlegendentry{$|\delta_{F\!D}|$}

\end{groupplot}

\end{tikzpicture}
\vspace{-0.2cm}
\caption{Effectiveness of exact $\NN$ queries on real data set Relaxed $\CCT$s, averaged over 1000 queries (c.f. Section~\ref{sssec:primary_results}). 
	{\normalfont
The top row shows average number of tree node visits (normalized to a factor of $|\Scal|$).
The middle row shows the percentage of queries that performed $0$, $1$, or more than $1$ distance computation. The bottom row shows the \emph{absolute number} (not normalized) of $\delta_{F}$ and $\delta_{F\!D}$ calls.}
}

\label{fig:NN_real}
\end{figure}
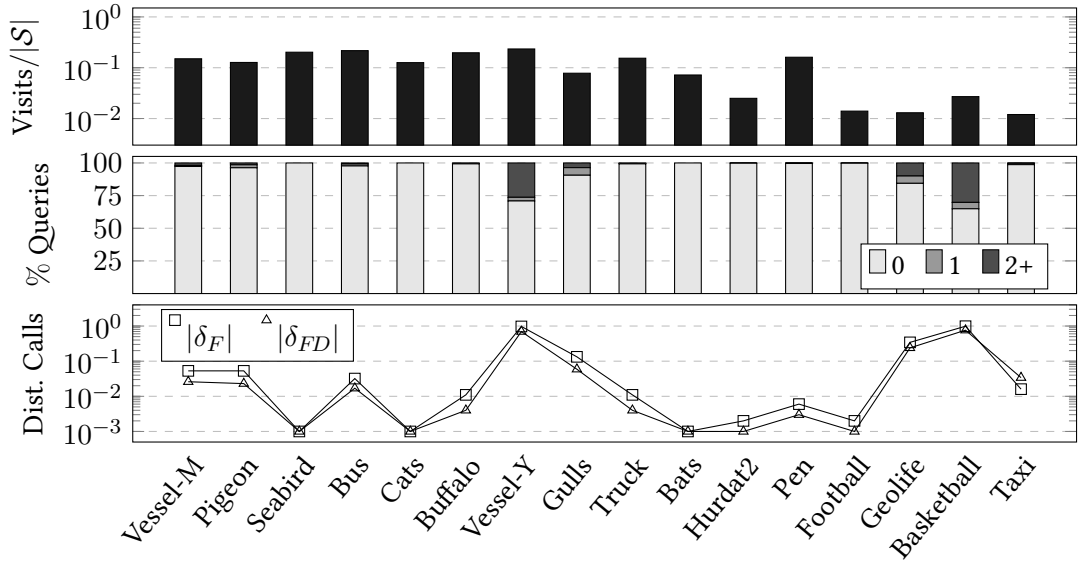

Figures~\ref{fig:NN_syn}~and~\ref{fig:NN_real} show the effectiveness of exact $\NN$ queries on Relaxed $\CCT$s for synthetic and real data sets, respectively.
On most data sets, the average number of expensive $\delta_{F}$ distance calls per query is \emph{one or fewer}, and only increases slightly for highly clustered data sets.
Surprisingly, the majority of queries require no distance computations at all for many of the data sets.
The $10$M trajectory data set performs on average only $1.04$ expensive $\delta_{F}$ calls per query.
Interestingly, the Vessel-Y~\cite{vessel05} data set requires a similar average of $0.97$ $\delta_{F}$ calls, even though it is a much smaller data set.
The Vessel-Y data set has higher intrinsic dimensionality, so this shows that clustering of data has a much larger influence on distance calls than the number of trajectories does.
The number of node visits (normalized to a factor of $|\Scal|$) decreases as the number of trajectories increases, showing effective pruning of the search space.

\begin{figure}[h]
\vspace{-0.05cm}
%%%%%%%%%%%%%%%%%%%
% kNN Count Frechet computations - Synthetic
%%%%%%%%%%%%%%%%%%%

\begin{tikzpicture}

\begin{groupplot}[
     group style = {group size = 5 by 2,
                    horizontal sep=0.15cm,
                    vertical sep=0.15cm,},
     height=3.4cm,
     width=4.3cm,
     grid style=dashed,
     xmode = normal,
     xtick=data,
     xtick style={draw=none},
     xticklabels={,,},
     xmin=0.5, xmax=5.5,
     ybar,
     ymin=0, ymax=105,
     ytick={25,50,75,100},
     ymajorgrids=true,
     ymode = log,
     log ticks with fixed point,
     log origin = infty,
    ]
    
%%%%%%%%%%%%%%%%%%%%%%%%%%%%%%%%%%%%%%%%%%%%%%%%%%%%%%%
% Nof Visits
%%%%%%%%%%%%%%%%%%%%%%%%%%%%%%%%%%%%%%%%%%%%%%%%%%%%%%%   

%%%%%%%%%%%%%%%%%%%
% vary cluster size
%%%%%%%%%%%%%%%%%%%
\nextgroupplot   [
ylabel={Visits$/|\Scal|$},
% ylabel shift = -0.05cm,
ymin=0.0001, ymax=0.2,
ytick={0.0001,0.001,0.01,0.1},
yticklabels={,$10^{-3}$,$10^{-2}$,$10^{-1}$},
]

\addplot[color=black,fill=black!90,]
coordinates {(1,0.03698)(2,0.01258)(3,0.01165)(4,0.01441)(5,0.01971)};

%%%%%%%%%%%%%%%%%%%%%%%%%%
% vary trajectory size n
%%%%%%%%%%%%%%%%%%%%%%%%%%
\nextgroupplot [
ymin=0.0001, ymax=0.2,
ytick={0.0001,0.001,0.01,0.1},
yticklabels={,,},
]

\addplot[color=black,fill=black!90,]
coordinates {(1,0.01258)(2,0.01226)(3,0.01155)(4,0.01283)(5,0.01341)};

%%%%%%%%%%%%%%%%%%%%%%%%%%%%
% vary |S| 5000 to 40000
%%%%%%%%%%%%%%%%%%%%%%%%%%%%
\nextgroupplot   [
ymin=0.0001, ymax=0.2,
ytick={0.0001,0.001,0.01,0.1},
yticklabels={,,},
]

\addplot[color=black,fill=black!90,]
coordinates {(1,0.01258)(2,0.00848)(3,0.00551)(4,0.00425)(5,0.00372)};

%%%%%%%%%%%%%%%%%%%%%%%%%%%%
% vary |S| 1M and 10M
%%%%%%%%%%%%%%%%%%%%%%%%%%%%
\nextgroupplot   [
width=2.7cm,
xmin=0.5, xmax=2.5,
ymin=0.0001, ymax=0.2,
ytick={0.0001,0.001,0.01,0.1},
yticklabels={,,},
]

\addplot[color=black,fill=black!90,]
coordinates {(1,0.00077)(2,0.000257)};

%%%%%%%%%%%%%%%%%%%%%%%%%%%%
% vary dimension d size
%%%%%%%%%%%%%%%%%%%%%%%%%%%%
\nextgroupplot  [
ymin=0.0001, ymax=0.2,
ytick={0.0001,0.001,0.01,0.1},
yticklabels={,,},
]

\addplot[color=black,fill=black!90,]
coordinates {(1,0.01258)(2,0.01547)(3,0.02180)(4,0.03613)(5,0.03913)};

%%%%%%%%%%%%%%%%%%%%%%%%%%%%%%%%%%%%%%%%%%%%%%%%%%%%%%%
% Frechet Decision Procedure Calls
%%%%%%%%%%%%%%%%%%%%%%%%%%%%%%%%%%%%%%%%%%%%%%%%%%%%%%%  

%%%%%%%%%%%%%%%%%%%
% vary cluster size
%%%%%%%%%%%%%%%%%%%
\nextgroupplot   [
sharp plot,
stack plots = false,
xlabel={Cluster Size},
xticklabels={1,10,25,50,100},
ylabel={Dist. Calls},
ylabel shift = 0.15cm,
ymin=0.1, ymax=30,
ytick={0.1,1,10,100},
yticklabels={,$10^{0}$,$10^{1}$},
]

\addplot[
color=black,
mark=square,
]
coordinates {(1,0.349)(2,0.623)(3,2.004)(4,3.056)(5,4.086)};

\addplot[
color=black,
mark=triangle,
]
coordinates {(1,0.430)(2,0.874)(3,4.830)(4,10.245)(5,18.458)};

%%%%%%%%%%%%%%%%%%%%%%%%%%
% vary trajectory size n
%%%%%%%%%%%%%%%%%%%%%%%%%%
\nextgroupplot [
sharp plot,
stack plots = false,
legend style={/tikz/every even column/.append style={column sep=0.1cm}},
legend pos=north west,
legend columns = 2,
xlabel={Trajectory Size $n$},
xticklabels={15,25,35,45,55},
ymin=0.1, ymax=30,
ytick={0.1,1,10,100},
yticklabels={,,},
]

\addplot[
color=black,
mark=square,
]
coordinates {(1,0.623)(2,0.869)(3,0.997)(4,0.979)(5,1.072)};

\addplot[
color=black,
mark=triangle,
]
coordinates {(1,0.874)(2,1.366)(3,1.470)(4,1.473)(5,1.576)};

\legend{$|\delta_F|$}
\addlegendentry{$|\delta_{F\!D}|$}

%%%%%%%%%%%%%%%%%%%%%%%%%%%%
% vary |S| 5000 to 40000
%%%%%%%%%%%%%%%%%%%%%%%%%%%%
\nextgroupplot   [
sharp plot,
stack plots = false,
xlabel={$|\Scal|$~~[$10^3$]},
xticklabels={5,10,20,30,40},
ymin=0.1, ymax=30,
ytick={0.1,1,10,100},
yticklabels={,,},
]

\addplot[
color=black,
mark=square,
]
coordinates {(1,0.623)(2,0.798)(3,0.896)(4,0.909)(5,0.839)};

\addplot[
color=black,
mark=triangle,
]
coordinates {(1,0.874)(2,1.158)(3,1.351)(4,1.463)(5,1.299)};

%%%%%%%%%%%%%%%%%%%%%%%%%%%%
% vary |S| 1M and 10M
%%%%%%%%%%%%%%%%%%%%%%%%%%%%
\nextgroupplot   [
sharp plot,
stack plots = false,
xlabel={$|\Scal|$~[$10^6$]},
xticklabels={1,10},
xlabel shift = -0.1cm,
width=2.7cm,
xmin=0.5, xmax=2.5,
ymin=0.1, ymax=30,
ytick={0.1,1,10,100},
yticklabels={,,},
]

\addplot[
color=black,
mark=square,
]
coordinates {(1,1.749)(2,3.497)};

\addplot[
color=black,
mark=triangle,
]
coordinates {(1,4.095)(2,13.253)};

%%%%%%%%%%%%%%%%%%%%%%%%%%%%
% vary dimension d size
%%%%%%%%%%%%%%%%%%%%%%%%%%%%
\nextgroupplot  [
sharp plot,
stack plots = false,
xlabel={Dimension $d$},
xticklabels={2,4,8,16,32},
ymin=0.1, ymax=30,
ytick={0.1,1,10,100},
yticklabels={,,},
]

\addplot[
color=black,
mark=square,
]
coordinates {(1,0.623)(2,0.754)(3,0.896)(4,1.031)(5,1.076)};

\addplot[
color=black,
mark=triangle,
]
coordinates {(1,0.874)(2,0.926)(3,1.105)(4,1.346)(5,1.453)};

\end{groupplot}
\end{tikzpicture}
\vspace{-.4cm}
\caption{Effectiveness of exact $\kNN$ queries ($k=5$) on synthetic data set Relaxed $\CCT$s, averaged over 1000 queries (c.f. Section~\ref{sssec:primary_results}). 
	{\normalfont
The top row shows average number of tree node visits (normalized to a factor of $|\Scal|$).
The bottom row shows the \emph{absolute number} (not normalized) of $\delta_{F\!D}$ and $\delta_{F}$ calls.}
}
\label{fig:kNN_syn}
\end{figure}

\begin{figure}[h]
%%%%%%%%%%%%%%%%%%%
% Compare Construction of CCT vs build that uses only UB
%%%%%%%%%%%%%%%%%%%

\begin{tikzpicture}    %[scale=.65]

\begin{groupplot}[
     group style = {group size = 1 by 2,
                    horizontal sep=0.15cm,
                    vertical sep=0.15cm,},
     width=14.0cm,
     height=3.4cm,
     grid style=dashed,
     xmode = normal,
     xmin=0, xmax=17,
     xtick style={draw=none},
     xtick=data,
     xticklabels={,,},
     ybar,
     ymin=0, ymax=105,
     ytick={25,50,75,100},
     ymajorgrids=true,
    ]

%%%%%%%%%%%%%%%%%%%
% Nof. Visits
%%%%%%%%%%%%%%%%%%%

\nextgroupplot[
ybar,
ymin=0.01, ymax=1.5,
ylabel={Visits$/|\Scal|$},
ylabel shift = 0.05cm,
ytick={0.01,0.1,1},
yticklabels={,$10^{-1}$,$10^{0}$},
ymode = log,
log ticks with fixed point,
log origin = infty,
]

\addplot[color=black,fill=black!90,]
coordinates {
	(9,0.4773)(4,0.6565)(5,0.1798)(1,0.3002)(7,0.3646)(13,0.0530)(16,0.0565)(8,0.1536)(2,0.2288)(3,0.6244)(6,0.5538)(10,0.2174)
	(15,0.0409)(14,0.0217)(11,0.0662)(12,0.3960)
};

%%%%%%%%%%%%%%%%%%%
% Frechet Decision Procedure Calls
%%%%%%%%%%%%%%%%%%%

\nextgroupplot[
sharp plot,
stack plots = false,
legend style={/tikz/every even column/.append style={column sep=0.1cm}},
legend style={at={(0.88,0.37)},anchor=north},
% legend pos=south east,
legend columns = 2,
xticklabels={Vessel-M, Pigeon, Seabird, Bus, Cats, Buffalo, Vessel-Y, Gulls, Truck, Bats, Hurdat2, Pen,Football, Geolife, Basketball, Taxi},
xticklabel style={rotate = 50, anchor=north east,inner sep=0.1cm,}, 
ylabel={Dist. Calls},
ylabel shift = 0.15cm,
ybar,
ymin=0.1, ymax=200,
ytick={1,10,100},
yticklabels={$10^{0}$,$10^{1}$,$10^{2}$},
ymode = log,
log ticks with fixed point,
log origin = infty,
]

\addplot[
color=black,
mark=square,
]
coordinates {(1,1.805)(2,2.042)(3,3.971)(4,3.589)(5,0.319)(6,4.273)(7,5.542)(8,2.059)(9,4.574)(10,0.291)(11,1.532)(12,6.391)(13,3.055)(14,2.919)(15,5.601)(16,4.802)};

\addplot[
color=black,
mark=triangle,
]
coordinates {(1,5.492)(2,6.655)(3,26.316)(4,25.856)(5,0.392)(6,35.114)(7,64.076)(8,9.215)(9,41.561)(10,0.567)(11,4.375)(12,119.420)(13,16.640)(14,24.692)(15,75.150)(16,64.782)};

\legend{$|\delta_F|$}
\addlegendentry{$|\delta_{F\!D}|$}

\end{groupplot}

\end{tikzpicture}
\vspace{-0.2cm}
\caption{Effectiveness of exact $\kNN$ queries ($k=5$) on real data set Relaxed $\CCT$s, averaged over 1000 queries (c.f. Section~\ref{sssec:primary_results}). 
	{\normalfont
The top row shows average number of tree node visits (normalized to a factor of $|\Scal|$).
The bottom row shows the \emph{absolute number} (not normalized) of $\delta_{F\!D}$ and $\delta_{F}$ calls.}
}

\label{fig:kNN_real}
\end{figure}

Figures~\ref{fig:kNN_syn}~and~\ref{fig:kNN_real} show the effectiveness of exact $\kNN$ queries on Relaxed $\CCT$s for synthetic and real data sets, respectively.
The results correspond to the $\NN$ query results above.

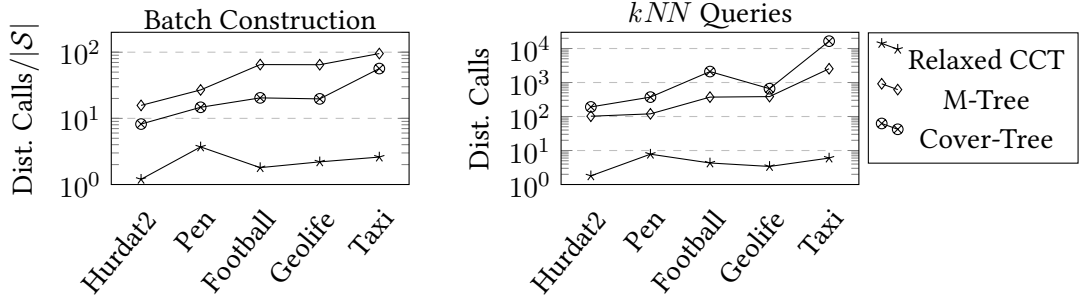
\begin{figure}[h]
%%%%%%%%%%%%%%%%%%%
% Compare Construction of CCT vs build that uses only UB
%%%%%%%%%%%%%%%%%%%

\begin{tikzpicture}    %[scale=.65]
\pgfplotsset{every axis title/.append style={at={(0.5,0.85)}}}
\begin{groupplot}[
     group style = {group size = 2 by 1,
                    horizontal sep=2.0cm,
                    vertical sep=0.15cm,},
     width=5.5cm,
     height=3.6cm,
     grid style=dashed,
     xmode = normal,
     xmin=0.5, xmax=5.5,
     xtick style={draw=none},
     xtick=data,
     xticklabels={,,},
     ybar,
     ymajorgrids=true,
     ymode = log,
     log ticks with fixed point,
     log origin = infty,
    ]

%%%%%%%%%%%%%%%%%%%
% Construction Frechet Distance Calls
%%%%%%%%%%%%%%%%%%%

\nextgroupplot[
sharp plot,
stack plots = false,
xticklabels={Hurdat2, Pen, Football, Geolife, Taxi},
xticklabel style={rotate = 50, anchor=north east,inner sep=0.1cm,}, 
ymin=1, ymax=200,
ytick={1,10,100,1000},
yticklabels={$10^{0}$,$10^{1}$,$10^{2}$},
ylabel={Dist. Calls$/|\Scal|$},
title={Batch Construction},
% ylabel shift = 0.25cm,
]

\addplot[
color=black,
mark=star,
]
coordinates {(1,1.2)(2,3.7)(3,1.8)(4,2.2)(5,2.6)};

\addplot[
color=black,
mark=diamond,
]
coordinates {(1,15.7)(2,26.9)(3,65.0)(4,64.7)(5,95.4)};

\addplot[
color=black,
mark=otimes,
]
coordinates {(1,8.2)(2,14.7)(3,20.4)(4,19.7)(5,56.6)};

%%%%%%%%%%%%%%%%%%%
% Query Frechet Distance Calls
%%%%%%%%%%%%%%%%%%%
\nextgroupplot[
sharp plot,
stack plots = false,
legend pos=outer north east,
xticklabels={Hurdat2, Pen, Football, Geolife, Taxi},
xticklabel style={rotate = 50, anchor=north east,inner sep=0.1cm}, 
ymin=1, ymax=30000,
ytick={1,10,100,1000,10000,100000},
yticklabels={$10^{0}$,$10^{1}$,$10^{2}$,$10^{3}$,$10^{4}$},
ylabel={Dist. Calls},
title={$\kNN$ Queries},
%ylabel shift = 0.25cm,
]

\addplot[
color=black,
mark=star,
]
coordinates {(1,1.8)(2,7.8)(3,4.3)(4,3.4)(5,6.0)};

\addplot[
color=black,
mark=diamond,
]
coordinates {(1,102.0)(2,119.7)(3,372.1)(4,385.4)(5,2540.0)};

\addplot[
color=black,
mark=otimes,
]
coordinates {(1,191.4)(2,372.0)(3,2098.7)(4,659.1)(5,16449.7)};

\legend{Relaxed $\CCT$}
\addlegendentry{M-Tree}
\addlegendentry{Cover-Tree}

\end{groupplot}

\end{tikzpicture}
\vspace{-0.2cm}
\caption{Performance of Relaxed $\CCT$s vs. standard M-Tree~\cite{ciaccia1997} and Cover-Tree~\cite{BeygelzimerKL06} implementations for the five largest real $d=2$ data sets (c.f. Section~\ref{sssec:primary_results}).
	{\normalfont
	    The left chart shows construction $\delta_{F}$ calls normalized over the data set size $|\Scal|$. The right chart shows exact $\kNN$ query ($k=10$) \emph{absolute number} (not normalized) $\delta_{F}$ calls, averaged over 1000 queries (query method two).}
}

\label{fig:cct_mtree_covertree}
\end{figure}

The experimental results for comparison of the Relaxed $\CCT$ vs. standard, `off-the-shelf' metric indexing methods M-Tree~\cite{ciaccia1997} and Cover-Tree~\cite{BeygelzimerKL06} are in Figure~\ref{fig:cct_mtree_covertree}.
The Fréchet distance function is 'plugged' into the generic Cover-Tree, whose implementation uses a 'scaling' constant of $1.3$ which results in $1/1.3\approx0.78$ for compactness and separation to balance arity and depth.
For the M-Tree, we used the random promote method, as it performs the fewest distance calls during construction, and set the maximum arity to $100$.
We also attempted to improve M-Tree performance by first testing $\delta_{\fd}$, and if it fails then calling $\delta_F$, for both construction and queries. 
The results show that both for construction and query the number of $\delta_F$ calls for the $\CCT$ are usually at least an order of magnitude smaller than required for the standard M-Tree and Cover-Tree.
For example, the $\kNN$ queries on the Taxi~\cite{taxiA11,taxiB10} data set performed $6.0$ $\delta_F$ calls on average using the $\CCT$, and $16.4\times10^3$ calls using the Cover-Tree.

\input{figs/fig-RNN_bringmann_compare.tex}

Figure~\ref{fig:RNN_bringmann_compare} compares the performance of our approach with those of the recent contribution by Bringmann et al.~\cite{bri19} that performs exact $\RNN$ queries under the Fr\'echet distance in $2$ dimensional space, using an $8$ dimensional KD-tree (c.f. Section~\ref{sec:introduction}).
For the KD-Tree based approach, the number of visits is defined as the total number of nodes visited during the tree traversal.  
In the bound invocation metric, four bounds ($\LB_{\fd}$,$\UB_{\adff}$,$\UB_{\adfr}$,$\UB_{\adfd}$) may be counted for the $\CCT$ and only three (adaptive equal-time, negative filter, and greedy) for~\cite{bri19}.
In comparison, the $\RNN$ queries using $\CCT$s have fewer node visits, compute fewer bound computations, and perform fewer Fr\'echet decision calls by an average factor of $3$ for synthetic data sets.
Though our queries may perform up to four bound computations per trajectory, and not just three, it is surprising that $\CCT$s perform fewer total bound computations for all but one of the inputs.
This improvement is due to stronger bounds and clustering of trajectories, which allows the algorithm to test if all trajectories within a cluster belong in the result.

\begin{figure*}[]
\vspace{-0.05cm}
%%%%%%%%%%%%%%%%%%%
% NN Naive Comparison
%%%%%%%%%%%%%%%%%%%

%%%%%%%%%%%%%%%%%%%%%%%%%%%%%%%%%%%%%%%%%%%%%%%%%%%%%
% Number of Visits
%%%%%%%%%%%%%%%%%%%%%%%%%%%%%%%%%%%%%%%%%%%%%%%%%%%%%

\begin{tikzpicture}

\begin{groupplot}[
     group style = {group size = 4 by 2,
                    horizontal sep=0.1cm,
                    vertical sep=0.1cm,},
     height=3.6cm,
     width=4.5cm,
     grid style=dashed,
     xmode = normal,
     xtick=data,
     xtick style={draw=none},
     xticklabels={,,},
     xmin=0.5, xmax=5.5,
     ybar,
     ymin=0.003, ymax=1.5,
     ytick={0.01,0.1,1},
     ymajorgrids=true,
     ymode = log,
     log ticks with fixed point,
     log origin = infty,
    ]
    
%%%%%%%%%%%%%%%%%%%
% vary cluster size
%%%%%%%%%%%%%%%%%%%

\nextgroupplot [bar width = 0.15cm,
ylabel={Visits$/|\Scal|$},
yticklabels={$10^{-2}$,$10^{-1}$,$10^{0}$},
]
\addplot[color=black,fill=black!10,bar shift = -0.1cm,]
coordinates {(1,0.0247)(2,0.0103)(3,0.0108)(4,0.0135)(5,0.0184)};
\addplot[color=black,fill=black!80,bar shift = 0.1cm,]
coordinates {(1,1)(2,1)(3,1)(4,1)(5,1)};

% %%%%%%%%%%%%%%%%%%%%%%%%%%
% % vary avg number vertices
% %%%%%%%%%%%%%%%%%%%%%%%%%%

% \nextgroupplot [bar width = 0.15cm,
% yticklabels={,,},
% ]
% \addplot[color=black,fill=blue!60,bar shift = -0.1cm,]
% coordinates {(1,0.0103)(2,0.0104)(3,0.0099)(4,0.0105)(5,0.0114)};
% \addplot[color=black,fill=black!80,bar shift = 0.1cm,]
% coordinates {(1,1)(2,1)(3,1)(4,1)(5,1)};

%%%%%%%%%%%%%%%%%%%%%%%%%%%%
% vary input trajectory size
%%%%%%%%%%%%%%%%%%%%%%%%%%%%

\nextgroupplot [bar width = 0.15cm,
yticklabels={,,},
]
\addplot[color=black,fill=black!10,bar shift = -0.1cm,]
coordinates {(1,0.0103)(2,0.0071)(3,0.0045)(4,0.0034)(5,0.0030)};
\addplot[color=black,fill=black!80,bar shift = 0.1cm,]
coordinates {(1,1)(2,1)(3,1)(4,1)(5,1)};

%%%%%%%%%%%%%%%%%%%%%%%%%%%%
% vary dimension size
%%%%%%%%%%%%%%%%%%%%%%%%%%%%

\nextgroupplot [bar width = 0.15cm,
yticklabels={,,},
]
\addplot[color=black,fill=black!10,bar shift = -0.1cm,]
coordinates {(1,0.0103)(2,0.0129)(3,0.0172)(4,0.0270)(5,0.0324)};
0.0103
\addplot[color=black,fill=black!80,bar shift = 0.1cm,]
coordinates {(1,1)(2,1)(3,1)(4,1)(5,1)};

%%%%%%%%%%%%%%%%%%%%%%%%%%%%
% real data sets
%%%%%%%%%%%%%%%%%%%%%%%%%%%%

\nextgroupplot [bar width = 0.15cm,
yticklabels={,,},
xshift = 0.4cm,
]
\addplot[color=black,fill=black!10,bar shift = -0.1cm,]
coordinates {(1,0.0250)(2,0.1607)(3,0.0145)(4,0.0128)(5,0.0123)};
\addplot[color=black,fill=black!80,bar shift = 0.1cm,]
coordinates {(1,1)(2,1)(3,1)(4,1)(5,1)};

%%%%%%%%%%%%%%%%%%%%%%%%%%%%%%%%%%%%%%%%%%%%%%%%%%%%%
% Number Linear LB/UB Computations
%%%%%%%%%%%%%%%%%%%%%%%%%%%%%%%%%%%%%%%%%%%%%%%%%%%%%

%%%%%%%%%%%%%%%%%%%
% vary cluster size
%%%%%%%%%%%%%%%%%%%
\nextgroupplot [bar width = 0.15cm,
ymode = log,
log ticks with fixed point,
log origin = infty,
% legend pos=north west,
% legend columns = 2,
xlabel={Cluster Size},
xticklabels={1,10,25,50,100},
ymin=0.0001, ymax=0.4,
ytick={0.0001,0.001,0.01,0.1,1},
yticklabels={,$10^{-3}$,$10^{-2}$,$10^{-1}$},
ylabel={Bounds$/|\Scal|$},
]

\addplot[color=black,fill=black!10,bar shift = -0.1cm,]
coordinates {(1,0.0007)(2,0.0009)(3,0.0019)(4,0.0034)(5,0.0062)};
\addplot[color=black,fill=black!80,bar shift = 0.1cm,]
coordinates {(1,0.0036)(2,0.0095)(3,0.0112)(4,0.0131)(5,0.0205)};
% \addlegendentry{$\CCT$}
% \addlegendentry{Scan}

% %%%%%%%%%%%%%%%%%%%%%%%%%%
% % vary avg number vertices
% %%%%%%%%%%%%%%%%%%%%%%%%%%
% \nextgroupplot [bar width = 0.15cm,
% xlabel={Trajectory Size $n$},
% xticklabels={15,25,35,45,55},
% ymin=0.0001, ymax=0.4,
% ytick={0.0001,0.001,0.01,0.1,1},
% yticklabels={,,},
% legend pos=north west,
% legend columns = 2,
% ]
% \addplot[color=black,fill=blue!60,bar shift = -0.1cm,]
% coordinates {(1,0.0009)(2,0.0011)(3,0.0014)(4,0.0016)(5,0.0019)};
% \addplot[color=black,fill=black!80,bar shift = 0.1cm,]
% coordinates {(1,0.0095)(2,0.0098)(3,0.0118)(4,0.0107)(5,0.0137)};
% \addlegendentry{$\CCT$}
% \addlegendentry{Scan}

%%%%%%%%%%%%%%%%%%%%%%%%%%%%
% vary input trajectory size
%%%%%%%%%%%%%%%%%%%%%%%%%%%%
\nextgroupplot [bar width = 0.15cm,
xlabel={$|\Scal|$~~[$10^3$]},
xticklabels={5,10,20,30,40},
xlabel shift = -0.1cm,
ymin=0.0001, ymax=0.4,
ytick={0.0001,0.001,0.01,0.1,1},
yticklabels={,,},
legend pos=north west,
legend columns = 2,
]
\addplot[color=black,fill=black!10,bar shift = -0.1cm,]
coordinates {(1,0.0009)(2,0.0005)(3,0.0003)(4,0.0002)(5,0.0001)};
\addplot[color=black,fill=black!80,bar shift = 0.1cm,]
coordinates {(1,0.0095)(2,0.0050)(3,0.0028)(4,0.0019)(5,0.0016)};
\addlegendentry{$\CCT$}
\addlegendentry{Scan}

%%%%%%%%%%%%%%%%%%%%%%%%%%%%
% vary dimension size
%%%%%%%%%%%%%%%%%%%%%%%%%%%%
\nextgroupplot [bar width = 0.15cm,
xlabel={Dimension $d$},
xticklabels={2,4,8,16,32},
ymin=0.0001, ymax=0.4,
ytick={0.0001,0.001,0.01,0.1,1},
yticklabels={,,},
]
\addplot[color=black,fill=black!10,bar shift = -0.1cm,]
coordinates {(1,0.0009)(2,0.0012)(3,0.0013)(4,0.0013)(5,0.0013)};
\addplot[color=black,fill=black!80,bar shift = 0.1cm,]
coordinates {(1,0.0095)(2,0.0098)(3,0.0086)(4,0.0089)(5,0.0083)};

%%%%%%%%%%%%%%%%%%%%%%%%%%%%
% real data sets
%%%%%%%%%%%%%%%%%%%%%%%%%%%%
\nextgroupplot [bar width = 0.15cm,
xticklabels={Hurdat2, Pen, Football, Geolife, Taxi},
xticklabel style={rotate = 50, anchor=north east,inner sep=0.1cm,}, 
ymin=0.0001, ymax=0.4,
ytick={0.0001,0.001,0.01,0.1,1},
yticklabels={,,},
]
\addplot[color=black,fill=black!10,bar shift = -0.1cm,]
coordinates {(1,0.0012)(2,0.0738)(3,0.0005)(4,0.0009)(5,0.0004)};
\addplot[color=black,fill=black!80,bar shift = 0.1cm,]
coordinates {(1,0.0108)(2,0.2302)(3,0.0022)(4,0.0038)(5,0.0011)};

\end{groupplot}

\end{tikzpicture}
\vspace{-.4cm}
\caption{Performance of the Relaxed $\CCT$ Prune stage vs. Improved Linear Scan for exact $\NN$-queries on synthetic (first $3$ columns) and real (last column) data (c.f. Section~\ref{sssec:primary_results}).
	{\normalfont
		Bar charts show average metrics over $1000$ queries for $\CCT$s (light shade) and the improved Scan (dark shade).
		All metrics are normalized over the data set size $|\Scal|$.
		The rows denote number of node visits (top) during the pruning stage, and number of bound computations (bottom) in the reduce stage. }
}
\label{fig:NN_naive_compare}
\end{figure*}
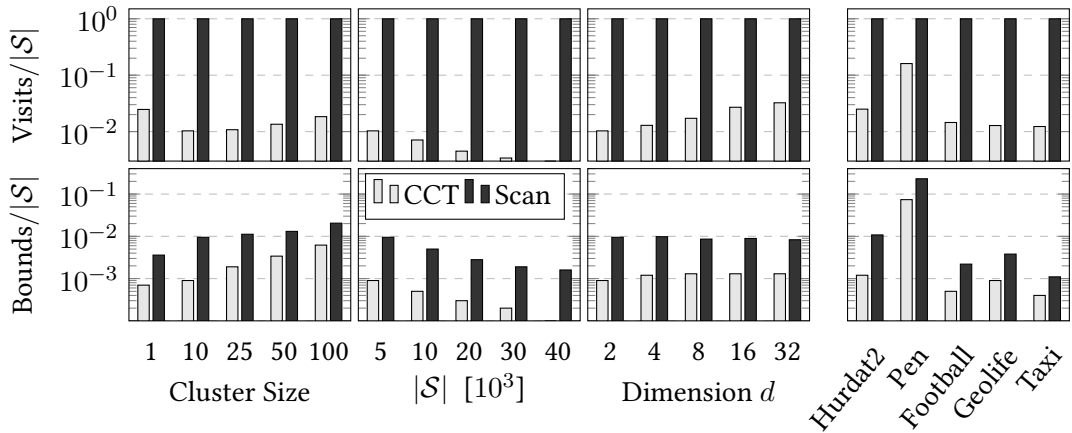

Figure~\ref{fig:NN_naive_compare} compares the Prune stage of our $\NN$ query to the improved $\NN$ linear scan.
Linear scan visits are defined as total trajectories scanned.
With exception of the Pen~\cite{pentip06} data set,
the number of $\CCT$ visits are factors between ten to over one hundred times smaller than the linear scan's,
and the number of $\CCT$ bound computations are ten times smaller than the linear scan's, especially for datasets with a large number of trajectories.
Even in higher dimensions (e.g. $d=32$), the $\CCT$ performs a factor of thirty fewer visits.

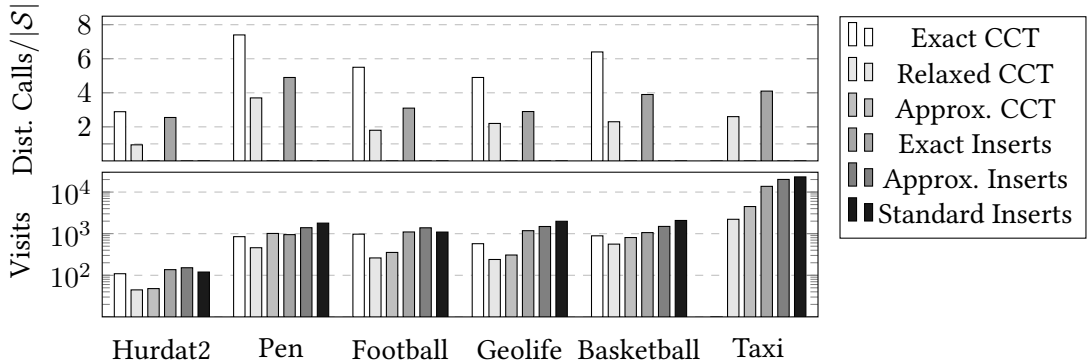
\begin{figure}[]
%%%%%%%%%%%%%%%%%%%
% CCT compare - batch construction and inserts
%%%%%%%%%%%%%%%%%%%
\begin{tikzpicture}

\begin{groupplot}[
     group style = {group size = 1 by 2,
                    horizontal sep=0.15cm,
                    vertical sep=0.15cm,},
     width=11.0cm,
     height=3.5cm,
     grid style=dashed,
     xmode = normal,
     xtick=data,
     xtick style={draw=none},
     xticklabels={,,},
     xmin=0.5, xmax=6.5,
     ymajorgrids=true,
    ]

%%%%%%%%%%%%%%%%%%%
% Frechet distance computations
%%%%%%%%%%%%%%%%%%%
\nextgroupplot[
xticklabels={,,},
ybar,
ymin=0, ymax=8.5,
ytick={1,2,4,6,8},
yticklabels={,$2$,$4$,$6$,$8$},
ylabel={Dist. Calls$/|\Scal|$},
ylabel shift = 0.24cm,
bar width = 0.15cm,
legend pos=outer north east,
]

\addplot[color=black,fill=black!0,]
coordinates {(1,2.89)(2,7.4)(3,5.5)(4,4.9)(5,6.4)(6,-1)};

\addplot[color=black,fill=black!10,]
coordinates {(1,0.944)(2,3.7)(3,1.8)(4,2.2)(5,2.3)(6,2.6)};

\addplot[color=black,fill=black!25,]
coordinates {(1,-1)(2,-1)(3,-1)(4,-1)(5,-1)(6,-1)};

\addplot[color=black,fill=black!35,]
coordinates {(1,2.55)(2,4.9)(3,3.1)(4,2.9)(5,3.9)(6,4.1)};

\addplot[color=black,fill=black!50,]
coordinates {(1,-1)(2,-1)(3,-1)(4,-1)(5,-1)(6,-1)};

\addplot[color=black,fill=black!90,]
coordinates {(1,-1)(2,-1)(3,-1)(4,-1)(5,-1)(6,-1)};

\legend{Exact $\CCT$}
\addlegendentry{Relaxed $\CCT$}
\addlegendentry{Approx. $\CCT$}
\addlegendentry{Exact Inserts}
\addlegendentry{Approx. Inserts}
\addlegendentry{Standard Inserts}

%%%%%%%%%%%%%%%%%%%
% Frechet decision proc
%%%%%%%%%%%%%%%%%%%
\nextgroupplot[
xticklabels={Hurdat2, Pen, Football, Geolife, Basketball, Taxi},
ybar,
ymin=10, ymax=30000,
ytick={10,100,1000,10000,100000},
yticklabels={,$10^2$,$10^3$,$10^4$},
ylabel={Visits},
% ylabel shift = -0.2cm,
ymode = log,
log ticks with fixed point,
log origin = infty,
bar width = 0.15cm,
]

\addplot[color=black,fill=black!0,]
coordinates {(1,109.1)(2,847.8)(3,973.8)(4,574.5)(5,888.0)(6,1)};

\addplot[color=black,fill=black!10,]
coordinates {(1,44.6)(2,459.4)(3,261.3)(4,239.8)(5,561.1)(6,2217.7)};

\addplot[color=black,fill=black!25,]
coordinates {(1,47.9)(2,1012.6)(3,354.2)(4,307.1)(5,807.5)(6,4487.6)};

\addplot[color=black,fill=black!35,]
coordinates {(1,136.6)(2,948.5)(3,1099.2)(4,1180.2)(5,1062.1)(6,13727.9)};

\addplot[color=black,fill=black!50,]
coordinates {(1,151.8)(2,1392.2)(3,1383.9)(4,1487.6)(5,1494.4)(6,20115.8)};

\addplot[color=black,fill=black!90,]
coordinates {(1,119.7)(2,1794.1)(3,1093.7)(4,1987.3)(5,2077.6)(6,23145.1)};

\end{groupplot}
\end{tikzpicture}
% \vspace{-.8cm}
\caption{Effectiveness of $\CCT$ constructions and index query performance on the six largest real data sets (c.f. Section~\ref{sssec:primary_results}). 
	{\normalfont
The top row shows $\delta_{F}$ calls of batch constructions and dynamic insertions, normalized over the data set size $|\Scal|$. 
The bottom row shows tree node visits of exact $\NN$ queries, averaged over 1000 queries.
For the Taxi~\cite{taxiA11,taxiB10} data set the Exact $\CCT$ batch construction did not finish within $3$ days and is omitted.
}
}
\label{fig:CCT_const_compare}
\end{figure}

Figure~\ref{fig:CCT_const_compare} results show that the number of $\delta_F$ calls for the six types of $\CCT$ constructions, and corresponding node visits for $\NN$ queries.
For $\CCT$ construction methods that perform $\delta_F$ calls, the Relaxed $\CCT$ performs the fewest, even sub-linear on Hurdat2, hence significantly fewer than $\mathcal{O}(|\Scal|^2)$.
Note that the Exact $\CCT$ batch construction for the Taxi data set did not complete in a reasonable time due to the quadratic nature of the algorithm. We attempted to speed-up the Exact $\CCT$ batch construction algorithm by quickly eliminating trajectories outside of a 'neighborhood', but this improvement became less effective as $|\Scal|$ grew. The Relaxed $\CCT$ does not have this issue, and also shows the best query performance.

The node visits for all $\CCT$ constructions correlate with the overlap quality measure (see Section~\ref{ssec:cct_analysis}, Figure~\ref{fig:CCT_quality}).
The Relaxed $\CCT$ performs the fewest $\NN$ node visits at query time.
Interestingly, the Approximate $\CCT$ has relatively good query performance, and can be useful in practice since its construction is faster than the Relaxed $\CCT$ since no $\delta_F$ calls are performed.
The insert algorithms typically result in more query node accesses compared with batch constructions.
The standard insert algorithm usually performs the worst at query time, especially if the data set has higher intrinsic dimensionality.

\subsubsection{Supplementary Results} \label{sssec:supplementary_results}

Figure~\ref{fig:kNN_query_short_narrow} shows the gain in effectiveness from approximate over exact $\kNN$  queries, with $k=5$ and $\Erel=0.5$, on our real-world data sets.
For the majority of the approximate queries, the number of $\delta_{F}$ and $\delta_{F\!D}$ calls are a factor of two or more smaller than those of exact queries.
For the Pen~\cite{pentip06} data set, the number of distance calls in an approximate query decreases by a factor of \emph{forty}, suggesting that small approximation factors can result in significant performance gains.

Our new and improved bounds in Section~\ref{sec:Bounds} result in better query performance, as shown in Figure~\ref{fig:bound_effectiveness}.
For example, without the bound enhancements (using only previously existing bounds), the $\RNN$ queries perform a factor of $4.7$ more $\delta_{F\!D}$ calls on average for the five largest $d=2$ real data sets.

Figure~\ref{fig:implicit_error} shows that implicit approximate queries return, on average, results with small $\Erel$ errors.
All real data sets show $\Erel < 0.5$ for $NN$ queries, and $\Erel < 1.8$ for $\kNN$ queries.
Lower intrinsic dimensionality correlates with smaller $\Erel$, and vice versa.

In Section~\ref{sssec:nn_algorithm} we state that our optimized $\NN$ algorithm can outperform the $\kNN$ when $k=1$, and results in Figure~\ref{fig:NN_vs_kNN} provide evidence for the claim.
For example, the $\NN$ query on the Basketball~\cite{NBA16} data set performs a factor of two fewer $\delta_{F}$ calls and a factor of ten fewer $\delta_{F\!D}$ calls.

\label{fig:implicit_error}

\begin{figure}[p]
%%%%%%%%%%%%%%%%%%%
% kNN Query
%%%%%%%%%%%%%%%%%%%
\begin{tikzpicture}

\begin{groupplot}[
     group style = {group size = 1 by 2,
                    horizontal sep=0.15cm,
                    vertical sep=0.15cm,},
     width=14.0cm,
     height=3.5cm,
     grid style=dashed,
     xmode = normal,
     xtick=data,
     xtick style={draw=none},
     xticklabels={,,},
     xmin=0, xmax=17,
     ymajorgrids=true,
    ]

%%%%%%%%%%%%%%%%%%%
% Frechet distance computations
%%%%%%%%%%%%%%%%%%%
\nextgroupplot[
legend pos=north west,
legend style={/tikz/every even column/.append style={column sep=0.1cm}},
legend columns = 2,
xticklabels={,,},
ybar stacked,
ymin=0, ymax=7.5,
ylabel={Nof. $\delta_F$},
ytick={2,4,6},
ylabel shift = 0.05cm,
]
\addplot[color=black,fill=black!10,]
coordinates {
	(9,1.873)(4,0.442)(5,0.000)(1,1.042)(7,3.561)(13,0.178)(16,0.508)(8,0.422)(2,0.871)(3,1.316)(6,0.988)(10,0.000)
	(15,0.775)(14,0.713)(11,0.130)(12,0.146)
};
\addplot[color=black,fill=black!70,]
coordinates {
	(9,2.701)(4,3.147)(5,0.319)(1,0.763)(7,1.981)(13,2.877)(16,4.294)(8,1.637)(2,1.171)(3,2.655)(6,3.285)(10,0.291)
	(15,4.826)(14,2.206)(11,1.402)(12,6.245)
};
\legend{$\Erel := 0.5$}
\addlegendentry{$\Erel := 0$}

%%%%%%%%%%%%%%%%%%%
% Frechet decision proc
%%%%%%%%%%%%%%%%%%%
\nextgroupplot[
xticklabels={Vessel-M, Pigeon, Seabird, Bus, Cats, Buffalo, Vessel-Y, Gulls, Truck, Bats, Hurdat2, Pen,Football, Geolife, Basketball, Taxi},
xticklabel style={rotate = 60, anchor=north east,inner sep=0pt,},
ybar stacked,
ymin=0.1, ymax=200,
ytick={1,10,100},
yticklabels={$10^0$,$10^1$,$10^2$},
ylabel={Nof. $\delta_{F\!D}$},
ylabel shift = -0.2cm,
ymode = log,
log ticks with fixed point,
log origin = infty,
]
\addplot[color=black,fill=black!10,]
coordinates {
	(1,3.169)(2,3.189)(3,8.143)(4,1.397)(5,0.001)(6,4.8541)(7,32.517)(8,1.964)(9,13.881)(10,0.001)(11,0.364)(12,1.109)(13,0.584)(14,5.706)(15,8.414)(16,4.075)
};

\addplot[color=black,fill=black!70,]
coordinates {
	(1,2.323)(2,3.466)(3,18.173)(4,24.459)(5,0.392)(6,30.260)(7,31.559)(8,7.251)(9,27.680)(10,0.567)(11,118.311)(12,4.011)(13,16.056)(14,18.986)(15,66.736)(16,60.707)
};

\end{groupplot}
\end{tikzpicture}
% \vspace{-.8cm}
\caption{Effectiveness of exact vs. approximate ($\varepsilon^*=0.5$) $\kNN$ queries on real data sets, for $k=5$ (c.f. Section~\ref{sssec:supplementary_results}). 
	{\normalfont
Bar charts show average absolute values over $1000$ queries for approximate (dark gray) and exact (light grey) queries, on Relaxed $\CCT$s. The rows denote the number of distance $\delta_{F}$ (top) and Fr\'echet decision procedure $\delta_{F\!D}$ (bottom) computations during the Decide stage.
}
}
\label{fig:kNN_query_short_narrow}
\end{figure}
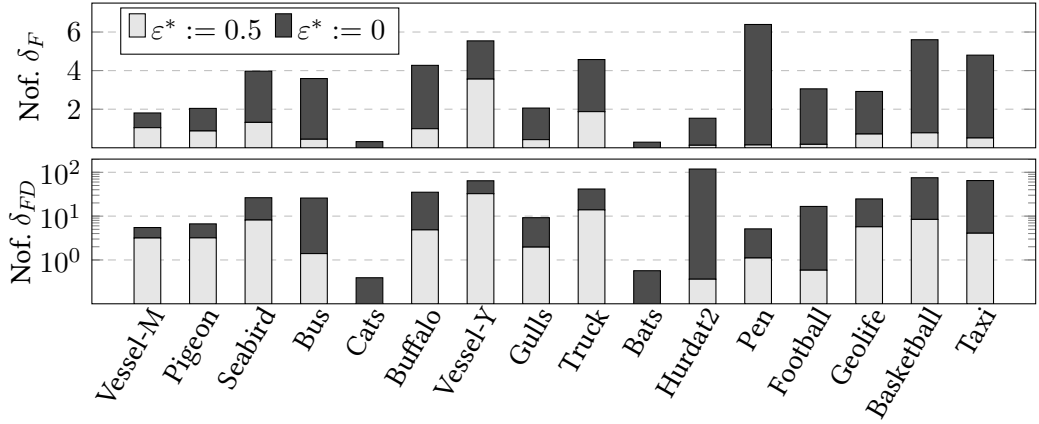

\begin{figure}[p]
%%%%%%%%%%%%%%%%%%%
% Bound Effectiveness
%%%%%%%%%%%%%%%%%%%
\begin{tikzpicture}

\begin{groupplot}[
     group style = {group size = 2 by 2,
                    horizontal sep=0.90cm,
                    vertical sep=0.15cm,},
     width=10.2cm,
     height=3.5cm,
     grid style=dashed,
     xmode = normal,
     xtick=data,
     xtick style={draw=none},
     xticklabels={,,},
     xmin=0.5, xmax=16.5,
     ymajorgrids=true,
    ]

%%%%%%%%%%%%%%%%%%%
% Number of visits
%%%%%%%%%%%%%%%%%%%
\nextgroupplot[
legend style={/tikz/every even column/.append style={column sep=0.1cm}},
legend columns = 2,
% legend pos=south west,
% legend style={at={(0.20,0.63)},anchor=north},
legend pos=south west,
xticklabels={,,},
ybar stacked,
ymin=0.006, ymax=0.4,
ylabel={Visits$/|\Scal|$},
xlabel={$\NN$ Query},
x label style={at={(axis description cs:0.5,1.25)},anchor=north},
ytick={0.001,0.01,0.1,1},
yticklabels={,$10^{-2}$,$10^{-1}$},
ylabel shift = 0.05cm,
ymode = log,
log ticks with fixed point,
log origin = infty,
]
\addplot[color=black,fill=black!10,]
coordinates {
	(9,0.1541)(4,0.2165)(5,0.1261)(1,0.1500)(7,0.2343)(13,0.0145)(16,0.0123)(8,0.0780)(2,0.1271)(3,0.2023)(6,0.1972)(10,0.0722)
	(15,0.0270)(14,0.0128)(11,0.0249)(12,0.1607)
};

\addplot[color=black,fill=black!70,]
coordinates {
	(9,0.0647)(4,0.0637)(5,0.0010)(1,0.0020)(7,0.0008)(13,0.0037)(16,0.0106)(8,0.0041)(2,0.0079)(3,0.0699)(6,0.0266)(10,0.0050)
	(15,0.0024)(14,0.0026)(11,0.0009)(12,0.0586)
};

\legend{include}
\addlegendentry{exclude}

\nextgroupplot[
width=4.4cm,
xmin=0.5, xmax=5.5,
xticklabels={,,},
xlabel={$\RNN$ Query},
x label style={at={(axis description cs:0.5,1.25)},anchor=north},
ybar stacked,
ymin=0.03, ymax=0.8,
ytick={0.01,0.1,1},
yticklabels={,$10^{-1}$,$10^{0}$},
ymode = log,
log ticks with fixed point,
log origin = infty,
]
\addplot[color=black,fill=black!10,]
coordinates {
	(3,0.0925)(5,0.0715)(4,0.0417)(1,0.1784)(2,0.4982)
};

\addplot[color=black,fill=black!70,]
coordinates {
	(3,0.0066)(5,0.0159)(4,0.0063)(1,0.0013)(2,0.0962)
};

%%%%%%%%%%%%%%%%%%%
% Distance Calls
%%%%%%%%%%%%%%%%%%%
\nextgroupplot[
xticklabels={Vessel-M, Pigeon, Seabird, Bus, Cats, Buffalo, Vessel-Y, Gulls, Truck, Bats, Hurdat2, Pen,Football, Geolife, Basketball, Taxi},
xticklabel style={rotate = 60, anchor=north east,inner sep=0pt,},
ybar stacked,
ymin=0.005, ymax=2,
ytick={0.001,0.01,0.1,1,10},
yticklabels={,$10^{-2}$,$10^{-1}$,$10^{0}$},
ylabel={Dist. Calls},
% ylabel shift = -0.2cm,
ymode = log,
log ticks with fixed point,
log origin = infty,
]
\addplot[color=black,fill=black!10,]
coordinates {
	(1,0.053)(2,0.053)(3,0.01)(4,0.032)(5,0.01)(6,0.011)(7,0.973)(8,0.133)(9,0.011)(10,0.01)(11,0.002)(12,0.006)(13,0.002)(14,0.343)(15,0.989)(16,0.016)
};

\addplot[color=black,fill=black!70,]
coordinates {
	(1,0.100)(2,0.301)(3,0.193)(4,0.194)(5,0.01)(6,0.040)(7,0.253)(8,0.282)(9,0.176)(10,0.002)(11,0.021)(12,0.306)(13,0.029)(14,0.270)(15,0.540)(16,0.066)
};

\nextgroupplot[
width=4.4cm,
xmin=0.5, xmax=5.5,
xticklabels={Hurdat2, Pen, Football, Geolife, Taxi},
xticklabel style={rotate = 60, anchor=north east,inner sep=0pt,},
ybar stacked,
ymin=1, ymax=700,
ytick={1,10,100,1000},
yticklabels={,$10^{1}$,$10^{2}$},
ymode = log,
log ticks with fixed point,
log origin = infty,
]
\addplot[color=black,fill=black!10,]
coordinates {
	(1,1.9)(2,99.9)(3,30.4)(4,26.5)(5,110.0)
};

\addplot[color=black,fill=black!70,]
coordinates {
	(1,7.9)(2,487.1)(3,74.4)(4,67.7)(5,487.0)
};

\end{groupplot}
\end{tikzpicture}
\vspace{-.4cm}
\caption{Effectiveness of including/excluding proposed bound enhancements (Section~\ref{sec:Bounds}), on real data set Relaxed $\CCT$s, averaged over 1000 queries (c.f. Section~\ref{sssec:supplementary_results}). 
	{\normalfont
The left side shows exact $\NN$ queries, and the right side shows exact $\RNN$ queries chosen to return exactly 100 results.
The top row shows average number of tree node visits (normalized to a factor of $|\Scal|$). The bottom row shows the \emph{absolute} $\delta_{F}$ and $\delta_{F\!D}$ calls for $\NN$ and $\RNN$ queries, respectively.
}
}
\label{fig:bound_effectiveness}
\end{figure}
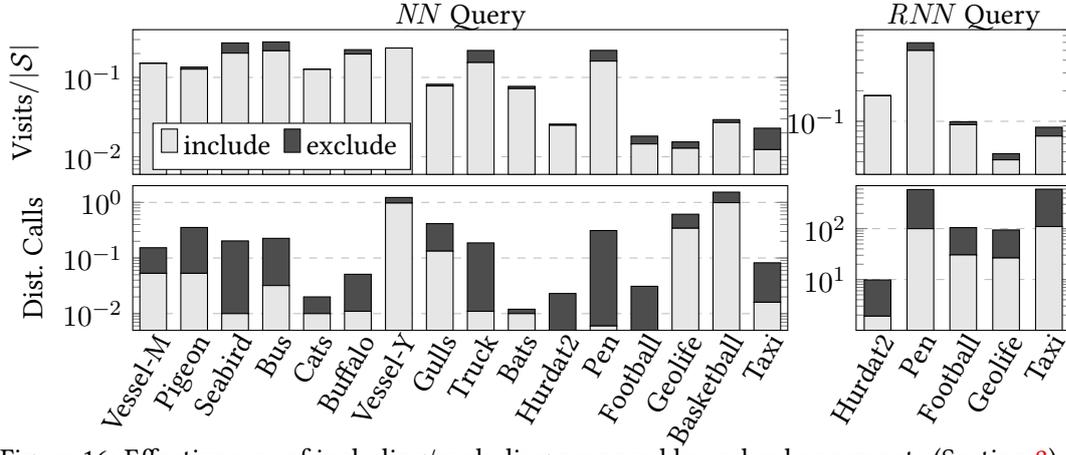

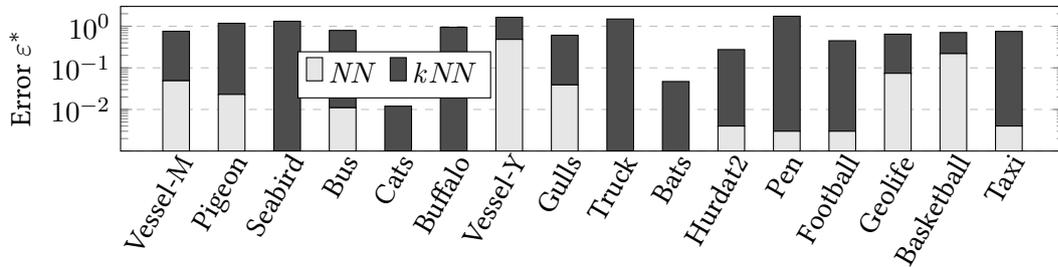
\begin{figure}[p]
%%%%%%%%%%%%%%%%%%%
% Implicit Error
%%%%%%%%%%%%%%%%%%%
\begin{tikzpicture}

\begin{groupplot}[
     group style = {group size = 1 by 1,
                    horizontal sep=0.15cm,
                    vertical sep=0.15cm,},
     width=14.0cm,
     height=3.5cm,
     grid style=dashed,
     xmode = normal,
     xtick=data,
     xtick style={draw=none},
     xticklabels={,,},
     xmin=0, xmax=17,
     ymajorgrids=true,
    ]

%%%%%%%%%%%%%%%%%%%
% NN and kNN implicit multiplicative error
%%%%%%%%%%%%%%%%%%%
\nextgroupplot[
legend style={at={(0.29,0.69)},anchor=north},
legend style={/tikz/every even column/.append style={column sep=0.1cm}},
legend columns = 2,
xticklabels={Vessel-M, Pigeon, Seabird, Bus, Cats, Buffalo, Vessel-Y, Gulls, Truck, Bats, Hurdat2, Pen,Football, Geolife, Basketball, Taxi},
xticklabel style={rotate = 60, anchor=north east,inner sep=0pt,},
ybar stacked,
ymin=0.001, ymax=3,
ytick={0.001,0.01,0.1,1,10},
yticklabels={,$10^{-2}$,$10^{-1}$,$10^{0}$},
ylabel={Error $\Erel$},
% ylabel shift = -0.2cm,
ymode = log,
log ticks with fixed point,
log origin = infty,
]
\addplot[color=black,fill=black!10,]
coordinates {
	(1,0.049)(2,0.023)(3,0.001)(4,0.011)(5,0.001)(6,0.001)(7,0.484)(8,0.039)(9,0.001)(10,0.001)(11,0.004)(12,0.003)(13,0.003)(14,0.074)(15,0.220)(16,0.004)
};

\addplot[color=black,fill=black!70,]
coordinates {
	(1,0.708)(2,1.153)(3,1.318)(4,0.783)(5,0.011)(6,0.944)(7,1.160)(8,0.568)(9,1.492)(10,0.046)(11,0.271)(12,1.739)(13,0.446)(14,0.571)(15,0.487)(16,0.748)
};

\legend{$\NN$}
\addlegendentry{$\kNN$}

\end{groupplot}
\end{tikzpicture}
\vspace{-.4cm}
\caption{Implicit approximate query multiplicative errors on real data sets (c.f. Section~\ref{sssec:supplementary_results}). 
	{\normalfont
Bar chart shows average worst-case $\Erel$ values over $1000$ queries for $\NN$ (light shade) and $\kNN$ $k=5$ (dark shade) queries, on Frugal $\CCT$s.
}
}
\label{fig:implicit_error}
\end{figure}

\begin{figure}[p]
%%%%%%%%%%%%%%%%%%%
% kNN Query
%%%%%%%%%%%%%%%%%%%
\begin{tikzpicture}

\begin{groupplot}[
     group style = {group size = 1 by 2,
                    horizontal sep=0.15cm,
                    vertical sep=0.15cm,},
     width=14.0cm,
     height=3.5cm,
     grid style=dashed,
     xmode = normal,
     xtick=data,
     xtick style={draw=none},
     xticklabels={,,},
     xmin=0, xmax=17,
     ymajorgrids=true,
    ]

%%%%%%%%%%%%%%%%%%%
% Frechet distance computations
%%%%%%%%%%%%%%%%%%%
\nextgroupplot[
xticklabels={,,},
ybar stacked,
ylabel={Nof. $\delta_F$},
ymin=0.001, ymax=4,
ytick={0.001,0.01,0.1,1,10},
yticklabels={,$10^{-2}$,$10^{-1}$,$10^{0}$},
% ylabel shift = 0.05cm,
ymode = log,
log ticks with fixed point,
log origin = infty,
]
\addplot[color=black,fill=black!10,]
coordinates {
	(9,0.011)(4,0.032)(5,0.0001)(1,0.053)(7,0.973)(13,0.002)(16,0.016)(8,0.133)(2,0.053)(3,0.0001)(6,0.011)(10,0.0001)
	(15,0.989)(14,0.343)(11,0.002)(12,0.006)
};

\addplot[color=black,fill=black!70,]
coordinates {
	(9,0.0001)(4,0.008)(5,0.0001)(1,0.213)(7,1.057)(13,0.004)(16,0.004)(8,0.031)(2,0.037)(3,0.0001)(6,0.0001)(10,0.0001)
	(15,0.605)(14,0.215)(11,0.015)(12,0.004)
};

%%%%%%%%%%%%%%%%%%%
% Frechet decision proc
%%%%%%%%%%%%%%%%%%%
\nextgroupplot[
legend style={/tikz/every even column/.append style={column sep=0.1cm}},
legend columns = 2,
legend pos=north west,
xticklabels={Vessel-M, Pigeon, Seabird, Bus, Cats, Buffalo, Vessel-Y, Gulls, Truck, Bats, Hurdat2, Pen,Football, Geolife, Basketball, Taxi},
xticklabel style={rotate = 60, anchor=north east,inner sep=0pt,},
ybar stacked,
ymin=0.001, ymax=20,
ytick={0.001,0.01,0.1,1,10,100},
yticklabels={,$10^{-2}$,$10^{-1}$,$10^{0}$,$10^{1}$},
ylabel={Nof. $\delta_{F\!D}$},
% ylabel shift = -0.2cm,
ymode = log,
log ticks with fixed point,
log origin = infty,
]
\addplot[color=black,fill=black!10,]
coordinates {
	(1,0.026)(2,0.023)(3,0.0001)(4,0.017)(5,0.0001)(6,0.004)(7,0.696)(8,0.059)(9,0.004)(10,0.0001)(11,0.0001)(12,0.003)(13,0.0001)(14,0.239)(15,0.762)(16,0.034)
};

\addplot[color=black,fill=black!70,]
coordinates {
	(1,0.394)(2,0.072)(3,0.0001)(4,0.021)(5,0.0001)(6,0.0001)(7,9.875)(8,0.093)(9,0.002)(10,0.0001)(11,0.020)(12,0.008)(13,0.005)(14,2.025)(15,6.826)(16,0.004)
};

\legend{$\NN$}
\addlegendentry{$\kNN$, $k=1$}

\end{groupplot}
\end{tikzpicture}
\vspace{-.4cm}
\caption{Effectiveness of exact $\NN$ (light shade) vs. $\kNN$ $k=1$ (dark shade) queries on real data sets (c.f. Section~\ref{sssec:supplementary_results}). 
	{\normalfont
Bar charts show average absolute values over $1000$ queries, on Relaxed $\CCT$s. The rows denote the number of distance $\delta_{F}$ (top) and Fr\'echet decision procedure $\delta_{F\!D}$ (bottom) computations during the Decide stage.
}
}
\label{fig:NN_vs_kNN}
\end{figure}
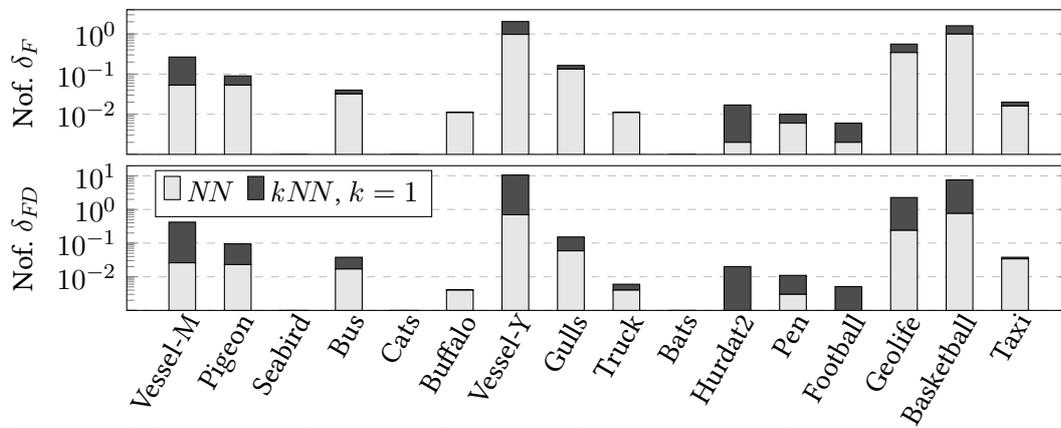

% %%%%%%%%%%%%%%%%%%%%%%%%%%%%%%%%%%%%%%%%%%%%%%%%%%%%%%%
% %           Conclusion
% %%%%%%%%%%%%%%%%%%%%%%%%%%%%%%%%%%%%%%%%%%%%%%%%%%%%%%%
% \clearpage
\section{Directions for Future Work} \label{sec:conclusions}
Our experiments show that even slightly larger cluster radii can negatively impact metric pruning efficiency.
We are therefore interested in other practical batch construction variants using Gonzalez' algorithm~\cite{gonzalez85}, or more recent techniques such as CLIQUE~\cite{agrawal2005}, SUBCLU~\cite{kailing2004}, genetic algorithm clustering~\cite{auffarth2010}, mutual information hierarchical clustering~\cite{kraskov2005}, or belief propagation clustering~\cite{frey2007}.

The proposed `Fix-Ancestor-Radius' primitive, which enables dynamic insertions, also allows to rectify radii that are affected from trajectory deletions in $\CCT$s.
We are interested in experiments on $\CCT$ quality and query performance in the fully-dynamic setting including identifying index sub-trees that benefit from a rebuild.
It is also worthwhile exploring changes required to implement $\CCT$ algorithms on multi-way trees such as the M-tree~\cite{ciaccia1997}, due to it's practical disk-based properties.
It may also be interesting to extend this work to other trajectory distance metrics such as the Hausdorff~\cite{alt2009}, discrete Fréchet~\cite{bri16}, and Wasserstein~\cite{vaserstein1969} distances, depending on application-specific requirements.

The $\kNN$ query algorithm analysis and experiment results show that the decide stage can perform $\mathcal{O}(|\Sstarcal|)$ Fréchet decision procedure computations.
Techniques, such as heuristic-guided pivot selection, may further reduce the number of $\delta_{\fd}$ calls.

Finally, our future work seeks to investigate changes required to support proximity searches on sub-trajectories~\cite{ber13}.
Algorithm modifications would need to balance cluster tree construction time, space consumption, and query time.

\clearpage
\bibliographystyle{abbrv}
\balance
\bibliography{references} 

%\clearpage
\appendix
\clearpage

\section{Construction and Query Runtime} \label{sec:const_query_runtime}

The main focus of this work was to measure the number of distance computations and query I/O, per~\cite{hetland-09} which underscores that reducing these two measures (especially the first) should dominate algorithm design and experimentation analysis. However, it can also be useful to measure algorithm construction and query runtimes so that one can get a 'ballpark' estimate of how much time is spent. It can also be interesting to see which characteristics impact runtimes and what the trends are.

To this end, Figure~\ref{fig:query_latency} shows Relaxed $\CCT$ construction and exact query runtimes using synthetic data sets. An increase in cluster size, $n$, $|\Scal|$, and $d$ result in increased runtimes. This is expected since increases in these characteristics can result in more $\delta_{F}$ calls and node visits, and increases in $n$ can lead to longer runtimes when computing $\delta_{F}$ and linear bounds.

It is noteworthy that for a given algorithm time complexity, experiment runtimes can vary depending on the underlying hardware and use of software engineering techniques. Indeed, factor speedups can be achieved using approaches such as reducing memory consumption and access, parallelization, caching, using inline functions, multi-threading, or avoiding square root operations. Furthermore, in our setting runtimes are dependent on the choice of distance measure and its implementation details. For example, in this study we used a cubic complexity algorithm that computes $\delta_{F}$ exactly (other approaches such as a divide and conquer search can improve the $\delta_{F}$ time complexity at the expense of precision). For this work, runtimes were not part of core results and so we did not spend effort to improve this measure.

Our experiments were performed on a desktop computer with a $3.60$GHz Intel Core i7-7700 CPU, $32$GB RAM, running on a Matlab R2018b implementation over a Windows $10$ $64$-bit OS. If better runtimes are a paramount consideration, then a C++ implementation employing similar engineering techniques may significantly improve runtimes.

\begin{figure}[H] 
\vspace{-0.05cm}
%%%%%%%%%%%%%%%%%%%
% Query latency
%%%%%%%%%%%%%%%%%%%

\begin{tikzpicture}

\begin{groupplot}[
     group style = {group size = 5 by 2,
                    horizontal sep=0.15cm,
                    vertical sep=0.15cm,},
     height=3.4cm,
     width=4.3cm,
     grid style=dashed,
     xmode = normal,
     xtick=data,
     xtick style={draw=none},
     xticklabels={,,},
     xmin=0.5, xmax=5.5,
     ybar stacked,
     ymin=8, ymax=500,
     ytick={1,10,100,1000},
     yticklabels={,,},
     ymajorgrids=true,
     ymode = log,
     log ticks with fixed point,
     log origin = infty,
    ]
    
%%%%%%%%%%%%%%%%%%%%%%%%%%%%%%%%%%%%%%%%%%%%%%%%%%%%%%%
% Construction - ms
%%%%%%%%%%%%%%%%%%%%%%%%%%%%%%%%%%%%%%%%%%%%%%%%%%%%%%%   

%%%%%%%%%%%%%%%%%%%
% vary cluster size
%%%%%%%%%%%%%%%%%%%
\nextgroupplot   [
ybar,
ylabel={Constr. [ms]},
yticklabels={,$10^{1}$,$10^{2}$},
]

\addplot[color=black,fill=black!90,]
coordinates {(1,12.7)(2,9.8)(3,14.3)(4,15.8)(5,16.3)};

%%%%%%%%%%%%%%%%%%%%%%%%%%
% vary trajectory size n
%%%%%%%%%%%%%%%%%%%%%%%%%%
\nextgroupplot [
ybar,
]

\addplot[color=black,fill=black!90,]
coordinates {(1,12.7)(2,22.3)(3,43.3)(4,69.3)(5,112.1)};

%%%%%%%%%%%%%%%%%%%%%%%%%%%%
% vary |S| 5000 to 40000
%%%%%%%%%%%%%%%%%%%%%%%%%%%%
\nextgroupplot   [
ybar,
]

\addplot[color=black,fill=black!90,]
coordinates {(1,12.7)(2,13.5)(3,14.0)(4,14.1)(5,14.8)};

%%%%%%%%%%%%%%%%%%%%%%%%%%%%
% vary |S| 1M and 10M
%%%%%%%%%%%%%%%%%%%%%%%%%%%%
\nextgroupplot   [
ybar,
width=2.7cm,
xmin=0.5, xmax=2.5,
]

\addplot[color=black,fill=black!90,]
coordinates {(1,14.0)(2,15.8)};

%%%%%%%%%%%%%%%%%%%%%%%%%%%%
% vary dimension d size
%%%%%%%%%%%%%%%%%%%%%%%%%%%%
\nextgroupplot  [
ybar,
]

\addplot[color=black,fill=black!90,]
coordinates {(1,12.7)(2,10.7)(3,10.8)(4,10.3)(5,12.2)};

%%%%%%%%%%%%%%%%%%%%%%%%%%%%%%%%%%%%%%%%%%%%%%%%%%%%%%%
% Query Latency - ms
%%%%%%%%%%%%%%%%%%%%%%%%%%%%%%%%%%%%%%%%%%%%%%%%%%%%%%%   

%%%%%%%%%%%%%%%%%%%
% vary cluster size
%%%%%%%%%%%%%%%%%%%
\nextgroupplot   [
ybar stacked,
ylabel={Query [ms]},
yticklabels={,$10^{1}$,$10^{2}$},
xlabel={Cluster Size},
xticklabels={1,10,25,50,100},
]

\addplot[color=black,fill=black!10,]
coordinates {(1,18.8)(2,12.5)(3,16.3)(4,21.4)(5,28.6)};

\addplot[color=black,fill=black!70,]
coordinates {(1,9.4)(2,3.8)(3,10.1)(4,18.6)(5,29.6)};

%%%%%%%%%%%%%%%%%%%%%%%%%%
% vary trajectory size n
%%%%%%%%%%%%%%%%%%%%%%%%%%
\nextgroupplot [
xlabel={Trajectory Size $n$},
xticklabels={15,25,35,45,55},
]

\addplot[color=black,fill=black!10,]
coordinates {(1,12.5)(2,18.7)(3,25.3)(4,41.1)(5,67.5)};

\addplot[color=black,fill=black!70,]
coordinates {(1,3.8)(2,4.0)(3,9.9)(4,11.3)(5,15.7)};

%%%%%%%%%%%%%%%%%%%%%%%%%%%%
% vary |S| 5000 to 40000
%%%%%%%%%%%%%%%%%%%%%%%%%%%%
\nextgroupplot   [
xlabel={$|\Scal|$~~[$10^3$]},
xticklabels={5,10,20,30,40},
]

\addplot[color=black,fill=black!10,]
coordinates {(1,12.5)(2,15.9)(3,18.0)(4,20.0)(5,22.1)};

\addplot[color=black,fill=black!70,]
coordinates {(1,3.8)(2,2.9)(3,4.4)(4,4.4)(5,4.5)};

%%%%%%%%%%%%%%%%%%%%%%%%%%%%
% vary |S| 1M and 10M
%%%%%%%%%%%%%%%%%%%%%%%%%%%%
\nextgroupplot   [
width=2.7cm,
xmin=0.5, xmax=2.5,
xlabel={$|\Scal|$~[$10^6$]},
xticklabels={1,10},
xlabel shift = -0.1cm,
]

\addplot[color=black,fill=black!10,]
coordinates {(1,83.2)(2,263.5)};

\addplot[color=black,fill=black!70,]
coordinates {(1,19.7)(2,46.5)};

%%%%%%%%%%%%%%%%%%%%%%%%%%%%
% vary dimension d size
%%%%%%%%%%%%%%%%%%%%%%%%%%%%
\nextgroupplot  [
xlabel={Dimension $d$},
xticklabels={2,4,8,16,32},
legend style={/tikz/every even column/.append style={column sep=0.1cm}},
legend columns = 2,
legend pos=north west,
]

\addplot[color=black,fill=black!10,]
coordinates {(1,12.5)(2,12.1)(3,14.8)(4,20.8)(5,24.1)};

\addplot[color=black,fill=black!70,]
coordinates {(1,3.8)(2,2.6)(3,3.7)(4,5.4)(5,3.7)};

\legend{$\NN$}
\addlegendentry{$\kNN$}

\end{groupplot}
\end{tikzpicture}
\vspace{-.4cm}
\caption{Construction and query runtimes (milliseconds) on synthetic data set Relaxed $\CCT$s. The top shows average construction runtime per trajectory. The bottom shows query latency (end-to-end query runtime) of exact $\NN$ (light shade) and $\kNN$ $k=5$ (dark shade) queries, averaged over 1000 queries.
}
\label{fig:query_latency}
\end{figure}
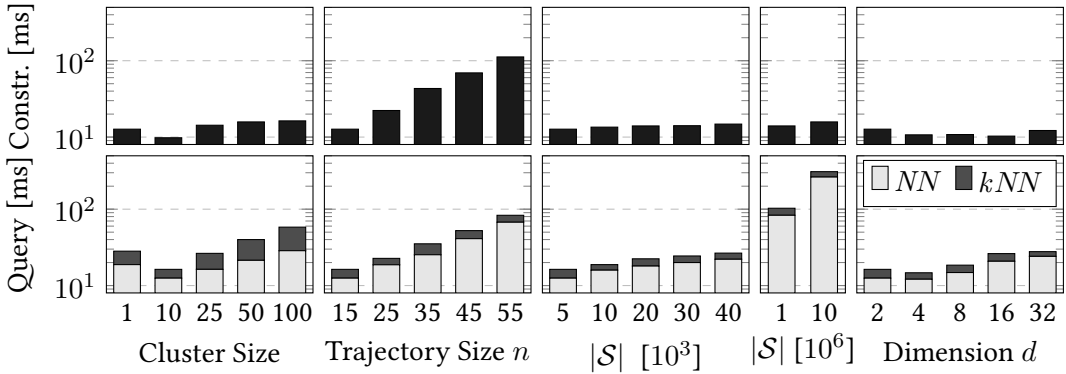

\end{document}